\documentclass[11pt]{amsart} 
\usepackage[margin= 1 in]{geometry}
\usepackage[usenames,dvipsnames]{xcolor}
\usepackage[hidelinks]{hyperref}
\hypersetup{colorlinks = true,
linkcolor = red, anchorcolor =red,
citecolor = ForestGreen,
filecolor = red,urlcolor = red,
            pdfauthor=author}
\usepackage[capitalize]{cleveref}

\usepackage{mathtools}

\usepackage{amsmath,amssymb,bm}
\usepackage[alphabetic]{amsrefs}
\usepackage{indentfirst}
\usepackage{mathrsfs}
\usepackage{graphicx}
\usepackage{amsthm}




\usepackage[T1]{fontenc}
\usepackage{newpxtext,newpxmath}
\usepackage{anyfontsize} 

\makeatletter 
\@mparswitchfalse%
\makeatother
\normalmarginpar 

\usepackage{todonotes}

\newtheorem{theorem}{Theorem}[section]
\newtheorem{lemma}[theorem]{Lemma}
\newtheorem{proposition}[theorem]{Proposition}
\newtheorem{corollary}[theorem]{Corollary}

\numberwithin{equation}{section}
\newtheorem{example}{Example}[section]

\newtheorem{assumption}{Assumption}

\theoremstyle{remark}
\newtheorem{remark}{Remark}

\theoremstyle{definition}
\newtheorem{definition}[theorem]{Definition}

\DeclareMathOperator{\tr}{tr}

\DeclareMathOperator{\ran}{Ran}

\newcommand{\ud}{\,\mathrm{d}}
\newcommand{\rd}{\mathrm{d}}

\newcommand{\R}{\mathbb{R}}
\newcommand{\C}{\mathbb{C}}
\newcommand{\E}{\mathbb{E}}
\newcommand{\hh}{\mathbb{H}}

\newcommand{\Or}{\mathcal{O}}

\newcommand{\dd}{\cdot}

\newcommand{\na}{\nabla}

\newcommand{\maf}[1]{\mathfrak{#1}}

\IfFileExists{mathabx.sty}%
  {\DeclareFontFamily{U}{mathx}{\hyphenchar\font45}%
   \DeclareFontShape{U}{mathx}{m}{n}{<->mathx10}{}%
   \DeclareSymbolFont{mathx}{U}{mathx}{m}{n}%
   \DeclareFontSubstitution{U}{mathx}{m}{n}%
   \DeclareMathAccent{\widebar}{0}{mathx}{"73}%
}{%
  \PackageWarning{mathabx}{%
    Package mathabx not available, therefore\MessageBreak substituting
    widebar with overline\MessageBreak }%
  \newcommand{\widebar}[1]{\overline{#1}}%
}
\newcommand{\wb}[1]{\widebar{#1}}

\newcommand{\mc}[1]{\mathcal{#1}}

\newcommand{\mf}[1]{\mathsf{#1}}

\newcommand{\eps}{\epsilon}
\newcommand{\lad}{\lambda}
\newcommand{\si}{\sigma}

\newcommand{\p}{\partial}
\def \h {\widehat} 
\def \ep {\varepsilon}
\def \ww {\omega}
\def \l {\langle} 
\def \r {\rangle}
\def \d {\delta}


\def \bh {\mc{B}(\mc{H})}

\def \dhh {\mc{D}_+(\mc{H})}

\def \ka {\kappa}

\def \mi {{\bf 1}}

\newcommand{\norm}[1]{\lVert#1\rVert}
\newcommand{\Norm}[1]{\left\lVert#1\right\rVert}

\newcommand{\bra}[1]{\langle#1\rvert}

\newcommand{\ket}[1]{\lvert#1\rangle}

\renewcommand{\Re}{\mathfrak{Re}}

\def \q {\quad}
\def \w {\widetilde}
\def \mm {\left[\begin{matrix}}
\def \nn {\end{matrix}\right]}

\title{Quantum space-time Poincar\'e inequality for Lindblad dynamics}

\author{Bowen Li} %
\address[B. Li]{Department of Mathematics,  City University of Hong Kong, Kowloon Tong, Hong Kong SAR}
\email{bowen.li@cityu.edu.hk}

\author{Jianfeng Lu}
\address[J. Lu]{Departments of Mathematics, Physics, and Chemistry, Duke University, Durham, NC 27708.}
\email{jianfeng@math.duke.edu}

\begin{document}

\begin{abstract}
    We investigate the mixing properties of primitive Markovian Lindblad dynamics (i.e., quantum Markov semigroups), where the detailed balance is disrupted by a coherent drift term. It is known that the sharp $L^2$-exponential convergence rate of Lindblad dynamics is determined by the spectral gap of the generator. 
    We show that incorporating a Hamiltonian component into a detailed balanced Lindbladian can generically enhance its spectral gap, thereby accelerating the mixing.  In addition, we analyze the asymptotic behavior of the spectral gap for Lindblad dynamics with a large coherent contribution. 
    However, estimating the spectral gap, particularly for a non-detailed
balanced Lindbladian, presents a significant challenge. In the case of hypocoercive Lindblad dynamics, we extend the variational framework originally developed for underdamped Langevin dynamics to derive fully explicit and constructive exponential decay estimates for convergence in the noncommutative $L^2$-norm. This analysis relies on establishing a quantum analog of space-time Poincar\'{e} inequality. Furthermore, we provide several examples with connections to quantum noise and quantum Gibbs samplers as applications of our theoretical results.
\end{abstract}

\maketitle

\section{Introduction}

Quantum systems inherently interact with their environments, leading to dissipative processes and decoherence. In the regime of weak system-environment coupling, these open quantum dynamics become Markovian and can be modeled through the Lindblad master equation, also known as quantum Markov semigroups (QMS) \cite{breuer2002theory}. The Lindblad dynamics share a structural similarity with classical underdamped Langevin equations, involving Hamiltonian evolution alongside the fluctuation-dissipation terms that account for the environmental interactions (see \cref{rem:class1}).

The Markovian assumption simplifies the mathematical treatment of open quantum dynamics and facilitates efficient numerical simulations. Consequently, the Lindblad equation is a widely applicable tool in various fields, including condensed matter physics \cites{zhang2024driven,barthel2022superoperator}, quantum information theory \cites{kastoryano2013quantum,temme2010chi}, and quantum computation \cites{verstraete2009quantum, kastoryano2011dissipative}. With the rapid advancement of quantum computers and quantum algorithms in recent years, there has been a growing interest in applying Lindblad dynamics for quantum Gibbs sampling \cites{chen2023quantum,chen2023efficient,ding2024efficient}, ground state preparation \cite{ding2023single}, and solving nonconvex optimization problems \cite{li2023quantum}. To analyze the efficiency of these Lindblad dynamics-based quantum algorithms, it is crucial to understand the mixing properties of the underlying dynamics, which is also of fundamental mathematical importance. 

The motivation of this work is two-fold. On the one hand, for large quantum unitary dynamics driven by a many-body Hamiltonian or quantum gates, the quantum decoherence (noise) arising from interaction with the environment is generally local, leading to a degenerate dissipative part of the open quantum dynamics, that is, dissipation acting only on a subspace over infinitesimal time. 
This motivates the study of the mixing of hypocoercive Lindblad dynamics (see \cref{sec:relaxation}). On the other hand, it was observed in \cite{ding2023single} that adding a coherent term to the Lindblad dynamics, which breaks the quantum detailed balance, can often accelerate the convergence towards equilibrium (a well-observed phenomenon in class sampling; see \cref{sec:related}). 

In this work, we shall provide an explicit $L^2$-decay rate estimate for the primitive hypocoercive Lindblad dynamics, motivated by the variational methods \cites{albritton2019variational, cao2023explicit} recently developed for kinetic Fokker-Planck equations. Additionally, we will provide a qualitative justification for the acceleration effect of the coherent component by analyzing the spectral gap of the dynamics.

\subsection{Main results and discussions} As reviewed in \cref{sec:related} below, significant progress has been made in understanding classical irreversible dynamics and hypocoercivity theory. However, most results concerning the convergence of quantum Markov processes assume detailed balance conditions (DBC). Results for quantum irreversible dynamics are quite limited.
Temme et al. \cite{temme2010chi} studied the convergence of the quantum channel with or without DBC by singular value analysis. Laracuente \cite{laracuente2022self} considered general open quantum dynamics involving both coherent and dissipative parts and proved the short-time polynomial decay and that the overall long-time exponential decay rate of the quantum relative entropy can be upper bounded by the inverse square root of the MLSI constant of the dissipative part. More recently, Fang et al. \cite{fang2024mixing} extended the classical DMS method \cite{dolbeault2015hypocoercivity} to analyze hypocoercive Lindblad dynamics with broken detailed balance (see the end of \cref{subsec:scaling} for a detailed review of \cite{fang2024mixing}).

In this work, we shall continue the investigation of the mixing property of hypocoercive Lindblad dynamics. We consider a general Lindbladian of the form:
\begin{align*}
    \mc{L}_\alpha = \alpha \mc{L}^H + \mc{L}^D\,,
\end{align*}
for a coupling parameter $\alpha > 0$, where $\mc{L}^H = i [H, \dd]$ is the coherent part and $\mc{L}^D$ is a Lindbladian satisfying $\si$-detailed balance. We assume that $\mc{L}_\alpha$ is primitive with $\si$ being a unique invariant state, which is of particular interest for the state preparation task. The related definitions and preliminaries for Lindblad dynamics are given in \cref{sec:strucQMS}. We connect the spectral gap and the singular value gap of $\mc{L}_\alpha$ with its $L^2$-relaxation time in \cref{sec:relaxation}. In particular, we distinguish between the coercive and hypocoercive  Lindblad dynamics by the primitivity of $\mc{L}^D$: ${\rm dim}\ker(\mc{L}^D) = 1$ (coercivity); ${\rm dim}\ker(\mc{L}^D) > 1$ (hypocoercivity); see \cref{def:coercive} and \cref{rem:coercive}.

It is well known that the spectral gap provides the sharp $L^2$-convergence rate (see \cref{sec:relaxation}). Inspired by \cites{hwang2005accelerating,franke2010behavior},
we characterize the limiting behavior of the spectral gap $\lad(\mc{L}_\alpha)$ as $\alpha \to \infty$ (\cref{thm:limitingspectral}). We also show that under some mild assumption, adding a coherent term can strictly increase the spectral gap and hence accelerate the mixing of the Lindblad dynamics (\cref{rem:accellind}). We emphasize that the results in \cref{sec:limitbeha} do not assume hypocoercive dynamics.
A very interesting but challenging question here is to construct the \emph{optimal Hamiltonian} in the sense that the spectral gap is maximized, i.e., find the optimal value and the optimizer for $\sup \{\lad(\mc{L}_\alpha)\,; [H, \si] = 0\}$ for a fixed $\alpha$.   

From \cref{sec:relaxation}, the convergence of coercive reversible or irreversible Lindblad dynamics is well understood through Poincaré or modified log-Sobolev inequalities (see \cref{rem:coercive}). Our primary focus in \cref{sec:spacetime} is the hypocoercive case, which is relatively independent of \cref{sec:limitbeha}. 
In \cref{sec:spacetime}, we extend the variational framework \cites{albritton2019variational,cao2023explicit,brigati2023construct} to explicitly quantify the exponential decay of the primitive hypocoercive (irreversible) Lindblad dynamics (\cref{thm:ratest}). This framework relies on the time-averaged $L^2$-norm, which is a natural Lyapunov functional with purely exponential decay \eqref{eq:averagedecay}. 
The key step for the dissipation of this time-averaged functional is establishing a quantum analog of space-time Poincar\'{e} inequality associated with $\mc{L}_\alpha$ for time-dependent operators (\cref{thm:tspoincare}).
It reveals that for hypocoercive dynamics, one should consider a time interval rather than a single time slice to allow dissipation to propagate from the subspace to the whole space (\cref{rem:time-avg}). 
The proof of such a space-time Poincar\'{e} inequality relies on a technical abstract divergence lemma (\cref{lem:diverest}) that constructs necessary test operators with a prior estimate. We emphasize that the convergence rate estimate in \cref{thm:ratest} is explicit so that one can carefully choose the model parameters to optimize the rate, as explored in  \cref{subsec:scaling}. As applications of results in \cref{sec:spacetime}, some examples of Davies generators with detailed balance broken by a coherent term are provided in \cref{sec:example}.  

\subsection{Related works} \label{sec:related}
The mixing time of a Markovian system is the duration required for the system to reach its stationary distribution from an arbitrary initial state. One of the most common and important approaches to bounding the mixing time is estimating the spectral gap of the Markov semigroup using $\chi^2$-divergence \cites{diaconis1991geometric,temme2010chi}. It is well known that a system-size-independent spectral gap estimate only yields a $L^1$-mixing time $t_{\rm mix} = \mc{O}(n)$, where $N = 2^n$ is the dimension of the quantum state space. In order to achieve the rapid mixing $t_{\rm mix} = \mc{O}(\log(n))$, one must utilize the modified log-Sobolev inequality (MLSI), which has been explored in \cites{majewski1996quantum,olkiewicz1999hypercontractivity,guionnet2003lectures,kastoryano2013quantum}. Since then, substantial efforts have been made to estimate the spectral gap and MLSI constants for Lindblad dynamics, particularly those related to quantum spin systems. Kastoryano and Brand\~{a}o \cite{kastoryano2016quantum} considered the Gibbs state preparation of spin Hamiltonians with commuting local terms and demonstrated that the spectral gap of Davies generators is independent of system size if and only if the Gibbs state satisfies a strong clustering of correlations. Capel et al. \cites{bardet2021modified,cuevas2019quantum,capel2020modified} investigated the MLSI for quantum spin lattice systems through quasi-factorization or approximate tensorization of the quantum relative entropy. In the recent work \cite{kochanowski2024rapid}, the rapid mixing was proved for a large class of geometrically-2-local Davies generators with commuting Hamiltonians. For the generic positivity of quantum MLSI, the notion of Ricci curvature lower bounds (geodesic convexity) has played an important role. Such a concept was pioneered by Carlen and Maas \cites{carlen2014analog,carlen2017gradient}, where they showed that a primitive QMS satisfying $\si$-GNS detailed balance condition is a gradient flow of the relative entropy with respect to some quantum Wasserstein distance. This allows us to consider the Ricci curvature of a QMS and its connection with MLSI, explored by Datta and  Rouz{\'e} \cite{datta2020relating}. 
The extension of the gradient flow structure and Ricci curvature notion 
to a general family of quantum convex relative entropies were investigated in \cite{li2023interpolation}, which gives the quantum analog of $p$-Beckner inequality \cite{adamczak2022modified}. 

\smallskip 

\noindent \emph{Quantum non-primitive models}. The results mentioned above on the long-time behaviors of Lindblad dynamics are restricted to primitive QMS with detailed balance conditions. Extending these results to the non-primitive case is closely related to the concept of environment-induced decoherence \cite{blanchard2003decoherence} (noting that for classical Markov processes, non-primitivity implies a degenerate diffusive generator). Bardet \cite{bardet2017estimating} initiated the study of mixing (decoherence) properties for non-primitive Lindblad dynamics from the spectral gap (equivalently, the non-primitive Poincar\'{e} inequality), which has emerged as an active research area in the past few years. In this setup, Gao et al. \cite{gao2020fisher} introduced a new tensor-stable MLSI: the complete modified log-Sobolev inequality (CMLSI), proving its validity through the monotonicity of Fisher information. More recently, 
Gao and Rouz\'{e} \cite{gao2021complete} showed that CMLSI holds for any finite-dimensional non-primitive GNS-detailed balanced QMS through a two-sided estimate for the relative entropy. Connections between CMLSI, entropic or geometric Ricci curvature, and gradient estimates have been explored in \cites{li2020graph,wirth2021complete,brannan2022complete,wirth2021curvature}. We also mention that the recent work \cite{gao2022complete} provided a generic asymptotically tight lower bound for the CMLSI constant using the inverse of completely bounded return time and an improved data processing inequality. 

\smallskip 

\noindent \emph{Acceleration by irreversible Markov processes}. As basic models in 
non-equilibrium statistical physics, the irreversible Markov dynamics have been studied extensively in the past few decades \cites{hairer2009hot,eckmann2000non}. Recent findings indicate that breaking reversibility (detailed balance) can often accelerate the mixing of Markov processes, which is a favorable property for the Monte Carlo Markov chain method for probability sampling. Various criteria were proposed to quantify the ergodic properties of the Markov process and compare the efficiency of the corresponding sampling method \cites{fill1991eigenvalue,chatterjee2023spectral,hwang2005accelerating,peskun1973optimum,neal2004improving,rey2015variance,rey2015irreversible,ferre2020large}. Hwang et al. \cite{hwang2005accelerating} proved that under some mild conditions, the spectral gap of a reversible diffusion can be strictly decreased by adding a divergence-free linear drift, thereby accelerating convergence to equilibrium. Further analysis by Franke et al. \cite{franke2010behavior} explored the limiting behavior of the spectral gap of the Laplace-Beltrami operator with a growing linear drift vector field on a compact Riemannian manifold. See also  \cites{fill1991eigenvalue,chatterjee2023spectral} for the spectral gap analysis for finite-state Markov chains. While the spectral property of a Markov generator provides a useful measure of ergodicity, from the Monte Carlo sampling perspective, it is more informative to consider the statistics of the ergodic average $t^{-1} \int_0^t f(X_s)\ud s$ or the empirical measure $t^{-1} \int_0^t \d_{X_s} \ud s$. For example, Dupuis et al. \cite{dupuis2012infinite} utilized the large deviation theory to compare the performance of Markov samplers and the convergence of ergodic averages for parallel tempering-type algorithms. Later, Rey-Bellet and Spiliopoulos \cites{rey2015irreversible,rey2015variance} considered irreversible Langevin samplers and demonstrated that adding a growing drift decreases the asymptotic variance of the estimator and increases the large deviation rate function of the empirical measure. We refer the readers to \cite{ferre2020large} and references therein for more recent results. 

\smallskip 

\noindent \emph{Hypocoercivity theory}.
Another framework for analyzing the convergence of irreversible Markov processes is hypocoercivity theory, which addresses models with a degenerate diffusion (elliptic) component, such as kinetic Fokker-Planck equations (\cref{rem:class1}). The foundational idea, initiated by H\'{e}rau \cites{desvillettes2001trend,herau2004isotropic,herau2006hypocoercivity} and then systematically developed by Villani \cite{villani2009hypocoercivity}, is to split the generator into reversible and irreversible parts and then quantify they interact with each other. Due to the degenerate ellipticity, the system is not coercive in the standard $L^2$-norm. To carry the dissipation from the momentum direction to the whole space, it is often necessary to twist the reference norm. The $H^1$-hypocoercivity has been considered in \cites{villani2009hypocoercivity,mouhot2006quantitative}, directly implying convergence in the $L^2$-norm. However, the resulting rate is implicit and tends to be suboptimal. A direct approach for the $L^2$-convergence was investigated by Dolbeault, Mouhot, and Schmeiser (DMS) \cites{dolbeault2015hypocoercivity,dolbeault2009hypocoercivity}, based on a modified $L^2$-norm. Its underlying idea was inspired from \cite{herau2006hypocoercivity} and revisited in \cite{grothaus2016hilbert}. A series of works \cites{arnold2014sharp,achleitner2015large,arnold2020propagator} by Arnold et al. constructed another modified norm and derived the optimal convergence rates, but it relies on algebraic properties and the spectral information of the involved diffusion operator.

A new variational hypocoercivity framework was recently proposed by Armstrong and Mourrat \cite{albritton2019variational} by using the space-time adapted Poincar\'e inequalities, without changing the $L^2$-scalar product.  It was later extended by Cao et al. \cite{cao2023explicit} for underdamped Langevin dynamics with a confining potential, and by Lu and Wang \cite{lu2022explicit} for piecewise deterministic Markov processes. Compared to the DMS entropy functional approach \cites{dolbeault2015hypocoercivity,dolbeault2009hypocoercivity}, the variational framework directly analyzes the dynamics and can potentially provide optimal estimates on the convergence rate. For instance, in the case of underdamped Langevin dynamics, assuming the Hessian of the potential is bounded from below and choosing the friction coefficient $\gamma = \mc{O}(\sqrt{m})$, 
\cite{cao2023explicit} obtained the optimal rate $\mc{O}(\sqrt{m})$ as $m \to 0$, whereas the DMS approach only yields a rate $\Or(m^{5/2})$ \cite{eberle2024non}*{Remark 29}, where $m$ is the Poincar\'e constant for the target distribution. For a more comprehensive literature review on hypocoercivity, see \cites{brigati2023construct,bernard2022hypocoercivity,cao2023explicit}. It is also worth mentioning that this variational approach with space-time Poincar\'{e} inequalities was recently rephrased and simplified by Eberle and L\"{o}rler using a second-order lift framework \cite{eberle2024non}.

\subsection*{Notation} For ease of exposition, we only consider a finite-dimensional Hilbert space, denoted by $\mc{H}$ with dimension $N$ (all the results and analysis in this work can be adapted to the case of infinite-dimensional $\mc{H}$ without essential difficulty, though some additional assumptions may be needed). Let $\mc{B}(\mc{H})$ be the space of bounded operators. 
The identity element in $\mc{B}(\mc{H})$ is denoted by $\mi_N$ (we often write $\mi$ for simplicity). 
We write $A \ge 0$ (resp., $A > 0$) for a positive semidefinite (resp., definite) operator. We denote by $\mc{D}(\mc{H}) := \{\rho \in \mc{B}(\mc{H})\,;\  \rho \ge 0\,,\ \tr \rho =1 \}$ the convex set of quantum states, and by $\mc{D}_+(\mc{H})$ the subset of full-rank states. Moreover, let $X^\dag$ be the adjoint of $X \in \mc{B}(\mc{H})$ and $\l X, Y\r = \tr (X^\dag Y)$ be the Hilbert-Schmidt (HS) inner product on $\bh$.
The adjoint of a superoperator $\Phi: \mc{B}(\mc{H}) \to \mc{B}(\mc{H})$ for HS inner product is denoted by $\Phi^\dag$ by abuse of notation. For a subset $\mc{M}$ of $\mc{B}(\mc{H})$, we write $\mc{M}^J$ as the set of ${\bf A} = (A_1, \cdots, A_J) \in \mc{M}^J$ with $A_j \in \mc{M}$, $1 \le j \le J$. The HS inner product on $\mc{M}^J$ can be naturally defined as $\l {\bf A} , {\bf B} \r = \sum_{j = 1}^J \l A_j, B_j\r$. In addition, for a given probability measure $\mu$ on some space $\mc{X}$, we define the inner product induced by $\mu$ by $\l f, g \r_{2,\mu}: = \int \bar{f} g \ud \mu$ for complex-valued functions $f,g$. The associated Hilbert space is denoted by $L^2_\mu(\mc{X})$, or simply, $\mc{L}^2_\mu$. We shall adopt the standard asymptotic notations $\Or$ and $\Theta$, where $f = \Theta(g)$ means $f = \Or (g)$ and $g = \Or(f)$.

\section{Quantum Markov semigroup and its mixing}

In \cref{sec:strucQMS}, we introduce the detailed balanced quantum Markov semigroup (QMS) with a coherent term as a special class of irreversible QMS, which would be the primary focus of this work. Then, in \cref{sec:relaxation}, we discuss several concepts for analyzing the mixing property of QMS without detailed balance conditions, allowing us to mathematically formalize the coercivity and hypocoercivity of QMS.

\subsection{Structures of quantum Markov semigroup} \label{sec:strucQMS}

We first recall preliminaries for Markovian open quantum dynamics. A quantum Markov semigroup (QMS) 
$(\mc{P}_t)_{t \ge 0}: \mc{B}(\mc{H}) \to \mc{B}(\mc{H})$ is a semigroup of completely positive and unital maps. Its generator $\mc{L}$, also called Lindbladian, is defined by 
\begin{equation*}
    \mc{L}(X): = \lim_{t \to 0} t^{-1}(\mc{P}_t (X) - X)\,,
\end{equation*}
which has the following GKSL canonical 
form \cites{Lindblad1976,GoriniKossakowskiSudarshan1976}: 
\begin{align} \label{eq:lindbladform}
     \mc{L} (X) = i [H, X] + \sum_{j \in \mc{J}} \left( L_j^\dag X L_j - \frac{1}{2}\left\{L_j^\dag L_j, X  \right\} \right),
\end{align}
with $i [H, X]$ and $\mc{L} (X) -  i [H, X]$ being its coherent and dissipative parts, respectively. Here, the operator $H$ is the Hamiltonian while $\{L_j\}_{j \in \mc{J}}$ are called jump operators. 

Let $\si \in \mc{D}_+(\mc{H})$ be a given full-rank quantum state. We define the modular operator: 
\begin{equation*}
\Delta_{\si}(X) = \si X \si^{-1}\q  \text{for $X \in \bh$}\,,
\end{equation*}
and the following family of inner products on $\bh$ with $s \in \R$: for any $X, Y \in \bh$, 
\begin{align} \label{def:s-inner}
    \l X, Y\r_{\si,s} := \tr(\si^s X^\dag  \si^{1-s} Y) = \l X, \Delta_\si^{1-s} (Y) \si \r\,,
\end{align}
where $\l \dd, \dd\r_{\si,1}$ and $\l \dd, \dd \r_{\si,1/2}$ are the GNS and KMS inner products, respectively. Then, a QMS $\mc{P}_t = e^{t \mc{L}}$ is said to satisfy the
$\si$-{\rm GNS} (resp., $\si$-{\rm KMS}) detailed balance condition (DBC) if the Lindbladian $\mc{L}$ is self-adjoint with respect to the inner product $\l\dd, \dd\r_{\si,1}$  (resp., $\l\dd, \dd\r_{\si,1/2}$). 
In analogy with the classical case, when $\mc{L}$ is detailed balanced, we also say that the QMS $\mc{P}_t$ is reversible if there is no confusion from the context. 

We recall the $\si$-weighted $p$-norm with $p \ge 1$ \cite{olkiewicz1999hypercontractivity}: for $X  \in \mc{B}(\mc{H})$, 
\begin{align*}
    \norm{X}_{p,\si} := \tr\Big(|\Gamma_\si^{1/p}(X)|^p \Big)^{1/p}\,,
\end{align*}
where $|X|: = \sqrt{X^\dag X}$ is the modulus of $X \in \bh$, and $\Gamma_\si$ is the weighting operator for $\si$:
\begin{align*}
    \Gamma_\si X := \si^{\frac{1}{2}}X\si^{\frac{1}{2}}\,.
\end{align*} 
We also introduce the space of operators with expectation zero for $\si$: 
\begin{equation} \label{def:spacezeroavg}
    \maf{H} = \left\{ X \in \mc{B}(\mc{H})\,;\ \tr(\si X) = 0 \right\},
\end{equation}
and there holds $\mc{B}(\mc{H}) = \ran({\bf 1}) \oplus \maf{H}$. Then, the operator norm of a superoperator $\Phi$ on $\bh$ is defined by
\begin{align} \label{def:operatornorm}
 \norm{\Phi}_{2 \to 2} = \sup_{X \in \bh \backslash\{0\}} \frac{\norm{\Phi(X)}_{2,\si}} {\norm{X}_{2,\si}}\,.
\end{align}
If $\maf{S} \subset \bh$ is an invariant subspace of $\Phi$, we can view $\Phi \in \mc{B}(\maf{S})$ and define the norm:
\begin{equation} \label{def:operatornorm2}
 \norm{\Phi}_{\maf{S} \to \maf{S}} = \sup_{X \in \maf{S}\backslash\{0\}} \frac{\norm{\Phi(X)}_{2,\si}} {\norm{X}_{2,\si}}\,.    
\end{equation}
To avoid confusion with the adjoint $\Phi^\dag$ for HS inner product, the adjoint of $\Phi$ for the KMS inner product $\l \dd, \dd \r_{\si,1/2}$ is denoted by $\Phi^\star$, which can be represented as 
\begin{equation*}
    \Phi^\star = \Gamma_\si^{-1} \Phi^\dag \Gamma_\si\,.
\end{equation*}
Moreover, we introduce the Dirichlet form corresponding to a Lindbladian $\mc{L}$ by 
\begin{align*}
    \mc{E}_{\mc{L}}(X,Y): = - \l X, \mc{L} Y \r_{\si,1/2}\,,\q \forall\, X, Y \in \bh\,.
\end{align*}
The next proposition shows that the KMS detailed balanced Lindbladians can be written as squares of derivations in some sense \cite{vernooij2023derivations}*{Theorem 2.5}. 

\begin{proposition} \label{prop:gradientform}
Let $\mc{L}$ be a Lindbladian satisfying $\si$-KMS DBC. Then there exist $\{V_j\}_{j = 1}^{J_D} \subset \bh$ such that $\{V_j\}_{j = 1}^{J_D} = \{V_j^\dag\}_{j = 1}^{J_D}$ and the associated Dirichlet form can be written as 
\begin{align*}
  \mc{E}_{\mc{L}}(X, Y) = \sum_{j = 1}^{J_D} \l [V_j, X], [V_j, Y] \r_{\si,1/2}\,,
\end{align*}
where $[A,B] = AB - BA$ denotes the commutator. 
\end{proposition}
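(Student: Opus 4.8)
The plan is to reduce the statement to the known structure theory of KMS detailed balanced Lindbladians and then read off the gradient (square of derivations) form. Recall that a Lindbladian $\mc{L}$ in GKSL form \eqref{eq:lindbladform} with dissipative part generated by jump operators $\{L_j\}$ satisfies $\si$-KMS DBC if and only if the family of jump operators can be chosen to be closed under the adjoint operation together with a compatibility condition with the modular operator $\Delta_\si$. More precisely, by the work of Alicki, and the refinements in \cite{carlen2017gradient} and \cite{vernooij2023derivations}, one can choose jump operators $\{L_j\}$ that are simultaneously eigenvectors of $\Delta_\si$, say $\Delta_\si(L_j) = e^{-\omega_j} L_j$, and such that the set $\{L_j\}$ is invariant under $L_j \mapsto L_j^\dag$ (with $\omega_j \mapsto -\omega_j$). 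The first step is therefore to invoke this normal form; I would either cite it directly or sketch the diagonalization of the Lindbladian as a self-adjoint operator on the KMS Hilbert space.

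The second step is to rewrite the Dirichlet form using this normal form. Writing $\mc{E}_{\mc{L}}(X,Y) = -\l X, \mc{L} Y \r_{\si,1/2}$ and expanding the dissipative part, one gets a sum over $j$ of terms of the form $\l X, L_j^\dag Y L_j - \tfrac12\{L_j^\dag L_j, Y\}\r_{\si,1/2}$. Using the eigenvalue relation $\Delta_\si(L_j) = e^{-\omega_j} L_j$ to move the modular weights around inside the trace, each such term can be symmetrized and combined with its partner coming from $L_j^\dag$ so that the whole Dirichlet form becomes $\sum_j c_j \l [W_j, X], [W_j, Y]\r_{\si,1/2}$ for suitable operators $W_j$ (essentially rescaled combinations of $L_j$ and $L_j^\dag$) and nonnegative constants $c_j$; this is the computation underlying \cite{carlen2017gradient}*{Theorem 3.1} and \cite{vernooij2023derivations}*{Theorem 2.5}. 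Absorbing $\sqrt{c_j}$ into $W_j$ gives operators $V_j := \sqrt{c_j}\, W_j$.

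The third step is to arrange the self-adjointness of the family: since $\{L_j\}$ is closed under adjoint and the pairing of $L_j$ with $L_j^\dag$ was used in the symmetrization, the resulting $V_j$ can be chosen so that $\{V_j\}_{j=1}^{J_D} = \{V_j^\dag\}_{j=1}^{J_D}$ as a set (each $V_j$ built from a conjugate pair contributes a self-adjoint and an anti-self-adjoint combination, or one relabels so that adjoints are matched within the list). This is a bookkeeping step, reindexing the commutator square decomposition so that the collection is self-adjoint-stable, which is exactly what the statement asserts.

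The main obstacle I anticipate is the second step: verifying that the cross terms $L_j^\dag Y L_j$ versus the anticommutator terms really do recombine into a clean commutator square $[V_j,X][V_j,Y]$ form \emph{with respect to the KMS inner product specifically}, since it is the KMS (rather than GNS) inner product that makes this work without extra modular factors inside the commutators. This requires carefully tracking the $\Gamma_\si^{1/2}$ weights and using $\Delta_\si(L_j) = e^{-\omega_j}L_j$ at the right moments; the pairing of the $\omega_j$ and $-\omega_j$ sectors is what produces the real, symmetric coefficients. Given the statement is quoted from \cite{vernooij2023derivations}*{Theorem 2.5}, the cleanest route in the paper is to cite that result and only indicate the normal-form input; but if a self-contained argument is wanted, the modular-weight tracking in step two is where the real work lies.
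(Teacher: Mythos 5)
Your ultimate recommendation---to cite \cite{vernooij2023derivations}*{Theorem 2.5}---is exactly what the paper does: the proposition is stated without proof and attributed directly to that reference, so on that level you have matched the paper's approach.

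However, the self-contained sketch you outline has a conceptual gap that would matter if you actually tried to carry it out. You begin by asserting that for a KMS detailed balanced Lindbladian one can choose jump operators that are simultaneously eigenvectors of the modular operator, $\Delta_\si(L_j) = e^{-\omega_j}L_j$, with the set closed under adjoint. That spectral normal form (used in \cite{carlen2017gradient}, and recorded in this paper as \cref{lem:qms_gns}) is the canonical structure of the \emph{GNS} detailed balanced case, not the KMS one. For KMS DBC, the canonical form of Fagnola--Umanit\`a \cite{fagnola2007generators} does not give modular eigenvectors, and in fact an extra coherent term $i[G,\cdot]$ appears in the dissipative part (the paper warns about exactly this after \eqref{model:lalpha}). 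The entire point of \cite{vernooij2023derivations} is to establish the derivation-square representation in the KMS setting \emph{without} the modular eigenvector structure that makes the GNS computation straightforward; their argument proceeds through a more delicate analysis on the KMS-GNS interpolation and a ``completely Dirichlet form'' characterization. So the ``step two'' computation you gesture at, which pairs $\omega_j$ and $-\omega_j$ sectors and moves modular weights inside the trace, does not directly apply. If you want a self-contained argument rather than a citation, you would have to either (i) first reduce to the GNS case (which does not cover all KMS DBC generators) or (ii) reproduce the Vernooij--Wirth argument, which is substantially different from the Carlen--Maas computation you invoke.
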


In this work, we focus on the following family of irreversible QMS: for $\alpha > 0$,
\begin{align} \label{model:lalpha}
    \mc{L}_\alpha = \alpha \mc{L}^H + \mc{L}^D\,,
\end{align}
where $\alpha > 0$ is the coupling parameter, representing the relative strength between $\mc{L}^H$ and $\mc{L}^D$. 
Here, the dissipative part 
$\mc{L}^D$ is a Lindbladian satisfying $\si$-KMS DBC for some $\si \in \mc{D}_+(\mc{H})$, and the coherent term 
$\mc{L}^H(\dd) := i [H,\dd]$ is defined by a Hamiltonian $H$ satisfying $[H, \si] = 0$. It follows that $\si$ is a fixed point of the Lindblad dynamic generated by $\mc{L}^\dag_\alpha$, i.e., 
\begin{align*}
\mc{P}_t^\dag(\si) = e^{t \mc{L}_\alpha^\dag}(\si) = \si\,,\q \text{equivalently},\q \mc{L}_\alpha^\dag(\si) = 0\,.
\end{align*}
We remark that by the canonical form of KMS detailed balanced Lindbladian \cite{fagnola2007generators} (see also \cite{ding2024efficient}*{Theorem 10}), there is an additional coherent term $i [G, \dd]$ involved in $\mc{L}^D$, which should not be confused with the one $\mc{L}^H$.

 We shall limit our discussion to the case where the QMS $\mc{P}_t$ is primitive, that is, $\si$ is the unique invariant state. In this case, there holds \cite{frigerio1982long}
\begin{align} \label{eq:conver_qms}
    \lim_{t \to \infty} \mc{P}_t(X) = \tr(\si X)\mi\,,\q \forall\, X \in \bh\,.
\end{align}
It is worth noting, by $[H, \si] = 0$, 
\begin{equation*}
    \l Y, [H,X] \r_{\si,1/2} = \l Y,  H \si^{1/2} X \si^{1/2} -  \si^{1/2} X \si^{1/2} H  \r = \l [H, Y], X \r_{\si,1/2}\,,
\end{equation*}
which implies that $\mc{L}^H = i[H,\dd]$ is anti-Hermitian for the KMS inner product:
\begin{equation*}
    \mc{L}^H = - (\mc{L}^H)^\star\,,
\end{equation*}
resulting in $\mc{L}_\alpha$ being irreversible.
However, we will see in \cref{rem:dbc} that the reversibility (detailed balance) of $\mc{L}_\alpha$ can be recovered in a generalized sense. The following lemma gives useful characterizations of $\ker(\mc{L})$ and the primitivity of the Lindbladian. 
\begin{lemma}[{\cite{wolf5quantum}*{Theorem 7.2}}]\label{lem:converg}
Suppose that the QMS $\mc{P}^\dag_t = e^{t \mc{L}^\dag}$ admits a full-rank invariant state $\si$, namely, $\mc{L}^\dag(\si) = 0$, where $\mc{L}$ is of the form \eqref{eq:lindbladform}. Then,  
\begin{equation*}
    \{H, L_j, L_j^\dag\}' = \ker(\mc{L})\,,
\end{equation*}
and hence the primitivity is equivalent to 
\begin{equation}  \label{eq:irredu}
    \{H, L_j, L_j^\dag\}' = \{z \mi\,;\ z \in \C\}\,,
\end{equation}
where the commutant $\{H, L_j, L_j^\dag\}'$ is defined by the operators that commute with $L_j$, $L_j^\dag$ and $H$. 
\end{lemma}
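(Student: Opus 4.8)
The plan is to prove the two inclusions $\{H, L_j, L_j^\dag\}' \subseteq \ker(\mc{L})$ and $\ker(\mc{L}) \subseteq \{H, L_j, L_j^\dag\}'$ separately, and then to read off the primitivity criterion \eqref{eq:irredu} from the resulting identity together with unitality of $\mc{P}_t$.

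The inclusion $\{H, L_j, L_j^\dag\}' \subseteq \ker(\mc{L})$ is elementary: if $X$ commutes with $H$, with every $L_j$ and with every $L_j^\dag$, then it commutes with each $L_j^\dag L_j$ as well, so substituting $X$ into \eqref{eq:lindbladform} makes the coherent part vanish and turns each dissipative summand into $L_j^\dag L_j X - L_j^\dag L_j X = 0$; hence $\mc{L}(X) = 0$. This direction uses neither primitivity nor the invariant state.

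The reverse inclusion is where the work lies, and I would handle it via the gradient (carr\'e du champ) form
\[
\Gamma_{\mc{L}}(X, Y) := \mc{L}(X^\dag Y) - X^\dag \mc{L}(Y) - \mc{L}(X^\dag)\, Y .
\]
A direct computation from \eqref{eq:lindbladform} shows that the Hamiltonian contributions cancel identically, leaving $\Gamma_{\mc{L}}(X, Y) = \sum_{j \in \mc{J}} [L_j, X]^\dag [L_j, Y]$, so in particular $\Gamma_{\mc{L}}(X, X) = \sum_{j} [L_j, X]^\dag [L_j, X] \ge 0$. Now take $X \in \ker(\mc{L})$; since $\mc{L}$ generates a unital completely positive semigroup it is $*$-preserving, hence $\mc{L}(X^\dag) = \mc{L}(X)^\dag = 0$ as well, and therefore $\Gamma_{\mc{L}}(X, X) = \mc{L}(X^\dag X)$. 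Pairing against $\si$ and using $\mc{L}^\dag(\si) = 0$ together with $\si = \si^\dag$ gives
\[
0 = \tr\big(\mc{L}^\dag(\si)\, X^\dag X\big) = \tr\big(\si\, \mc{L}(X^\dag X)\big) = \tr\big(\si\, \Gamma_{\mc{L}}(X, X)\big).
\]
Since $\si > 0$ and $\Gamma_{\mc{L}}(X, X) \ge 0$ (recall $\tr(\si A) = \norm{A^{1/2}\si^{1/2}}^2 \ge 0$ for $A \ge 0$, with equality only if $A = 0$), this forces $\Gamma_{\mc{L}}(X, X) = 0$, and because a sum of positive semidefinite operators vanishes only if every term does, $[L_j, X] = 0$ for all $j$. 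Running the same argument with $X^\dag \in \ker(\mc{L})$ yields $[L_j, X^\dag] = 0$, hence $[L_j^\dag, X] = 0$. Substituting $[L_j, X] = [L_j^\dag, X] = 0$ back into $\mc{L}(X) = 0$ kills the entire dissipative part, leaving $i[H, X] = 0$, so $X \in \{H, L_j, L_j^\dag\}'$. Combined with the first inclusion, $\ker(\mc{L}) = \{H, L_j, L_j^\dag\}'$.

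Finally, for the primitivity equivalence: unitality gives $\mi \in \ker(\mc{L})$, so $\{H, L_j, L_j^\dag\}' \supseteq \C\mi$ always; and $\dim\ker(\mc{L}) = \dim\ker(\mc{L}^\dag)$ with $\si \in \ker(\mc{L}^\dag)$. Since $\mc{L}^\dag$ is $*$-preserving, $\ker(\mc{L}^\dag)$ is closed under taking adjoints, hence spanned by Hermitian elements, so a kernel of dimension $\ge 2$ would contain a Hermitian fixed point $\tau$ not proportional to $\si$; then $\si + \ep\tau$ is full-rank and positive for small $\ep > 0$ and normalizes to a second invariant state. Therefore $\mc{P}_t$ is primitive iff $\dim\ker(\mc{L}^\dag) = 1$ iff $\dim\ker(\mc{L}) = 1$ iff $\{H, L_j, L_j^\dag\}' = \C\mi$, which is \eqref{eq:irredu}. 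The only genuinely non-routine ingredient is the reverse inclusion — concretely, identifying $\Gamma_{\mc{L}}$ with $\sum_j [L_j, \cdot]^\dag[L_j, \cdot]$ and, above all, exploiting the \emph{full-rank} invariant state to pass from $\tr(\si\,\Gamma_{\mc{L}}(X,X)) = 0$ to $\Gamma_{\mc{L}}(X,X) = 0$; the remaining manipulations are linear algebra.
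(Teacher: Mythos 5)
The paper does not prove this lemma; it cites it directly from Wolf's lecture notes (Theorem~7.2 there). Your proof is correct and is, up to minor organizational differences, the standard argument that appears in the cited source: the easy inclusion $\{H,L_j,L_j^\dag\}'\subseteq\ker(\mc{L})$ by direct substitution, and the hard inclusion via the dissipation functional $\Gamma_{\mc{L}}(X,X)=\sum_j[L_j,X]^\dag[L_j,X]\ge 0$, using $*$-preservation to get $\Gamma_{\mc{L}}(X,X)=\mc{L}(X^\dag X)$ for $X\in\ker(\mc{L})$, pairing against the full-rank invariant state to force $\Gamma_{\mc{L}}(X,X)=0$, then recovering $[L_j^\dag,X]=0$ by applying the same argument to $X^\dag$ and finally $[H,X]=0$ by substitution. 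The primitivity equivalence via $\dim\ker(\mc{L})=\dim\ker(\mc{L}^\dag)$ and the construction of a second invariant state $\propto\sigma+\ep\tau$ from a Hermitian element $\tau\in\ker(\mc{L}^\dag)\setminus\C\sigma$ is also sound; the key observation there — that $\ker(\mc{L}^\dag)$ is spanned by Hermitian elements because $\mc{L}^\dag$ is $*$-preserving, and that $\sigma+\ep\tau$ stays positive for small $\ep$ precisely because $\sigma$ is full rank — is exactly what is needed. The one place worth polishing is your parenthetical $\tr(\si A)=\norm{A^{1/2}\si^{1/2}}^2$: spell out that this is the Hilbert--Schmidt norm, $\tr(\si A)=\tr\bigl((A^{1/2}\si^{1/2})^\dag(A^{1/2}\si^{1/2})\bigr)$, so the vanishing of the trace against a strictly positive $\si$ forces $A^{1/2}=0$ and hence $A=0$.
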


Let $\{V_j\}_{j = 1}^{J_D}$ be the operators associated with $\mc{L}^D$ given by \cref{prop:gradientform}. We define the noncommutative partial derivative by
\begin{equation*}
    \p_j X = [V_j, X]\,,\q \text{with}\q  \p_j^\dag X = [V_j^\dag, X]\,,
\end{equation*}
and the associated gradient $\na: \bh \to \bh^{J_D}$ by 
\begin{equation*} 
    \na X = (\p_1 X, \cdots, \p_{J_D} X)\,, \q \text{for} \ X \in \bh\,.
\end{equation*}
This allows us to write $\mc{E}_{\mc{L}^D}(X, Y) = \sum_{j} \l \p_j X ,  \p_j Y \r_{\si,1/2}$ and further obtain 
\begin{align} \label{eq:rep_gen}
    \mc{L}^D (X) = - \sum_{j = 1}^{J_D} \p_{j}^\star \p_j X = - \na^\star \na X\,.
\end{align}
Here, $\p_j^\star$ is the adjoint of $\p_j$ with respect to $\l \dd, \dd  \r_{\si,1/2}$:  
\begin{align} \label{eq:adjpjkms}
    \p_{j}^\star X = \Gamma_\si^{-1}\p_j^\dag \Gamma_\si X = \Gamma_\si^{-1} [V_j^\dag, \Gamma_\si X]\,.
\end{align}
Then, one can readily reformulate the Lindbladian $\mc{L}_\alpha$ in \cref{model:lalpha} into the hypocoercive form:
\begin{align} \label{eq:irreverseqms}
    \mc{L}_\alpha(X) = i \alpha [H,X] - \na^\star \na X\,,
\end{align}
which can be regarded as a quantum analog of underdamped Langevin dynamics; see \cref{rem:class1}.  
The following lemma is a direct consequence of \eqref{eq:rep_gen} and \eqref{eq:irreverseqms}. 

\begin{lemma} \label{lem:kernel}
For Lindbladian $\mc{L}^D$ in \eqref{eq:rep_gen}, there holds
\begin{align*}
    \ker(\mc{L}^D) = \ker(\na) = \{X\,;\ \p_j X = [V_j, X] = 0 \ \text{for all}\ j \}\,.
\end{align*}
For $\mc{L}_\alpha$ in \eqref{eq:irreverseqms}, there holds 
\begin{align*}
    \ker(\mc{L}_\alpha) = \ker(\mc{L}^D) \bigcap \ker(\mc{L}^H)\,.
\end{align*}
\end{lemma}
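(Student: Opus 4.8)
The plan is to read everything off the gradient-form representations $\mc{L}^D = -\na^\star\na$ and $\mc{L}_\alpha(X) = i\alpha[H,X] - \na^\star\na X$, using the positive definiteness of the KMS inner product (which holds because $\si$ is full-rank: $\l X,X\r_{\si,1/2} = \norm{\Gamma_\si^{1/2}X}_{\rm HS}^2$ vanishes only for $X = 0$). For the first identity, the inclusion $\ker(\na) \subseteq \ker(\mc{L}^D)$ is immediate since $\na X = 0$ forces $\mc{L}^D X = -\na^\star\na X = 0$. Conversely, if $\mc{L}^D X = 0$, pairing with $X$ gives
\[
 0 = \l X, \mc{L}^D X\r_{\si,1/2} = -\l \na X, \na X\r_{\si,1/2} = -\norm{\na X}_{\si,1/2}^2\,,
\]
so $\na X = 0$, i.e.\ $X \in \ker(\na)$. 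The remaining equality $\ker(\na) = \{X\,;\ \p_j X = [V_j,X] = 0 \ \forall j\}$ is just the definition of $\na = (\p_1,\dots,\p_{J_D})$ with $\p_j X = [V_j,X]$.

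For the second identity, $\ker(\mc{L}^D) \cap \ker(\mc{L}^H) \subseteq \ker(\mc{L}_\alpha)$ is clear from $\mc{L}_\alpha = \alpha\mc{L}^H + \mc{L}^D$. For the reverse inclusion, suppose $\mc{L}_\alpha X = 0$ and pair with $X$ in the KMS inner product:
\[
 0 = \l X, \mc{L}_\alpha X\r_{\si,1/2} = \alpha \l X, \mc{L}^H X\r_{\si,1/2} + \l X, \mc{L}^D X\r_{\si,1/2}\,.
\]
Since $\mc{L}^H = -(\mc{L}^H)^\star$ is anti-Hermitian for $\l\dd,\dd\r_{\si,1/2}$, the term $\l X, \mc{L}^H X\r_{\si,1/2}$ equals $-\overline{\l X, \mc{L}^H X\r_{\si,1/2}}$ and is therefore purely imaginary, while $\l X, \mc{L}^D X\r_{\si,1/2} = -\norm{\na X}_{\si,1/2}^2$ is real. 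Taking real parts yields $\norm{\na X}_{\si,1/2}^2 = 0$, hence $\na X = 0$ and $X \in \ker(\mc{L}^D)$ by the first part. Substituting $\mc{L}^D X = -\na^\star\na X = 0$ back into $\mc{L}_\alpha X = 0$ gives $\alpha\,\mc{L}^H X = 0$, and since $\alpha > 0$ we conclude $\mc{L}^H X = 0$, i.e.\ $X \in \ker(\mc{L}^H)$.

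There is no genuine obstacle here; the proof is a short energy-identity argument. The only points requiring a little care are invoking the positive definiteness of $\l\dd,\dd\r_{\si,1/2}$ (legitimate because $\si \in \mc{D}_+(\mc{H})$), and correctly isolating the dissipative contribution in the identity for $\l X,\mc{L}_\alpha X\r_{\si,1/2}$ by using the anti-Hermiticity of $\mc{L}^H$ so that the imaginary coherent cross-term drops out upon taking real parts.
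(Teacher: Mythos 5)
Your proof is correct and is the natural energy-identity argument that the paper alludes to when it calls the lemma ``a direct consequence of \eqref{eq:rep_gen} and \eqref{eq:irreverseqms}'' (no explicit proof is given). Pairing with $X$ in the KMS inner product, using positivity of the norm (valid since $\si$ is full-rank), and splitting real and imaginary parts via anti-Hermiticity of $\mc{L}^H$ is exactly the intended route.
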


\begin{remark}
[Quantum analog of kinetic Fokker-Planck equation]
\label{rem:class1}
We recall some basics of the classical hypocoercive dynamics to better appreciate the connection of Lindblad dynamics in the form \eqref{eq:irreverseqms} with kinetic Fokker-Planck equations. Let $L$ be a generator for a Markov process with $\mu$ being the unique invariant measure. The Markov evolution $e^{t L}$ is (formally) in the hypocoercive form if 
\begin{equation} \label{eq:hypocoer}
    L = B - A^*A\,,
\end{equation}
for some operators $A, B$, where $B = - B^*$ is anti-symmetric. Here $A^*$ and $B^*$ denote the adjoints with respect to $\l \dd, \dd\r_{2,\mu}$. We refer the readers to \cite{villani2009hypocoercivity}*{Part I} for the rigorous definition of a hypocoercive operator. A well-studied hypocoercive model is the kinetic Fokker-Planck equation: given the potential energy $U(x)$ and the friction coefficient $\gamma > 0$, 
\begin{align*}
    \p_t \rho(t,x,v) = - v \dd \na_x \rho + \na_x U \dd \na_v \rho + \gamma(\Delta_v + \na_v\dd(v \rho))\,,
\end{align*}
where $\rho(t,\dd)$ denotes a distribution on $(x,v) \in \R^{2d}$. Under mild assumptions, it has a unique invariant probability measure: 
\begin{equation} \label{eq:invariantclass}
    \rd \rho_\infty(x, v) = \ud \mu(x) \ud \kappa(v)\,,
\end{equation}
where $\rd \mu = e^{-U(x)} \ud x/Z_U$ and $\rd \kappa = e^{-|v|^2/2} \ud v/(2\pi)^{d/2}$. Its corresponding backward Kolmogorov equation has the generator
$\mc{L}_{\rm kFP} = \mc{L}_{\rm ham} + \gamma \mc{L}_{\rm FD}$ with 
\begin{equation*}
    \mc{L}_{\rm ham} : = \na_v^* \na_x - \na_x^* \na_v\,,\q \mc{L}_{\rm FD}: = - \na_v^* \na_v\,.
\end{equation*}
Here $\na_x^*$ and $\na_v^*$ are the adjoints of $\na_x$ and $\na_v$ for the inner product induced by $\rho_\infty$. 

By above discussions, one can readily see that both $\mc{L}_\alpha$ and $\mc{L}_{\rm kFP}$ can fit into the form \eqref{eq:hypocoer}, and $\mc{L}_\alpha$ can be naturally viewed as a quantum analog of $\mc{L}_{\rm kFP}$. We also want to mention the recent work \cite{galkowski2024classical} on the correspondence between the Lindblad and Fokker-Planck evolutions, which shows that if $H$ and $L_j$ for $\mc{L}$ in \eqref{eq:lindbladform} are given by the semiclassical quantizations of classical observables, then the leading part of the semiclassical expansion of $\exp(t \mc{L})$ gives $\exp(t \mc{L}_{\rm kFP})$.
\end{remark}

\begin{remark}[Generalized detailed balance and time reversal] \label{rem:dbc}
We shall see that the reversibility of our focus $\mc{L}_\alpha$ can be recovered by a generalized 
detailed balance, as in the kinetic Fokker-Planck case. 
Let $\mu$ be a probability on
$\R^d$ and $\mc{F}$ be an involution (i.e., a bounded operator with $\mc{F}^2 = 1$) leaving $\mu$ invariant: $\mu \circ \mc{F} = \mu$. A Markov semigroup $e^{t \mc{L}}$ is generalized reversible with respect to $\mu$ up to $\mc{F}$ if and only if 
\begin{align} \label{eq:classgdbc}
    \mc{L}^* = \mc{L}(f\circ \mc{F})\circ \mc{F}\,,
\end{align}
where $\mc{L}^*$ is the adjoint of $\mc{L}$ for $L^2_\mu$. When $\mc{F}$ is the identity, the above definition reduces to the usual reversibility. One can show that the kinetic Fokker-Planck generator $\mc{L}_{\rm kFP}$ (see \cref{rem:class1}) is irreversible in the usual sense but satisfies the generalized reversibility with respect to $\rho_\infty$ up to $\mc{F}(x,v) = (x,-v)$ \cite{stoltz2010free}.

The quantum DBC (reversibility) can be generalized in various ways, which is typically formulated by a certain relationship between the Lindbladian $\mc{L}$ and its adjoint with respect to some inner product. The discussion below focuses on the KMS inner product, following \cite{fagnola2010generators}; a similar discussion for the GNS inner product can be found in \cite{fagnola2008detailed} and references therein. A Lindbladian satisfies the \emph{standard DBC} if there exists a self-adjoint operator $K$ such that  \cite{derezinski2006fermi}
\begin{equation} \label{eq:gdbc}
    \mc{L} - \mc{L}^\star = 2 i [K, \dd]\,,
\end{equation}
where $\mc{L}^\star$ is the adjoint of $\mc{L}$ for the KMS inner product. It is clear that $\mc{L}_\alpha$ introduced in \eqref{eq:irreverseqms} is detailed balanced in the sense of \eqref{eq:gdbc} with $K = H$. Another generalization of quantum DBC involves a time reversal $T$ and is more similar to the classical one \eqref{eq:classgdbc}. Here a time reversal $T$ means an antiunitary operator on $\mc{H}$ (i.e., $\l T x, T y\r = \l y, x\r$ for $x,y\in \mc{H}$) such that $T^{-1} = T$. We say a Lindbladian satisfies the \emph{standard DBC with time reversal $T$} if
\begin{align} \label{eq:gdbc2}
    \mc{L}^\star(X) = T\mc{L}(T X T^{-1}) T^{-1}\,.
\end{align}
Unfortunately, the definitions \eqref{eq:gdbc} and \eqref{eq:gdbc2} are not comparable in general. We refer the interested readers to \cite{guo2024designing} for further connections between the classical and quantum DBC. 
\end{remark}

\subsection{Spectral gap and relaxation time}\label{sec:relaxation}

Let $\mc{L}$ be a primitive Lindbladian that admits a full-rank invariant state $\si \in \dhh$. We will discuss various measures for quantifying the mixing properties of the Lindblad dynamics $\mc{P}_t = e^{t \mc{L}}$, in particular, the relations between the $L^2$-relaxation time, the spectral gap, and the singular value gap, as well as the symmetrization of $\mc{L}$.

We first recall the spectral gap of $\mc{L}$:
\begin{equation} \label{eq:spectralgap}
     \lad(\mc{L}) := \inf \left\{\Re(\lad)\,;\ \lad \in {\rm Spec}(-\mc{L}|_{\maf{H}})\right\},
\end{equation}
where the subspace $\maf{H}$ is defined in \eqref{def:spacezeroavg} and
$\mc{L}|_{\maf{H}}$ is the restriction of $\mc{L}$ on $\maf{H}$. For $\ep \in (0,1)$, the non-asymptotic $\ep$-relaxation time of $\mc{P}_t$ in the $L^2$-sense is given by
\begin{align} \label{eq:trelax}
    t_{\rm rel}(\ep) := \inf\left\{t \ge 0\,;\ \norm{\mc{P}_t X}_{2,\si} \le \ep \norm{X}_{2,\si}\ \, \text{for all}\ \, X \in \maf{H}\right\},
\end{align}
which, as a function $\ep \to t_{\rm rel}(\ep)$, is the inverse function of $t \to  \norm{\mc{P}_t}_{\maf{H} \to \maf{H}}$ (cf.\,\eqref{def:operatornorm2}, noting that $\maf{H}$ is invariant under $\mc{P}_t$). 
In the case of $\mc{L}$ satisfying $\si$-KMS DBC, it is easy to see 
\begin{equation}\label{eq:purexpo}
    \norm{\mc{P}_t}_{\maf{H} \to \maf{H}} = e^{- \lad(\mc{L}) t}\,.
\end{equation}
This implies that the relaxation time is proportional to the inverse of the spectral gap: 
\begin{equation*}
    t_{\rm rel}(\ep) = \log(\ep^{-1})/\lad(\mc{L}) \propto \lad(\mc{L})^{-1}\,.
\end{equation*}

When the detailed balance of $\mc{L}$ is broken, the pure exponential decay in \eqref{eq:purexpo} does not hold and one can only expect 
\begin{equation} \label{ex:expdecay}
      \norm{\mc{P}_t}_{\maf{H} \to \maf{H}} \le C e^{- \nu t}\,,
\end{equation}
for some $C \in [1,\infty)$ and $0 < \nu \le \lad(\mc{L})$. We define the sharp exponential decay rate by 
\begin{equation} \label{eq:sharprate}
    \nu_0 := \sup\{\nu > 0\,; \ \text{there exists $C \ge 1$ such that \eqref{ex:expdecay} holds}\}.
\end{equation}
By the semigroup theory of linear evolution equations \cite{engel2000one}*{Chapter IV}, we have 
\begin{equation} \label{eq:sharprate_gap}
   \lad(\mc{L}) = \nu_0 = - \lim_{t \to \infty} \frac{1}{t} \log  \norm{\mc{P}_t}_{\maf{H} \to \maf{H}}\,,
\end{equation}
which can be equivalently formulated as
  \begin{align*}
        \lad(\mc{L})^{-1} = \lim_{\ep \to 0} \frac{t_{\rm rel}(\ep)}{\log(\ep^{-1})}\,.
    \end{align*}
It means that in the asymptotic regime $\ep \to 0$, we still have $ t_{\rm rel}(\ep) \propto \lad(\mc{L})^{-1}$, while for finite $\ep > 0$, the inverse spectral gap only gives a lower bound of the relaxation time:
\begin{align} \label{eq:relax_gap}
 t_{\rm rel}(\ep) \ge \log(\ep^{-1})/\lad(\mc{L})\,,
\end{align}
due to the existence of the multiplicative constant $C \ge 1$ in \eqref{ex:expdecay}. 

\begin{remark}
    The supremum $\nu_0$ in \eqref{eq:sharprate}  may not always be attained. To be precise, given a primitive Lindblad dynamics $\mc{P}_t$, in general, one can only have\footnote{This happens when $\mc{L}$ has an eigenvalue $\lad \neq 0$ such that $\lad(\mc{L}) = \Re(-\lad)$ and its geometric multiplicity is strictly less than its algebraic multiplicity.}, for any $\epsilon > 0$, 
    \begin{equation*}
        \norm{\mc{P}_t}_{\maf{H} \to \maf{H}} \le C_\epsilon e^{- (\lad(\mc{L}) - \epsilon) t}\q \text{for some $C_\epsilon \ge 1$}\,,
    \end{equation*}
    see \cite{arnold2014sharp}*{Theorem 4.9}. The sharp decay rate $\nu_0 = \lad(\mc{L})$ can be recovered by allowing a time-polynomial prefactor $C(t)$: $ \norm{\mc{P}_t}_{\maf{H} \to \maf{H}} \le C(t) e^{- \lad(\mc{L}) t}$ \cite{arnold2020sharp}*{Theorem 2.8}. 
\end{remark}

Following the terminology in kinetic theory \cite{villani2009hypocoercivity}, we introduce the following definition. 

\begin{definition} \label{def:coercive}
A primitive Lindblad dynamics $\mc{P}_t = e^{t \mc{L}}$ is \emph{hypocoercive} if the estimate \eqref{ex:expdecay} holds for some $C > 1$ and $\nu > 0$, and $\mc{P}_t$ is called \emph{coercive} if \eqref{ex:expdecay} holds with $C = 1$. 
\end{definition}

\begin{remark}[Coercivity and symmetrized Lindbladian]\label{rem:coercive}
From \cref{eq:purexpo,ex:expdecay} with $C = 1$, we readily see that the primitive KMS-detailed balanced Lindbladian $\mc{L}$ is coercive. However, the converse is generally not true. In fact, by taking the derivative of $\norm{\mc{P}_t X}_{2,\si} \le e^{-\nu t} \norm{X}_{2,\si}$ for $X \in \maf{H}$ and Gr\"{o}nwall's inequality, we see that the coercivity of $\mc{L}$ is equivalent to the Poincar\'{e} inequality of the additively symmetrized Lindbladian: for $X \in \maf{H}$, 
\begin{equation*}
    \mc{E}_{\frac{\mc{L}+\mc{L}^\star}{2}}(X, X)\ge \nu \norm{X}_{2,\si}^2\,.
\end{equation*}
Moreover, let $X$ be the eigenvector of $\mc{L}$ associated with the spectral gap $\lad(\mc{L})$, that is, $\mc{L} X = \lad X$ with $\Re(\lad) = - \lad(\mc{L})$. We have $- \l X, \frac{\mc{L} + \mc{L}^\star}{2} X \r_{\si,1/2} = \lad(\mc{L}) \norm{X}_{2,\si}^2$, which implies 
\begin{equation} \label{eq:gapsym}
    \lad(\mc{L}) \ge \inf_{X \in \maf{H}\backslash \{0\}} \frac{ \mc{E}_{\frac{\mc{L}+\mc{L}^\star}{2}}(X, X)}{\norm{X}_{2,\si}^2}\,,
\end{equation}
meaning that the convergence rate from the coercivity of the symmetrized Lindbladian is generally not sharp. For the primitive model \eqref{model:lalpha}, there holds $\mc{L}_\alpha + \mc{L}_\alpha^\star = 2 \mc{L}^D$. In this case, the coercivity of $\mc{L}_\alpha$ is equivalent to that $\mc{L}^D$ is injective on $\maf{H}$, i.e., $\dim \ker(\mc{L}^D) = 1$. In other words, $\mc{L}_\alpha$ is hypocoercive if and only if $\dim \ker(\mc{L}^D) > 1$ (see \cref{ass:1} in \cref{sec:spacetime}). 

\end{remark}

To establish an upper bound on the relaxation time $t_{\rm rel}(\ep)$, we reformulate \eqref{ex:expdecay} as the following $T$-delayed version: for some $\nu > 0$ and $T \ge 0$, 
\begin{equation} \label{eq:delayedexponential}
    \norm{\mc{P}_t  X}_{2,\si} \le e^{-\nu(t-T)} \norm{X}_{2,\si}\,,\q \forall\, X \in \maf{H}\,,
\end{equation}
which can be implied by the $T$-average exponential decay with rate $\nu$ (see the proof of Theorem \ref{thm:ratest}): 
\begin{equation} \label{eq:average}
    \frac{1}{T} \int_t^{t + T} \norm{\mc{P}_s X}_{2,\si} \ud s \le e^{- \nu t}  \frac{1}{T} \int_0^{T} \norm{\mc{P}_s X}_{2,\si} \ud s\,, \q \forall\, X \in \maf{H}\,.
\end{equation}
With the above notions, it is standard to define, for $\ep \in (0,1)$,
\begin{align*}
t_0(\ep) = \inf\left\{\frac{1}{\nu}\log(\ep^{-1}) + T\,:\ \nu, T \ge 0 \ \, \text{such that \eqref{eq:delayedexponential} holds}\right\},     
\end{align*}
which is of the same order as $t_{\rm rel}(\ep)$:
\begin{equation} \label{eq:alterrex}
    t_{\rm rel}(\ep) \le t_0(\ep) \le 2 t_{\rm rel}(\ep)\,.
\end{equation}
Indeed, the left-hand side inequality of \eqref{eq:alterrex} is direct. For the right-hand side one, by definition \eqref{eq:trelax} of $t_{\rm rel}$, we see that \eqref{eq:delayedexponential} holds for $T = t_{\rm rel}(\ep)$ and $\nu = \log(\ep^{-1})/t_{\rm rel}(\ep)$: 
\begin{equation*}
      \norm{\mc{P}_t  X}_{2,\si} \le e^{-\nu(t-T)} \norm{X}_{2,\si} = \varepsilon^{\frac{t}{t_{\rm rel}(\ep)} - 1} \norm{X}_{2,\si}\,.
\end{equation*}
Then taking $t = 2 t_{\rm rel}(\ep)$ gives $ \norm{\mc{P}_t  X}_{2,\si} \le \varepsilon \norm{X}_{2,\si}$, that is, $t_0(\ep) \le 2 t_{\rm rel}(\ep)$. 
Therefore, to bound $t_{\rm rel}(\ep)$ from above, it suffices to find a lower bound of $\nu$ for a fixed $T$ such that the averaged decay \eqref{eq:average} holds, which is the main purpose of \cref{sec:spacetime}. 

The singular value estimate, corresponding to the multiplicative symmetrization of the generator, has also been widely used for analyzing the mixing property of a classical irreversible Markov process \cites{fill1991eigenvalue,chatterjee2023spectral}. Temme et al. \cite{temme2010chi} characterized the convergence of the discrete-time quantum Markov process in the $\chi^2$-divergence by the singular value gap of the quantum channel. Here, we briefly discuss the connections between the singular value gap of $\mc{L}$ and the relaxation time, which extends \cite{eberle2024non}*{Lemma 10}. We define 
\begin{align*}
    s(\mc{L}) := \lad\Big(\sqrt{\mc{L}^\star\mc{L}}\Big) = \inf_{X \in \maf{H} \backslash\{0\}} \frac{\norm{\mc{L} X}_{2,\si}}{\norm{X}_{2,\si}}\,,
\end{align*}
and we have $|\lad| \ge s(\mc{L})$ for any $\lad \in {\rm Spec}(\mc{L}|_{\maf{H}})$. It is also clear that 
\begin{align*}
\frac{1}{s(\mc{L})} = \norm{\mc{L}^{-1}}_{\maf{H} \to \maf{H}}  = \sup_{X \in \maf{H} \backslash\{0\}} \frac{\norm{\mc{L}^{-1} X}_{2,\si}}{\norm{X}_{2,\si}}\,.
\end{align*}
\begin{lemma}
Suppose that $\mc{P}_t = e^{t \mc{L}}$ is a primitive Lindblad dynamics satisfying the $T$-average exponential decay with rate $\nu$. Then it holds that $\frac{1}{\nu} + T \ge \frac{1}{s(\mc{L})}$, and hence $t_{\rm rel}(e^{-1}) \ge  \frac{1}{2 s(\mc{L})}$. 
\end{lemma}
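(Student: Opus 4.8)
The plan is to convert the decay hypothesis into an operator-norm bound on $\mc{L}^{-1}$ through the resolvent identity $-\mc{L}^{-1}X=\int_0^\infty\mc{P}_sX\,\ud s$ on $\maf{H}$, combined with the $L^2(\si)$-contractivity of the semigroup. Two preliminary observations are needed. First, a QMS $\mc{P}_t$ is a contraction for $\norm{\cdot}_{2,\si}$ (a standard fact; equivalently the additively symmetrized Dirichlet form $\mc{E}_{(\mc{L}+\mc{L}^\star)/2}$ is nonnegative, cf.\ \cref{rem:coercive}, and for the model \eqref{model:lalpha} this is immediate since $\mc{L}_\alpha+\mc{L}_\alpha^\star=2\mc{L}^D=-2\na^\star\na\le 0$). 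Consequently $\norm{\mc{P}_t}_{\maf{H}\to\maf{H}}\le 1$ for all $t\ge 0$, and $s\mapsto\norm{\mc{P}_sX}_{2,\si}$ is nonincreasing for every $X\in\maf{H}$. Second, since $\mc{P}_t$ is primitive, $\ker\mc{L}=\C\mi$ by \lemref{lem:converg}, and $\maf{H}$ is an $\mc{L}$-invariant complement of $\C\mi$ (because $\tr(\si\,\mc{L}X)=\tr(\mc{L}^\dag(\si)X)=0$), so $\mc{L}|_{\maf{H}}$ is invertible; integrating $\frac{\rd}{\rd s}\mc{P}_sX=\mc{L}\mc{P}_sX$ over $[0,t]$ and letting $t\to\infty$, with the boundary term $\mc{P}_tX\to 0$ guaranteed by the exponential decay, gives $\mc{L}\int_0^\infty\mc{P}_sX\,\ud s=-X$, i.e.\ $-\mc{L}^{-1}X=\int_0^\infty\mc{P}_sX\,\ud s$ for $X\in\maf{H}$.

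With these in place, I would first pass from the $T$-average decay \eqref{eq:average} to the $T$-delayed form \eqref{eq:delayedexponential}: by monotonicity of $s\mapsto\norm{\mc{P}_sX}_{2,\si}$ followed by \eqref{eq:average} and contractivity,
\[
\norm{\mc{P}_{t+T}X}_{2,\si}\le\frac1T\int_t^{t+T}\norm{\mc{P}_sX}_{2,\si}\ud s\le e^{-\nu t}\,\frac1T\int_0^T\norm{\mc{P}_sX}_{2,\si}\ud s\le e^{-\nu t}\norm{X}_{2,\si},
\]
hence $\norm{\mc{P}_rX}_{2,\si}\le\min\{1,\,e^{-\nu(r-T)}\}\norm{X}_{2,\si}$ for all $r\ge 0$. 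Substituting into the resolvent identity,
\[
\norm{\mc{L}^{-1}X}_{2,\si}\le\int_0^\infty\norm{\mc{P}_sX}_{2,\si}\ud s\le\Big(\int_0^T\ud s+\int_T^\infty e^{-\nu(s-T)}\ud s\Big)\norm{X}_{2,\si}=\Big(T+\frac1\nu\Big)\norm{X}_{2,\si},
\]
and taking the supremum over $X\in\maf{H}\setminus\{0\}$ yields $\frac{1}{s(\mc{L})}=\norm{\mc{L}^{-1}}_{\maf{H}\to\maf{H}}\le\frac1\nu+T$, which is the first assertion. (Alternatively one may skip the delayed reformulation and bound $\int_0^\infty\norm{\mc{P}_sX}_{2,\si}\ud s$ by summing over length-$T$ blocks as a geometric series, using $1-e^{-\nu T}\ge\nu T/(1+\nu T)$.)

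For the second assertion, note that the displayed bound was derived using only the $T$-delayed estimate \eqref{eq:delayedexponential} and contractivity; hence every pair $(\nu,T)$ admissible in the infimum defining $t_0(e^{-1})$ obeys $\frac1\nu+T\ge\frac{1}{s(\mc{L})}$, and since $\log(\ep^{-1})=1$ at $\ep=e^{-1}$ this gives $t_0(e^{-1})\ge\frac{1}{s(\mc{L})}$; combining with $t_0(\ep)\le 2\,t_{\rm rel}(\ep)$ from \eqref{eq:alterrex} at $\ep=e^{-1}$ gives $t_{\rm rel}(e^{-1})\ge\frac{1}{2s(\mc{L})}$. The only mildly delicate ingredients are the two preliminaries — the $L^2(\si)$-contractivity of the QMS and the convergence and validity of the resolvent integral — both of which are routine in finite dimensions; beyond them the proof is the one-line integral estimate above, so I do not anticipate a genuine obstacle.
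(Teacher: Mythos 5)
Your proof is correct, and the overall strategy coincides with the paper's: both bound $\norm{\mc{L}^{-1}}_{\maf{H}\to\maf{H}} = 1/s(\mc{L})$ by estimating $\int_0^\infty\norm{\mc{P}_sX}_{2,\si}\ud s \le \left(T+\frac1\nu\right)\norm{X}_{2,\si}$ and then invoke \eqref{eq:alterrex}. The technical route to this integral bound differs: you first upgrade the $T$-average decay to the pointwise $T$-delayed bound $\norm{\mc{P}_rX}_{2,\si}\le\min\{1,e^{-\nu(r-T)}\}\norm{X}_{2,\si}$ by exploiting the monotonicity of $s\mapsto\norm{\mc{P}_sX}_{2,\si}$ (which follows from contractivity plus the semigroup property) and then integrate directly, whereas the paper first upgrades to $2T$-average decay and applies a Fubini/tail decomposition of $\int_0^\infty\norm{\mc{P}_tX}_{2,\si}\ud t$. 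Both computations yield the same constant $T+1/\nu$; your variant is a touch more elementary and has the added merit of making explicit that the estimate holds for every admissible pair $(\nu,T)$ in the infimum defining $t_0(e^{-1})$, which cleanly justifies the final ``hence'' in the statement.
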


\begin{proof}
Note that the $T$-average exponential decay yields the $2 T$-average one with the same rate: 
\begin{align*}
    \frac{1}{2T} \int_t^{t + 2T} \norm{\mc{P}_s X}_{2,\si} \ud s &\le \frac{1}{2T} \int_t^{t + T} \norm{\mc{P}_s X}_{2,\si} \ud s + \frac{1}{2T} \int_{t + T}^{t + 2T} \norm{\mc{P}_s X}_{2,\si} \ud s  \\
   & \le e^{- \nu t}  \frac{1}{2T} \int_0^{2T} \norm{\mc{P}_s X}_{2,\si} \ud s\,, \q \forall\, X \in \maf{H}\,.
\end{align*}
Then, by the $2T$-average exponential decay, we have, for $X \in \maf{H}$,
\begin{align*}
    \int_0^\infty \norm{\mc{P}_t X}_{2,\si} \ud t & = \int_0^\infty \frac{1}{2 T} \int_t^{t + 2T} \norm{\mc{P}_s X}_{2,\si} \ud s \ud t + \frac{1}{2 T} \int_0^{2 T} (2 T -t ) \norm{\mc{P}_t X}_{2,\si} \ud t \\
    & \le  \left(\int_0^\infty e^{- \nu t} \ud t + \frac{1}{2 T} \int_0^{2 T} (2 T -t ) \ud t \right) \norm{X}_{2,\si} \\
    & \le \left(\frac{1}{\nu} + T \right) \norm{X}_{2,\si}\,,
\end{align*}
where we use Fubini's theorem in the first line and $\norm{\mc{P}_t X}_{2,\si} \le \norm{X}_{2,\si}$ in the second line. 
It follows that the operator $ \int_0^\infty \mc{P}_t X \ud t$ 
for $X \in \maf{H}$ is well-defined, which is the inverse of the operator $- \mc{L}: \maf{H} \to \maf{H}$. Moreover, the following estimate holds:
\begin{equation*}
    \norm{\mc{L}^{-1} (X)}_{2,\si} \le \left(\frac{1}{\nu} + T \right) \norm{X}_{2,\si}\,,
\end{equation*}
which implies $1/s(\mc{L}) \le 1/\nu + T$. The proof is completed by \eqref{eq:alterrex}. 
\end{proof}

\section{Spectral gap under large coherent limit} \label{sec:limitbeha}

Let $\mc{L}_\alpha = \alpha \mc{L}^H + \mc{L}^D$ be a primitive Lindbladian given in \eqref{model:lalpha}. As discussed in \cref{sec:relaxation}, the sharp exponential $L^2$-convergence rate is characterized by the spectral gap $\lad(\mc{L}_\alpha)$. 
In this section, we will understand the asymptotic behavior of $\lad(\mc{L}_\alpha)$ as $\alpha \to + \infty$. 
We shall also see that adding a coherent term can accelerate the convergence of Lindblad dynamics in general.

We first reformulate the spectral gap \eqref{eq:spectralgap} of $\mc{L}_\alpha$ as follows: 
\begin{equation} \label{eq:respecgap}
    \lad(\alpha) := \inf \{\lad > 0\,;\ \exists \mu \in \R \ \, \text{such that}\ -\lad + i \mu \in {\rm Spec}(\mc{L}_\alpha|_{\maf{H}})\}\,.
\end{equation}
Suppose that the Hamiltonian $H$ has eigenvalues $\{\lad_i\}$. Then the eigenvalues of the commutator $[H,\dd]$ are given by the so-called Bohr frequencies:
\begin{equation} 
    B_H := \left\{\nu = \lad_i - \lad_j\,;\ \lad_i,\lad_j \in {\rm Spec}(H) \right\}\,.
\end{equation}
For each $\nu \in B_H$, we denote the associated eigenspace by
\begin{align*}
    \mc{B}_{\nu} := \{X \in \maf{H}\,;\ [H, X] = \nu X\}\,.
\end{align*}
For ease of exposition, we also introduce the minimal energy of $-\mc{L}^D$ restricted on $\mc{B}_\nu$: 
\begin{equation} \label{eq:restrgap}
    \lad_\nu = \inf\{\mc{E}_{\mc{L}^D}(X, X)\,;\ \norm{X}_{2,\si} = 1\,,\ X \in \mc{B}_\nu \}\,,\q \nu \in B_H\,, 
\end{equation}
and the corresponding eigenspace: 
\begin{align}  \label{eq:restrgapspace}
    \mc{M}_\nu = \left\{X \in \mc{B}_\nu\,; \ \mc{E}_{\mc{L}^D}(X, X) = \lad_\nu \norm{X}_{2,\si}^2 \right\}\,.
\end{align}
The main result of this section is \cref{thm:limitingspectral} below, which is analogous to \cite{franke2010behavior} for the spectral gap of the elliptic equation with growing drift. 

\begin{theorem} \label{thm:limitingspectral}
Let $\mc{L}_\alpha$ be a primitive Lindbladian of the form \eqref{model:lalpha} and $\lad(\alpha)$ be 
its spectral gap \eqref{eq:respecgap}. It holds that as $\alpha \to \infty$, the spectral gap $\lad(\alpha)$ converges to a finite value: 
\begin{align} \label{eq:limitspec}
    \lim_{\alpha \to \infty} \lad(\alpha) = \inf_{\nu \in B_H} \lad_\nu\,, 
\end{align}
where $\lad_\nu$ is defined in \eqref{eq:restrgap}.  
\end{theorem}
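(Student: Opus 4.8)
The plan is to establish the two matching bounds $\liminf_{\alpha\to\infty}\lad(\alpha)\ge\inf_{\nu\in B_H}\lad_\nu$ and $\limsup_{\alpha\to\infty}\lad(\alpha)\le\inf_{\nu\in B_H}\lad_\nu$. The guiding picture is a singular perturbation: since ${\rm Spec}(\mc{L}_\alpha|_{\maf H})=\alpha\cdot{\rm Spec}\bigl(\mc{L}^H|_{\maf H}+\alpha^{-1}\mc{L}^D|_{\maf H}\bigr)$ and $\mc{L}^H|_{\maf H}=i[H,\dd]$ is anti-Hermitian for the KMS inner product, hence normal with purely imaginary eigenvalues $\{i\nu:\nu\in B_H\}$ and the KMS-orthogonal decomposition $\maf H=\bigoplus_{\nu\in B_H}\mc{B}_\nu$, degenerate perturbation theory predicts that the eigenvalues of $\mc{L}_\alpha|_{\maf H}$ split into $|B_H|$ clusters near the points $i\alpha\nu$, the $\nu$-cluster converging after subtraction of $i\alpha\nu$ to ${\rm Spec}(A_\nu)$, where $A_\nu:=P_\nu\mc{L}^D P_\nu|_{\mc{B}_\nu}$ and $P_\nu$ is the KMS-orthogonal projection onto $\mc{B}_\nu$. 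Since $\mc{L}^D$ is KMS-self-adjoint and $\le 0$, so is $A_\nu$, and for $X\in\mc{B}_\nu$ one has $\l X,A_\nu X\r_{\si,1/2}=-\mc{E}_{\mc{L}^D}(X,X)$, so $\max{\rm Spec}(A_\nu)=-\lad_\nu$ by \eqref{eq:restrgap}. Thus $\sup\Re\,{\rm Spec}(\mc{L}_\alpha|_{\maf H})\to\max_\nu(-\lad_\nu)=-\inf_\nu\lad_\nu$, which is \eqref{eq:limitspec} after noting that \eqref{eq:respecgap} reads $\lad(\alpha)=-\sup\Re\,{\rm Spec}(\mc{L}_\alpha|_{\maf H})$.

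For the lower bound I would avoid perturbation theory and argue by compactness. Let $z_\alpha\in{\rm Spec}(\mc{L}_\alpha|_{\maf H})$ realize the gap, $\Re z_\alpha=-\lad(\alpha)$, with a unit eigenvector $X_\alpha$: $i\alpha[H,X_\alpha]+\mc{L}^D X_\alpha=z_\alpha X_\alpha$. Pairing with $X_\alpha$ in $\l\dd,\dd\r_{\si,1/2}$ and using that $\l X_\alpha,i\alpha[H,X_\alpha]\r_{\si,1/2}$ is purely imaginary gives $\Re z_\alpha=-\mc{E}_{\mc{L}^D}(X_\alpha,X_\alpha)$, so $\lad(\alpha)=\mc{E}_{\mc{L}^D}(X_\alpha,X_\alpha)\le\norm{\mc{L}^D}_{2\to 2}$ stays bounded and $\norm{\na X_\alpha}_{2,\si}$ is bounded. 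Rewriting the eigenvalue equation as $i[H,X_\alpha]=\alpha^{-1}z_\alpha X_\alpha-\alpha^{-1}\mc{L}^D X_\alpha$ and passing to a subsequence, $X_\alpha\to X_\infty$ with $\norm{X_\infty}_{2,\si}=1$, while $\alpha^{-1}z_\alpha\to i\beta$ for some $\beta\in\R$ (its real part $-\alpha^{-1}\mc{E}_{\mc{L}^D}(X_\alpha,X_\alpha)\to0$ and its imaginary part $\l X_\alpha,[H,X_\alpha]\r_{\si,1/2}$ is bounded) and $\alpha^{-1}\mc{L}^D X_\alpha\to 0$; in the limit $[H,X_\infty]=\beta X_\infty$ with $X_\infty\ne 0$, so $\beta\in B_H$ and $X_\infty\in\mc{B}_\beta$. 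Hence $\lad(\alpha)=\mc{E}_{\mc{L}^D}(X_\alpha,X_\alpha)\to\mc{E}_{\mc{L}^D}(X_\infty,X_\infty)\ge\lad_\beta\ge\inf_\nu\lad_\nu$ along the subsequence; since every subsequential limit of $\lad(\alpha)$ is $\ge\inf_\nu\lad_\nu$, we get $\liminf_{\alpha\to\infty}\lad(\alpha)\ge\inf_\nu\lad_\nu$.

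For the upper bound — harder, since $\inf\Re\,{\rm Spec}$ of a non-normal operator is not variational — I would make the perturbative picture rigorous via a Schur complement. Fix $\nu^*\in B_H$, split $\maf H=\mc{B}_{\nu^*}\oplus\mc{B}_{\nu^*}^{\perp}$ with $\mc{B}_{\nu^*}^{\perp}=\bigoplus_{\nu\ne\nu^*}\mc{B}_\nu$, and write $\mc{L}_\alpha|_{\maf H}$ in the corresponding $2\times2$ block form; its lower-right block is $D_\alpha=i\alpha N+M$ on $\mc{B}_{\nu^*}^{\perp}$, where $N$ acts as $\nu$ on $\mc{B}_\nu$ (KMS-self-adjoint, ${\rm Spec}(N)\subset B_H\setminus\{\nu^*\}$) and $M=P_{\nu^*}^{\perp}\mc{L}^D|_{\mc{B}_{\nu^*}^{\perp}}$ is bounded, while the off-diagonal compressions $B,C$ of $\mc{L}^D$ are bounded uniformly in $\alpha$. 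For a fixed $r>\max_\nu\norm{A_\nu}$, large $\alpha$ (so the disks $\{|z-i\alpha\nu|\le r\}$, $\nu\in B_H$, are pairwise disjoint), and $z$ in the $\nu^*$-disk, the splitting $D_\alpha-z=(i\alpha N-z)\bigl(I+M(i\alpha N-z)^{-1}\bigr)$ together with ${\rm dist}\bigl(z,i\alpha\,{\rm Spec}(N)\bigr)\ge\alpha\delta_*-r$, $\delta_*:=\min_{\nu\ne\nu^*}|\nu-\nu^*|>0$, yields $\norm{(D_\alpha-z)^{-1}}\le C/\alpha$. Then the Schur complement $S_\alpha(z):=(i\alpha\nu^*-z)I+A_{\nu^*}-B(D_\alpha-z)^{-1}C$ on the finite-dimensional $\mc{B}_{\nu^*}$ is holomorphic in $z$ on that disk, with $z\in{\rm Spec}(\mc{L}_\alpha|_{\maf H})$ there iff $\det S_\alpha(z)=0$, and, setting $w=z-i\alpha\nu^*$, $\det S_\alpha(i\alpha\nu^*+w)=(-1)^{\dim\mc{B}_{\nu^*}}\det(wI-A_{\nu^*})+\Or(1/\alpha)$ uniformly on $\{|w|\le r\}$. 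By Hurwitz's theorem the zeros of $w\mapsto\det S_\alpha(i\alpha\nu^*+w)$ in $\{|w|\le r\}$ converge, with multiplicity, to ${\rm Spec}(A_{\nu^*})$; in particular there is $z_\alpha\in{\rm Spec}(\mc{L}_\alpha|_{\maf H})$ with $z_\alpha-i\alpha\nu^*\to\max{\rm Spec}(A_{\nu^*})=-\lad_{\nu^*}$, so $\lad(\alpha)\le-\Re z_\alpha\to\lad_{\nu^*}$. Taking the infimum over $\nu^*\in B_H$ gives $\limsup_{\alpha\to\infty}\lad(\alpha)\le\inf_\nu\lad_\nu$, which with the previous paragraph proves \eqref{eq:limitspec}. (Summing the multiplicities $\dim\mc{B}_\nu$ over $\nu\in B_H$ in fact shows every eigenvalue of $\mc{L}_\alpha|_{\maf H}$ lies in some disk $\{|z-i\alpha\nu|\le r\}$ for large $\alpha$, so this single argument already yields the full two-sided statement, indeed with the quantitative rate $\lad(\alpha)=\inf_\nu\lad_\nu+\Or(1/\alpha)$.)

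The step I expect to be the main obstacle is precisely this degenerate perturbation, concretely the uniform $\Or(1/\alpha)$ control of the Schur correction $B(D_\alpha-z)^{-1}C$: it rests on a resolvent bound for $D_\alpha$, which is \emph{not} normal, obtained from the decomposition into $\alpha$ times a KMS-self-adjoint part plus a bounded part, combined with the separation of the Bohr clusters (one must also dispatch the harmless cases $\nu^*=0$ or $0\in{\rm Spec}(N)$, where $z\approx i\alpha\nu^*$ is still at distance $\Theta(\alpha)$ from every other $i\alpha\nu$). The remaining ingredients — continuity of $\mc{E}_{\mc{L}^D}$, finite-dimensional compactness, Hurwitz's theorem, and the identity $\max{\rm Spec}(A_\nu)=-\lad_\nu$ — are routine.
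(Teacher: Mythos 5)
Your proof is correct, and for the harder (upper-bound) direction it takes a genuinely different route from the paper. The paper proves the theorem via \cref{prop:limit1,prop:limit2}: the lower bound $\liminf_{\alpha\to\infty}\lad(\alpha)\ge\inf_\nu\lad_\nu$ is the same compactness argument you give (you cleanly absorb the boundedness hypothesis via $\lad(\alpha)=\mc{E}_{\mc{L}^D}(X_\alpha,X_\alpha)\le\norm{\mc{L}^D}_{2\to 2}$, whereas the paper imports it from \cref{prop:limit2}), but the upper bound in \cref{prop:limit2} is run by contradiction: assume a disk $B_\eta(-\lad_\nu+i\alpha\nu)$ is spectrum-free, derive a priori resolvent bounds, identify the limit of $(z-\mc{L}_\alpha)^{-1}Y$ on the circle $\Gamma_c$, and contradict via a comparison of $\int_{\Gamma_c}(z-\mc{L}_\alpha)^{-1}Y\,\rd z$ with $\int_{\Gamma_c}(z-(-\lad_\nu+i\alpha\nu))^{-1}Y\,\rd z$. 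Your Schur-complement argument is a direct, quantitative implementation of degenerate perturbation theory about the Bohr clusters $i\alpha\nu$: the normality of $i\alpha N$ gives $\norm{(i\alpha N-z)^{-1}}\le(\alpha\delta_*-r)^{-1}$, hence $\norm{(D_\alpha-z)^{-1}}=\Or(1/\alpha)$, and Hurwitz/Rouch\'e then locates all $\dim\maf H$ eigenvalues with multiplicity in the disjoint disks, so this single step already yields both inequalities; moreover, since each $A_\nu$ is KMS-Hermitian, Bauer--Fike confirms the rate $\lad(\alpha)=\inf_\nu\lad_\nu+\Or(1/\alpha)$, which is strictly more than the paper proves. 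The only blemish is cosmetic: the factorization should read $D_\alpha-z=\bigl(I+M(i\alpha N-z)^{-1}\bigr)(i\alpha N-z)$ or $(i\alpha N-z)\bigl(I+(i\alpha N-z)^{-1}M\bigr)$, since $M$ and $i\alpha N-z$ need not commute; the estimate $\norm{(D_\alpha-z)^{-1}}\le C/\alpha$ you extract from it is unaffected. Your use of the KMS-orthogonal decomposition $\maf H=\bigoplus_\nu\mc{B}_\nu$ and the identity $\max{\rm Spec}(A_\nu)=-\lad_\nu$ is correct.
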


\begin{remark}
The formula \eqref{eq:limitspec} means that when $\alpha$ is large enough, the spectral gap is approximately given by $\inf_\nu \lad_\nu$, the estimation of which could be simpler than the one of $\lad(\alpha)$ if the eigendecomposition of $H$ is accessible.
The more precise asymptotic behavior of $\lad(\alpha)$ as $\alpha \to \infty$ can be derived by the higher-order perturbation theory \cite{kato2013perturbation}.
\end{remark}

We emphasize 
that since $\mc{L}_\alpha$ is primitive,
each $\lad_\nu$ is positive and hence
\begin{equation*}
     \lim_{\alpha \to \infty} \lad(\alpha) = \inf_{\nu \in B_H} \lad_\nu > 0\,.
\end{equation*}
Indeed, $\lad_\nu > 0$ for all Bohr frequencies means that there is no eigenvector of $\mc{L}^H$ lying in $\ker(\mc{L}^D)$, which, by \cite{achleitner2017multi}*{Proposition 2.6}, is equivalent to the following condition: 
\begin{equation} \label{eq:index}
    \sum_{j = 0}^{J} - (i\mc{L}^H)^j  \mc{L}^D (i\mc{L}^H)^j \ge K \, {\rm id}\,, \q \text{on}\ \maf{H}\,,
\end{equation}
for some $J \ge 0$ and $K > 0$. It can be viewed as a quantum analog of the finite rank H\"{o}rmander condition for hypoelliptic degenerate diffusion processes \cite{hormander1967hypoelliptic}. The smallest $J$ such that \eqref{eq:index} holds is called the \emph{hypocoercivity index} of $\mc{L}_\alpha$. Note also that $J = 0$ gives $- \mc{L}^D > 0$ on $\maf{H}$, that is, $\mc{L}_\alpha$ is coercive; see \cref{rem:coercive}. Thanks to \cite{achleitner2017multi}*{Lemma 2.8}, the condition \eqref{eq:index} holds if and only if all the eigenvalues $\lad$ of 
$\mc{L}_\alpha|_{\maf{H}}$ have negative real parts: $\Re(\lad) < 0$, which is guaranteed by the primitivity \eqref{eq:conver_qms} (see \cite{wolf5quantum}*{Theorem 6.7}).

\begin{corollary}[Accelerating Lindblad dynamics] \label{rem:accellind}
Let $\mc{L}^D$ be a primitive Lindbladian satisfying $\si$-KMS DBC, and define $\mc{L}_\alpha = \alpha \mc{L}^H + \mc{L}^D$ with $[H, \si] = 0$. Then, the spectral gap increases:
\begin{equation} \label{eq:gap}
    \lad(\alpha) \ge \lad(\mc{L}^D)\,.
\end{equation}
Suppose that $\mc{B}_D$ is the eigenspace of $-\mc{L}^D$ associated with the gap $\lad(\mc{L}^D)$. 
The inequality \eqref{eq:gap} strictly holds if and only if $\mc{B}_D \bigcap \mc{B}_\nu = \{0\}$ for any $\nu \in B_H$. In this case, the strict increase of the spectral gap still holds in the limit $\alpha \to \infty$:
\begin{equation} \label{eq:increaselimit}
    \lim_{\alpha \to \infty} \lad(\alpha) > \lad(\mc{L}^D)\,.
\end{equation}
\end{corollary}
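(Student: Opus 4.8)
The plan is to deduce everything from \cref{thm:limitingspectral} together with a monotonicity/continuity argument in $\alpha$. First, for the lower bound \eqref{eq:gap}, I would use the variational characterization of the spectral gap via the symmetrized generator: since $\mc{L}_\alpha + \mc{L}_\alpha^\star = 2\mc{L}^D$ (noted in \cref{rem:coercive}), any eigenvector $X \in \maf{H}$ of $\mc{L}_\alpha$ with $\mc{L}_\alpha X = (-\lad(\alpha) + i\mu)X$ satisfies $\lad(\alpha)\norm{X}_{2,\si}^2 = -\l X, \frac{\mc{L}_\alpha + \mc{L}_\alpha^\star}{2} X\r_{\si,1/2} = \mc{E}_{\mc{L}^D}(X,X) \ge \lad(\mc{L}^D)\norm{X}_{2,\si}^2$, using that $\mc{L}^D$ satisfies $\si$-KMS DBC and $X \in \maf{H}$. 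This gives \eqref{eq:gap} for every $\alpha > 0$.

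Next, for the equality case and the strict inequality \eqref{eq:increaselimit} in the limit, I would apply \cref{thm:limitingspectral}: $\lim_{\alpha \to \infty}\lad(\alpha) = \inf_{\nu \in B_H}\lad_\nu$, where $\lad_\nu$ is the minimal energy of $-\mc{L}^D$ restricted to the Bohr eigenspace $\mc{B}_\nu$. The key observation is that $\lad_\nu \ge \lad(\mc{L}^D)$ always (it is an infimum of the Rayleigh quotient $\mc{E}_{\mc{L}^D}(X,X)/\norm{X}_{2,\si}^2$ over the \emph{smaller} set $\mc{B}_\nu \subset \maf{H}$), with equality for a given $\nu$ precisely when $\mc{B}_\nu$ contains a minimizer of the full Rayleigh quotient, i.e.\ when $\mc{B}_\nu \cap \mc{B}_D \ne \{0\}$. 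Here I should be careful: since $\mc{L}^D$ is KMS-self-adjoint, its bottom eigenspace on $\maf{H}$ is exactly $\mc{B}_D$, and $\mc{E}_{\mc{L}^D}(X,X) = \lad(\mc{L}^D)\norm{X}_{2,\si}^2$ with $X \ne 0$ forces $X \in \mc{B}_D$ (this is the standard fact that the Rayleigh quotient of a self-adjoint operator attains its minimum only on the corresponding eigenspace — I would spell this out via spectral decomposition). Hence $\inf_{\nu}\lad_\nu = \lad(\mc{L}^D)$ iff $\mc{B}_D \cap \mc{B}_\nu \ne \{0\}$ for some $\nu \in B_H$, and $\inf_\nu \lad_\nu > \lad(\mc{L}^D)$ iff $\mc{B}_D \cap \mc{B}_\nu = \{0\}$ for all $\nu \in B_H$. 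Combined with \eqref{eq:gap} and \cref{thm:limitingspectral}, this yields \eqref{eq:increaselimit} under the stated condition.

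It remains to upgrade "strict inequality in the limit" to "strict inequality for every $\alpha > 0$" as claimed in the iff statement for \eqref{eq:gap}. For this I would argue by contradiction / analyticity: $\lad(\alpha)$ depends on $\alpha$ through the eigenvalues of the analytic family $\mc{L}_\alpha = \alpha\mc{L}^H + \mc{L}^D$. If $\lad(\alpha_0) = \lad(\mc{L}^D)$ for some $\alpha_0 > 0$, then as above the associated eigenvector $X_0$ must lie in $\mc{B}_D$ and additionally satisfy $\mc{L}^H X_0 = \frac{i\mu}{\alpha_0}X_0$ (reading off the real and imaginary parts of the eigenvalue equation together with the Rayleigh-quotient equality forces $\mc{L}^D X_0 = -\lad(\mc{L}^D)X_0$ and hence $\alpha_0 \mc{L}^H X_0 = i\mu X_0$), so $X_0 \in \mc{B}_D \cap \mc{B}_{\mu/\alpha_0}$ with $\mu/\alpha_0 \in B_H$, contradicting $\mc{B}_D \cap \mc{B}_\nu = \{0\}$. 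Conversely, if $\mc{B}_D \cap \mc{B}_\nu \ne \{0\}$ for some $\nu$, pick $0 \ne X \in \mc{B}_D \cap \mc{B}_\nu$; then $\mc{L}_\alpha X = \alpha i\nu X - \lad(\mc{L}^D) X$ is an eigenvalue with real part $-\lad(\mc{L}^D)$, so $\lad(\alpha) \le \lad(\mc{L}^D)$, forcing equality by \eqref{eq:gap}. This closes the equivalence.

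The main obstacle I anticipate is the rigorous justification that equality in the Rayleigh-quotient bound forces the eigenvector into $\mc{B}_D$ (and, in the non-self-adjoint $\mc{L}_\alpha$ setting, disentangling the contributions of $\mc{L}^H$ and $\mc{L}^D$ to a single complex eigenvalue): one must be careful that the eigenvector of the non-normal operator $\mc{L}_\alpha$, when plugged into the symmetrized quadratic form, indeed yields the clean identity $\lad(\alpha)\norm{X}^2 = \mc{E}_{\mc{L}^D}(X,X)$, and that saturating it propagates back to both $\mc{L}^D X = -\lad(\mc{L}^D)X$ and $\mc{L}^H X \parallel X$. Everything else is a short consequence of \cref{thm:limitingspectral} and elementary linear algebra.
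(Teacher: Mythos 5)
Your proposal is correct and follows essentially the same route as the paper's proof: derive \eqref{eq:gap} from the Rayleigh-quotient identity $\lad(\alpha)\norm{X}_{2,\si}^2 = \mc{E}_{\mc{L}^D}(X,X)$ for an eigenvector realizing the gap, use the Min-Max theorem (saturation of the Rayleigh quotient) to force that eigenvector into $\mc{B}_D$ and then, from the eigenvalue equation, into some $\mc{B}_\nu$, and invoke \cref{thm:limitingspectral} for the limit. The only thing you spell out that the paper leaves implicit is why $\inf_{\nu\in B_H}\lad_\nu>\lad(\mc{L}^D)$ under the stated hypothesis: each $\lad_\nu$ is a Rayleigh infimum over the smaller subspace $\mc{B}_\nu$, which strictly exceeds $\lad(\mc{L}^D)$ whenever $\mc{B}_\nu\cap\mc{B}_D=\{0\}$, and $B_H$ is finite; this is exactly the missing step and your version makes the logic airtight.
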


\begin{proof}
The inequality \eqref{eq:gap} simply restates \cref{eq:gapsym} in \cref{rem:coercive}. Now, suppose $\lad(\alpha) = \lad(\mc{L}^D)$. Let $X_\alpha$ be an eigenvector of $\mc{L}_\alpha$ satisfying $\mc{L}_\alpha X_\alpha = \lad X_\alpha$ with $\Re(\lad) = - \lad(\alpha)= - \lad(\mc{L}^D)$. The argument for \cref{eq:gapsym}, along with Min-Max theorem, gives $\mc{L}^D(X_\alpha) = - \lad(\mc{L}^D) X_\alpha$, which implies that $X_\alpha$ is also an eigenvector of $\mc{L}^H$. This shows the \emph{if} direction, while the \emph{only if} direction is trivial. The claim \eqref{eq:increaselimit} follows from \cref{thm:limitingspectral}. 
\end{proof}

The proof of \cref{thm:limitingspectral} relies on the following two propositions. 

\begin{proposition} \label{prop:limit1}
Suppose that $\limsup_{\alpha \to \infty}\lad(\alpha) < \infty$. There holds 
\begin{align*}
        \liminf_{\alpha \to \infty}  \lad(\alpha) \ge \inf_{\nu \in B_H} \lad_\nu\,.
    \end{align*}
\end{proposition}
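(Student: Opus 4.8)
\textbf{Proof plan for \cref{prop:limit1}.}

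The plan is to argue by contradiction via a compactness/normal-families argument on the eigenvectors realizing the spectral gap. Suppose $\liminf_{\alpha\to\infty}\lad(\alpha) = \ell < \inf_{\nu\in B_H}\lad_\nu =: \ell_*$. Pass to a subsequence $\alpha_k\to\infty$ along which $\lad(\alpha_k)\to\ell$ and, by the assumed $\limsup$ bound, the imaginary parts $\mu_k$ stay bounded (after a further subsequence, $\mu_k\to\mu_\infty$); let $X_k\in\maf{H}$ with $\norm{X_k}_{2,\si}=1$ solve $\mc{L}_{\alpha_k}X_k = (-\lad(\alpha_k)+i\mu_k)X_k$. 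Dividing the eigenvalue equation \eqref{eq:irreverseqms} by $\alpha_k$ gives $i[H,X_k] = \alpha_k^{-1}\big((-\lad(\alpha_k)+i\mu_k)X_k + \na^\star\na X_k\big)$. Taking the KMS inner product with $X_k$ and using that $i[H,\dd]$ is anti-Hermitian (so $\l X_k, i[H,X_k]\r_{\si,1/2}$ is purely imaginary) while $\mc{E}_{\mc{L}^D}(X_k,X_k) = \l X_k,\na^\star\na X_k\r_{\si,1/2}\ge 0$ is real, one separates real and imaginary parts: the real part yields $\mc{E}_{\mc{L}^D}(X_k,X_k) = \lad(\alpha_k)\to \ell$, so $\na X_k$ is bounded; the imaginary part controls $\l X_k, i[H,X_k]\r_{\si,1/2} = \alpha_k^{-1}(\cdots)\to 0$. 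Since the unit ball of $\maf{H}$ is compact (finite dimensions), extract $X_k\to X_\infty$ with $\norm{X_\infty}_{2,\si}=1$, $X_\infty\in\maf{H}$.

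Next I would identify $X_\infty$ as an eigenvector of $[H,\dd]$ concentrated on a single Bohr frequency. The bound on $\alpha_k^{-1}\na^\star\na X_k$ together with the eigenvalue relation forces $[H,X_k]\to 0$ only in the weak sense above; more carefully, write $X_k = \sum_{\nu\in B_H} X_k^{(\nu)}$ in the eigenbasis decomposition $\maf{H} = \bigoplus_\nu \mc{B}_\nu$ of the (diagonalizable, since $H$ self-adjoint) operator $[H,\dd]$. The relation $\l X_k, i[H,X_k]\r_{\si,1/2} = i\sum_\nu \nu\,\norm{X_k^{(\nu)}}^2_{\si,1/2}\to 0$ does not by itself kill the off-frequency mass, so instead I pair the eigenvalue equation against $X_k^{(\nu)}$ componentwise: projecting \eqref{eq:irreverseqms} and using $\l X_k^{(\nu)}, i[H,X_k]\r = i\nu\norm{X_k^{(\nu)}}^2$, one gets $i\alpha_k\nu\,\norm{X_k^{(\nu)}}^2_{\si,1/2} = (-\lad(\alpha_k)+i\mu_k)\norm{X_k^{(\nu)}}^2_{\si,1/2} - \l X_k^{(\nu)},\na^\star\na X_k\r_{\si,1/2}$. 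The left side grows like $\alpha_k$ unless $\norm{X_k^{(\nu)}}_{\si,1/2}$ is small; combined with the uniform bound $\norm{\na X_k}\le C$, taking real parts for a fixed $\nu\ne 0$ shows $\alpha_k|\nu|\,\norm{X_k^{(\nu)}}^2$ is bounded by a constant, so $\norm{X_k^{(\nu)}}\to 0$ for every $\nu\ne 0$ with $\nu$ in some fixed frequency; iterating/summing and using finiteness of $B_H$, the limit $X_\infty$ lies in a single eigenspace $\mc{B}_{\nu_0}$. Actually the cleanest route: since $X_\infty$ is a limit and $[H,X_k]$ has bounded (indeed $\to 0$ after the rescaling only in the paired sense) — I would instead argue that any weak-$*$ limit of $X_k$ must satisfy $[H,X_\infty]=$ eigenvector relation by a two-scale / averaging argument, grouping frequencies: replace $X_k$ by $e^{-is\alpha_k[H,\dd]}X_k$-averaged quantities to extract the resonant component. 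Either way, $X_\infty\in\mc{B}_{\nu_0}$ for some $\nu_0\in B_H$.

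Finally, combine: by lower semicontinuity of the Dirichlet form $\mc{E}_{\mc{L}^D}$ (a nonnegative quadratic form, continuous in finite dimensions), $\mc{E}_{\mc{L}^D}(X_\infty,X_\infty) \le \liminf_k \mc{E}_{\mc{L}^D}(X_k,X_k) = \ell$. But $X_\infty\in\mc{B}_{\nu_0}$ with $\norm{X_\infty}_{2,\si}=1$, so by the definition \eqref{eq:restrgap} of $\lad_{\nu_0}$ we have $\mc{E}_{\mc{L}^D}(X_\infty,X_\infty)\ge \lad_{\nu_0}\ge \ell_* = \inf_\nu\lad_\nu$. Hence $\ell\ge\ell_*$, contradicting $\ell<\ell_*$. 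This proves $\liminf_{\alpha\to\infty}\lad(\alpha)\ge\inf_{\nu\in B_H}\lad_\nu$.

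\textbf{Main obstacle.} The delicate point is the second step: showing the limiting eigenvector $X_\infty$ concentrates on a single Bohr frequency $\mc{B}_{\nu_0}$ rather than spreading across several eigenspaces of $[H,\dd]$. The naive bound from the imaginary part of the eigenvalue equation only gives $\sum_\nu \nu\norm{X_k^{(\nu)}}^2\to 0$, which permits cancellation between positive and negative $\nu$; one genuinely needs the frequency-by-frequency pairing (exploiting that $\mc{L}^H$ preserves each $\mc{B}_\nu$ but $\mc{L}^D$ generally does not, so $\na^\star\na$ couples frequencies only through the $O(1)$ Dirichlet energy while $\alpha_k\mc{L}^H$ contributes $O(\alpha_k)$ on each nonzero frequency) to force the off-resonant mass to vanish. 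Care is also needed because the spectral-gap eigenvalue may come with a Jordan block, but since we only extract a genuine eigenvector $X_k$ at each finite $\alpha_k$ (the spectral gap is always attained by an actual eigenvector, even if not the decay rate), this does not obstruct the argument.
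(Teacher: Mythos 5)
Your overall strategy — compactness of normalized eigenvectors, the identity $\mc{E}_{\mc{L}^D}(X_k,X_k)=\lad(\alpha_k)$ from the real part, concentration of the limit on a single Bohr eigenspace, and conclusion via the definition of $\lad_\nu$ — matches the paper's, and the frequency-by-frequency pairing you propose is a viable route to the concentration step. However, there is a genuine error at the outset: the claim that ``by the assumed $\limsup$ bound, the imaginary parts $\mu_k$ stay bounded'' is false. The hypothesis $\limsup_\alpha\lad(\alpha)<\infty$ controls only the real parts of the eigenvalues; the imaginary parts $\mu_\alpha$ generically grow linearly in $\alpha$. Indeed, the imaginary part of $\l X_k,\mc{L}_{\alpha_k}X_k\r_{\si,1/2}=-\lad(\alpha_k)+i\mu_k$ gives $\mu_k/\alpha_k=\sum_\nu\nu\,\norm{X_k^{(\nu)}}_{2,\si}^2$, a convex combination of Bohr frequencies — so it is $\mu_k/\alpha_k$, not $\mu_k$, from which one extracts a limit $\nu_*$. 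This error propagates into your frequency-wise pairing step, where you also need the \emph{imaginary} part of the projected eigenvalue equation, not the real part as you write: one gets $(\alpha_k\nu-\mu_k)\,\norm{X_k^{(\nu)}}_{2,\si}^2=\Im\,\l X_k^{(\nu)},\na^\star\na X_k\r_{\si,1/2}=\mc{O}(1)$, and the left side blows up only when $\nu\neq\nu_*$ — not whenever $\nu\neq 0$. As written, your argument would conclude $X_\infty\in\mc{B}_0$ always, which is simply wrong.

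The repair is nearby and gives what you need: after extracting $\mu_k/\alpha_k\to\nu_*$, divide the corrected imaginary-part identity by $\alpha_k$ and let $k\to\infty$ to get $(\nu-\nu_*)\,\norm{X_\infty^{(\nu)}}_{2,\si}^2=0$ for every $\nu\in B_H$; since $\norm{X_\infty}_{2,\si}=1$ some component survives, which forces $\nu_*\in B_H$ and $X_\infty\in\mc{B}_{\nu_*}$, and your lower-semicontinuity conclusion then goes through. The paper reaches the same point without an explicit $\mc{B}_\nu$-decomposition by pairing the eigenvalue equation against an arbitrary test operator $Y\in\maf{H}$, dividing by $\alpha$, and passing to the limit to obtain $\mc{L}^H X_*=i\nu_* X_*$ directly; the two routes are essentially equivalent in finite dimensions.
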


\begin{proposition} \label{prop:limit2}
Given $\alpha > 0$ and a Bohr frequency $\nu \in B_H$, let $\lad_\nu$ be given in \eqref{eq:restrgap} and denote by $B_\eta(-\lad_\nu + i \alpha \nu)$ the $\eta$-neighborhood of $-\lad_\nu + i \alpha \nu$ in the complex plane:
\begin{align*}
   B_\eta(-\lad_\nu + i \alpha \nu) := \{z \in \C\,;\ |z - (-\lad_\nu + i \alpha \nu)| \le \eta\}\,.
\end{align*}
Then, for any $\eta > 0$, there exists $\alpha_0$ large enough such that for all $\alpha \ge \alpha_0$, 
\begin{equation*}
    {\rm Spec}(\mc{L}_\alpha|\maf{H}) \bigcap B_\eta(- \lad_\nu + i \alpha \nu) \neq \emptyset. 
\end{equation*}
\end{proposition}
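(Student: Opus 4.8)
The plan is to run a Feshbach--Schur (block) reduction of the eigenvalue problem for $\mc{L}_\alpha$ on $\maf{H}$: after subtracting the oscillatory shift $i\alpha\nu$, the problem collapses to a finite-dimensional nonlinear eigenvalue problem on $\mc{B}_\nu$ whose unperturbed version has $-\lad_\nu$ as an eigenvalue, after which a Rouch\'{e}/Hurwitz argument locates a nearby root of the perturbed characteristic determinant. To set this up I would first use the spectral decomposition adapted to $\mc{L}^H$. Since $[H,\si]=0$, the subspace $\maf{H}$ is $\mc{L}^H$-invariant and $\mc{L}^H$ is anti-Hermitian for the KMS inner product (as noted before \cref{rem:dbc}), hence normal, with KMS-orthogonal eigenspace decomposition $\maf{H} = \bigoplus_{\mu \in B_H}\mc{B}_\mu$ and $\mc{L}^H|_{\mc{B}_\mu} = i\mu$, so $\ker(\mc{L}^H - i\nu) = \mc{B}_\nu$. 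Let $P$ be the KMS-orthogonal projection onto $\mc{B}_\nu$ and $Q = \mi - P$ on $\maf{H}$; both commute with $\mc{L}^H$, and on $Q\maf{H}$ the operator $\mc{L}^H - i\nu$ is invertible with $\norm{(\mc{L}^H - i\nu)^{-1}}_{Q\maf{H}\to Q\maf{H}} \le \delta^{-1}$, where $\delta>0$ is the distance from $\nu$ to the rest of $B_H$ (the case $Q\maf{H}=\{0\}$ being trivial). Writing an eigenvalue of $\mc{L}_\alpha|_{\maf{H}}$ as $\lad = i\alpha\nu + z$ and $X = u + v$ with $u = PX$, $v = QX$, the equation $\alpha(\mc{L}^H - i\nu)X + \mc{L}^D X = zX$ splits, using that $\mc{L}^H - i\nu$ is block diagonal for $P,Q$ (and vanishes on $\mc{B}_\nu$), into
\[
    P\mc{L}^D u + P\mc{L}^D v = z u\,, \qquad \bigl[\alpha(\mc{L}^H - i\nu) + Q\mc{L}^D Q - z\bigr]v = -\,Q\mc{L}^D u\,.
\]

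Next I would invert the $Q$-block uniformly in $\alpha$. Restricting $z$ to the compact set $\overline{B_\eta(-\lad_\nu)}$ and setting $C_0 := \norm{\mc{L}^D}_{2\to2} + \lad_\nu + \eta$, a Neumann series in $\tfrac1\alpha(\mc{L}^H-i\nu)^{-1}(Q\mc{L}^D Q - z)$ shows that for $\alpha > C_0/\delta$ the $Q$-block is invertible on $Q\maf{H}$, depends analytically on $z$, and $R_\alpha(z) := -[\alpha(\mc{L}^H - i\nu) + Q\mc{L}^D Q - z]^{-1}Q\mc{L}^D$ obeys $\norm{R_\alpha(z)}_{2\to2} = \Or(\alpha^{-1})$ uniformly on $\overline{B_\eta(-\lad_\nu)}$. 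Substituting $v = R_\alpha(z)u$ into the first equation reduces everything to the finite-dimensional problem $M_\alpha(z)u = zu$ on $\mc{B}_\nu$, with $M_\alpha(z) := \bigl(P\mc{L}^D P + P\mc{L}^D R_\alpha(z)\bigr)\big|_{\mc{B}_\nu}$ analytic in $z$ and converging, uniformly on $\overline{B_\eta(-\lad_\nu)}$ as $\alpha\to\infty$, to $M_0 := P\mc{L}^D P|_{\mc{B}_\nu}$. Since $\mc{L}^D$ is KMS-self-adjoint, $M_0$ is self-adjoint on $(\mc{B}_\nu, \l\dd,\dd\r_{\si,1/2})$ with $\l X, M_0 X\r_{\si,1/2} = -\mc{E}_{\mc{L}^D}(X,X)$, so by \eqref{eq:restrgap} its largest eigenvalue equals exactly $-\lad_\nu$; in particular $-\lad_\nu$ is a zero of the characteristic polynomial $g(z) := \det(M_0 - z\,{\rm id}_{\mc{B}_\nu})$.

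Finally I would conclude by a Rouch\'{e} argument in the single variable $z$. Choose $\eta' \in (0,\eta]$ so that $-\lad_\nu$ is the only zero of the polynomial $g$ in $\overline{B_{\eta'}(-\lad_\nu)}$, whence $m := \min_{|z+\lad_\nu| = \eta'}|g(z)| > 0$; since $f_\alpha(z) := \det(M_\alpha(z) - z\,{\rm id}_{\mc{B}_\nu}) \to g(z)$ uniformly on that circle, there is $\alpha_0 \ge C_0/\delta$ with $|f_\alpha - g| < m$ on it for all $\alpha \ge \alpha_0$, so by Rouch\'{e}'s theorem $f_\alpha$ has a zero $z_\alpha \in B_{\eta'}(-\lad_\nu) \subset B_\eta(-\lad_\nu)$. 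For such $z_\alpha$ pick $u_\alpha \in \mc{B}_\nu \setminus\{0\}$ with $M_\alpha(z_\alpha)u_\alpha = z_\alpha u_\alpha$; then $X_\alpha := u_\alpha + R_\alpha(z_\alpha)u_\alpha \neq 0$ (the correction lies in $Q\maf{H}$, orthogonal to $u_\alpha$) solves $\mc{L}_\alpha X_\alpha = (i\alpha\nu + z_\alpha)X_\alpha$, so $i\alpha\nu + z_\alpha \in {\rm Spec}(\mc{L}_\alpha|_{\maf{H}})\cap B_\eta(-\lad_\nu + i\alpha\nu)$, which is the claim.

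The hard part will be the uniform-in-$\alpha$ control of the Schur complement, i.e.\ establishing that the $Q$-block is invertible for all large $\alpha$ with $R_\alpha(z) = \Or(\alpha^{-1})$: this requires exploiting that $\mc{L}^H - i\nu$ is boundedly invertible on $Q\maf{H}$ with the \emph{quantitative} gap $\delta$ separating $\nu$ from the other Bohr frequencies, together with keeping $z$ in a fixed compact set so that $Q\mc{L}^D Q - z$ stays uniformly bounded; these two facts are exactly what force $M_\alpha(z) \to M_0$. The subsequent finite-dimensional perturbation/Rouch\'{e} step is routine, but one should note that, because $\mc{L}_\alpha$ is non-normal, the reduction must proceed through the algebraic Schur complement rather than through any variational characterization of the eigenvalues.
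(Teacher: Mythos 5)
Your proof is correct, and it takes a genuinely different route from the paper's. The paper proves \cref{prop:limit2} by contradiction: it supposes a sequence $\alpha_n \to \infty$ along which the ball $B_\eta(-\lad_\nu + i\alpha_n\nu)$ contains no spectrum, solves the resolvent equation $(z - \mc{L}_{\alpha_n})X^{\alpha_n}_z = Y$ for a fixed $Y \in \mc{M}_\nu$ with $z$ on the boundary circle, establishes a uniform bound on $X^{\alpha_n}_z$ and identifies its limit through a sequence of test-operator and subsequence arguments (using the variational structure and the Min--Max theorem), and finally invokes a Cauchy contour integral over that circle to conclude $Y = 0$, contradicting $\norm{Y}_{2,\si}=1$. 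You instead argue constructively via a Feshbach--Schur reduction: the normality of $\mc{L}^H$ on $(\maf{H},\l\dd,\dd\r_{\si,1/2})$ gives the orthogonal splitting $\maf{H} = \mc{B}_\nu \oplus Q\maf{H}$, the $Q$-block of $\mc{L}_\alpha - i\alpha\nu$ is inverted uniformly for $\alpha \gg 1$ by a Neumann series keyed to the spectral gap $\delta$ of $\mc{L}^H$ around $i\nu$, and the eigenproblem collapses to a $z$-dependent matrix $M_\alpha(z)$ on $\mc{B}_\nu$ that converges at rate $\Or(\alpha^{-1})$ to the KMS-self-adjoint compression $P\mc{L}^D P|_{\mc{B}_\nu}$, whose top eigenvalue is exactly $-\lad_\nu$; Rouch\'{e} then localizes a root of $\det(M_\alpha(z)-z)$ near $-\lad_\nu$, and unwinding the Schur complement yields an explicit eigenvector $u_\alpha + R_\alpha(z_\alpha)u_\alpha$. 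Your route buys more than what is claimed --- a quantitative $\Or(\alpha^{-1})$ convergence rate of $z_\alpha \to -\lad_\nu$, an explicit approximate eigenvector, and in fact localization of spectrum near \emph{every} eigenvalue of $P\mc{L}^D P|_{\mc{B}_\nu}$ rather than just the smallest --- but it leans on the clean block structure and finite dimensionality. The paper's contradiction-by-compactness argument gives a weaker qualitative statement but is more purely variational, which is the style one would reach for if the $Q$-block inversion were less tractable (e.g.\ in infinite dimensions with unbounded $H$).
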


\begin{proof}[Proof of \cref{thm:limitingspectral}]
It suffices to note from \cref{prop:limit2} that for any $\nu \in B_H$, there holds 
\begin{equation*}
\limsup_{\alpha \to \infty}\lad(\alpha) \le \lad_\nu\,,
\end{equation*}
where $\lad_\nu$ is defined in \eqref{eq:restrgap}. Then the proof is readily completed by \cref{prop:limit1}.
\end{proof}

The rest of this section is devoted to the proofs of \cref{prop:limit1,prop:limit2}. We start with the easier lower bound of the limiting spectral gap (\cref{prop:limit1}). 

\begin{proof}[Proof of \cref{prop:limit1}]
Suppose that $- \lad(\alpha) + i \mu_\alpha \in \C$ is the eigenvalue of $\mc{L}_\alpha$ with $\lad(\alpha)$ being the spectral gap \eqref{eq:respecgap}, and $X^\alpha = X_1^\alpha + i X_2^\alpha \in \maf{H}$ with $\norm{X^\alpha}_{2,\si} = 1$ is the associated eigenvector:
\begin{align} \label{eq:eigen_equation}
    \mc{L}_\alpha (X_1^\alpha + i X_2^\alpha) = (-\lad(\alpha) + i \mu_\alpha ) X^\alpha\,,
\end{align}
where $X_1^\alpha, X_2^\alpha \in \bh$ are real-valued. It follows that 
\begin{align*}
    - \l X^\alpha,  \mc{L}_\alpha X^\alpha \r_{\si,1/2} = \lad(\alpha) - i \mu_\alpha\,.
\end{align*}
Since $\mc{L}^H$ is anti-Hermitian, we readily have 
\begin{equation} \label{eq:realpart}
     \mc{E}_{\mc{L}^D}(X^\alpha, X^\alpha) = - \Re\,\l X^\alpha,  \mc{L}_\alpha X^\alpha \r = \lad(\alpha)\,.
\end{equation}
In addition, by the boundedness of $\norm{X^\alpha}_{2,\si}$, there is a convergent subsequence of $\{X^\alpha\}$ (without loss of generality, still denoted by $X^\alpha$) such that 
\begin{align*}
  X_* : = \lim_{\alpha \to \infty} X^\alpha  \in \maf{H}\,.
\end{align*}
Note from \cref{eq:eigen_equation} that for any $Y \in \maf{H}$, 
\begin{align*}
  \mc{E}_{\mc{L}_\alpha}(Y,  X^\alpha) =  (\lad(\alpha) - i \mu_\alpha ) \l Y, X^\alpha\r_{\si,1/2}\,,
\end{align*}
which gives 
\begin{align*}
   - i \l Y, [H, X^\alpha] \r_{\si,1/2} + \frac{1}{\alpha} \mc{E}_{\mc{L}^D}(Y,X^\alpha) = \frac{1}{\alpha} ( \lad(\alpha) - i \mu_\alpha ) \l Y, X^\alpha\r_{\si,1/2}\,.
\end{align*}
Then, letting $\alpha \to \infty$ and recalling $\lim \sup_{\alpha \to \infty} \lad(\alpha) < \infty$, we find  
\begin{align*}
    i \l Y, [H, X_*] \r_{\si,1/2} = \left(\lim_{\alpha \to \infty} \frac{i \mu_\alpha}{\alpha}  \right)  \l Y, X_*\r_{\si,1/2}\,,
\end{align*}
which means that $\nu_* := \lim_{\alpha \to \infty} \frac{\mu_\alpha}{\alpha}$ exists and is a Bohr frequency with $X_* \in \mc{B}_{\nu_*}$:
\begin{align*}
    \mc{L}^{H} X_* = i \nu_* X_*\,.
\end{align*}
The proof is completed by \eqref{eq:realpart}:
\begin{equation*}
    \liminf_{\alpha \to \infty} \lad(\alpha) \ge \inf\{\mc{E}_{\mc{L}^D}(X, X)\,;\ \norm{X}_{2,\si} = 1\,,\ X \in \mc{B}_\nu \ \text{for some $\nu$}\}\,. \qedhere
\end{equation*}
\end{proof}

We next prove \cref{prop:limit2} on the stability of the spectrum of $\mc{L}_\alpha$. 

\begin{proof}[Proof of \cref{prop:limit2}]
We proceed by a contradiction argument. Suppose that there exists $\eta > 0$ and $\alpha_n \to \infty$ such that any element $z$ in the set 
\begin{equation*}
B_\eta(-\lad_\nu + i \alpha_n \nu) = \{ - \lad + i\mu\,;\ \lad = \lad_\nu + \ep \in \R \,,\ \mu = \alpha_n \nu + \d \in \R \,, \q \ep^2 + \d^2 \le \eta^2\}
\end{equation*}
is not an eigenvalue of $\mc{L}_\alpha|_{\maf{H}}$. We fix $Y \in \mc{M}_\nu$ \eqref{eq:respecgap} with normalization $\norm{Y}_{2,\si} = 1$. Then, for any $z = - \lad + i\mu \in B_\eta(-\lad_\nu + i \alpha_n \nu)$, there exists $X^{\alpha_n}_z \in \maf{H}$ satisfying 
\begin{align} \label{eq:resolventeq}
    (z - \mc{L}_{\alpha_n}) X^{\alpha_n}_z = Y\,,
\end{align}
since $z - \mc{L}_{\alpha_n}$ is invertible. Note that $z \in \C$ is parameterized by $(\ep,\d)$ and $-\lad_\nu + i \alpha_n \nu$.  

We first show that $\{X_z^{\alpha_n}\}$ is an uniformly bounded sequence: for small enough $\eta$, 
\begin{align} \label{claimbound}
    \limsup_{n \to \infty} \sup_{\ep^2 + \d^2  =  \eta^2} \norm{X^{\alpha_n}_z}_{2,\si} < \infty\,.
\end{align}
If it is not true, for a subsequence of $\alpha_n$ (still denoted by $\alpha_n$), there exists a sequence of  real $\ep_n, \d_n$ such that $\ep_n^2 + \d_n^2  =  \eta^2$, $z_n = - (\lad_\nu + \ep_n) + i (\alpha_n \nu + \d_n)$, and 
\begin{equation*}
  K_n: = \norm{X^{\alpha_n}_{z_n}}_{2,\si} \to \infty\q \text{as $n \to \infty$}\,.
\end{equation*}
We define $\w{X}_n = X^{\alpha_n}_{z_n}/K_n$, which is bounded and satisfies, by \eqref{eq:resolventeq}, for any $Z \in \maf{H}$, 
\begin{align} \label{eq:variationalresol}
    z_n \l Z,\w{X}_n\r_{\si,1/2} - i \alpha_n \l Z, [H, \w{X}_n] \r_{\si,1/2} + \mc{E}_{\mc{L}^D}(Z, \w{X}_n) = \frac{1}{K_n} \l Z, Y \r_{\si,1/2}\,.
\end{align}
By the boundedness of $\w{X}_n$, without loss of generality, we can assume that as $n \to \infty$, for some $\w{X}_* \in \maf{H}$ and real $\ep_*, \d_*$ satisfying $\ep_*^2 + \d_*^2  =  \eta^2$, there holds
\begin{equation*}
 \w{X}_n \to \w{X}_*\,,\q \ep_n \to \ep_*\,,\q \d_n \to \d_*\,.
\end{equation*}
Then, from \cref{eq:variationalresol}, we have 
\begin{equation*}
    \frac{z_n}{\alpha_n} \l Z,\w{X}_n\r_{\si,1/2} - i \l Z, [H, \w{X}_n] \r_{\si,1/2} + \frac{1}{\alpha_n} \mc{E}_{\mc{L}^D}(Z, \w{X}_n) = \frac{1}{\alpha_n K_n} \l Z, Y \r_{\si,1/2}\,,
\end{equation*}
and let $n \to \infty$ to obtain
\begin{equation*}
    i \nu \l Z,\w{X}_*\r_{\si,1/2} - i \l Z, [H, \w{X}_*] \r_{\si,1/2} = 0\,,
\end{equation*}
that is, $\w{X}_* \in \mc{B}_\nu$. We now  consider test operators $Z = \w{X}_n$ in \eqref{eq:variationalresol}
and take the real part: 
\begin{align*}
      - (\lad_\nu + \ep_n) \norm{\w{X}_n}_{2,\si}^2 + \mc{E}_{\mc{L}^D}(\w{X}_n, \w{X}_n) = \frac{1}{K_n} \Re\, \l \w{X}_n, Y \r_{\si,1/2}\,,
\end{align*}
which implies, by letting $n \to \infty$, 
\begin{align} \label{eq:limit_eig}
       \mc{E}_{\mc{L}^D}(\w{X}_*, \w{X}_*) =   (\lad_\nu + \ep_*) \norm{\w{X}_*}_{2,\si}^2\,.
\end{align}
It follows that $\ep_* \ge 0$ by $\w{X}_* \in \mc{B}_\nu$ and the definition \eqref{eq:restrgap} of $\lad_\nu$. On the other hand, we take test operators $Z \in \mc{M}_\nu$ (cf.\,\eqref{eq:restrgapspace}) in \eqref{eq:variationalresol}. A similar calculation gives 
\begin{align*}
  (\lad_\nu - i \alpha_n \nu + z_n) \l Z,\w{X}_n\r_{\si,1/2} = \frac{1}{K_n} \l Z, Y \r_{\si,1/2}\,,
\end{align*}
which further yields $(- \ep_* + i \d_*) \l Z,\w{X}_*\r_{\si,1/2} = 0$ by $n \to \infty$, and hence 
\begin{equation*}
     \w{X}_* \perp \mc{M}_\nu\,.
\end{equation*}
By the Min-Max theorem for the eigenvalue problem, it holds that $ \mc{E}_{\mc{L}^D}(\w{X}_*, \w{X}_*)/\norm{\w{X}_*}_{2,\si}^2$ is strictly greater than the second smallest eigenvalue of $- \mc{L}^D|_{\mc{B}_\nu}$. This contradicts with \eqref{eq:limit_eig} as $\ep_*$ could be arbitrarily small. We have proved the claim \eqref{claimbound}. 

Next, we fix a small enough $\eta$. By the boundedness of $X^{\alpha_n}_{z_n}$ proved in \eqref{claimbound}, up to a subsequence, we have $X^{\alpha_n}_{z_n} \to X_*$ for some $X_* \in \maf{H}$ as $n \to \infty$, where $X^{\alpha_n}_{z_n}$ solves the equation \eqref{eq:resolventeq} and 
$$ z_n = - (\lad_\nu + \ep) + i (\alpha_n \nu + \d) $$
with $\ep^2 + \d^2 = \eta^2$. Similarly to the argument for \eqref{claimbound}, we consider the equation \eqref{eq:resolventeq} with test operators $Z = Y \in \mc{M}_\nu$ and find 
\begin{align*} 
    z_n \l Y, X^{\alpha_n}_{z_n}\r_{\si,1/2} - i \alpha_n \l Y, [H,  X^{\alpha_n}_{z_n}] \r_{\si,1/2} + \mc{E}_{\mc{L}^D}(Y, X^{\alpha_n}_{z_n}) = \l Y, Y \r_{\si,1/2}\,,
\end{align*}
and by letting $n \to \infty$, 
\begin{align} \label{auxeq:1}
   (- \ep + i \d) \l Y, X_*\r_{\si,1/2} = 1\,.
\end{align}
The same analysis also shows that for $Z \in \mc{M}_\nu \perp Y$, there holds 
\begin{align*}
    \l Z, X_*\r_{\si,1/2} = 0\,.
\end{align*}
This means that the projection of $X_*$ on $\mc{M}_\nu$ is given by $- \frac{\ep + i \d}{\ep^2 + \d^2} Y$. We define 
\begin{align} \label{eq:projstate}
    \h{X}_* = X_* + \frac{\ep + i \d}{\ep^2 + \d^2} Y \,.
\end{align}
Moreover, we divide \cref{eq:resolventeq} with $z_n$ by $\alpha_n$ and obtain $X_* \in \mc{B}_\nu$ by letting $n \to \infty$. Then, taking test operators $Z = X^{\alpha_n}_{z_n}$ for the equation \eqref{eq:resolventeq}, we have 
\begin{equation*}
     -(\lad_\nu + \ep) \l X_*, X_*\r_{\si,1/2} + \mc{E}_{\mc{L}^D}(X_*, X_*) = \Re\, \l X_*, Y \r_{\si,1/2} = - \frac{\ep}{\ep^2 + \d^2}\,,
\end{equation*}
by \eqref{auxeq:1}. Plugging \eqref{eq:projstate} into the above formula gives, by orthogonality between $\h{X}_*$ and $Y$,
\begin{align*}
    -(\lad_\nu + \ep) \left( \norm{\h{X}_*}_{2,\si}^2 + \frac{1}{\ep^2 + \d^2}\right) + \mc{E}_{\mc{L}^D}(\h{X}_*, \h{X}_*) + \frac{\lad_\nu}{\ep^2 + \d^2} = - \frac{\ep}{\ep^2 + \d^2}\,,
\end{align*}
which implies 
\begin{align*}
         \mc{E}_{\mc{L}^D}(\h{X}_*, \h{X}_*)  = (\lad_\nu + \ep)  \norm{\h{X}_*}_{2,\si}^2 \,.
\end{align*}
Again, thanks to $\h{X}_* \perp \mc{M}_\nu$ and that $\ep$ could be arbitrarily small, it holds that 
\begin{equation*}
\h{X}_* = 0 \q \text{and} \q  X_* =  - \frac{\ep + i \d}{\ep^2 + \d^2} Y\,.
\end{equation*}

We are now ready to complete the proof. By assumption, for any small $\eta$, there is no spectrum of $\mc{L}_\alpha|_{\maf{H}}$ in $B_\eta(-\lad_v + i \alpha_n \nu)$. It follows that 
\begin{align*}
    \int_{\Gamma_c} (z - \mc{L}_{\alpha_n}|_{\maf{H}})^{-1} Y \ud z = 0\,,
\end{align*}
where $\Gamma_c = \{z \in \C\,; \ |z - (-\lad_v + i \alpha_n \nu)| = \eta\}$. Then, we have 
\begin{align} \label{eq:contou}
  \norm{2 \pi Y}_{2,\si}^2 & = \Big\| \int_{\Gamma_c} (z - \mc{L}_{\alpha_n}|_{\maf{H}})^{-1} Y \ud z - \int_{\Gamma_c} \frac{1}{z - (-\lad_v + i \alpha_n \nu)} \ud z Y  \Big\|_{2,\si}^2 \notag \\
  & \le \int_{\Gamma_c} \Big\| (z - \mc{L}_{\alpha_n}|_{\maf{H}})^{-1} Y  - \frac{1}{z - (-\lad_v + i \alpha_n \nu)}  Y  \Big\|_{2,\si}^2 \ud z \\
   & \le \int_{\Gamma_c} \Big\| X^{\alpha_n}_z  + \frac{\ep + i \d}{\ep^2 + \d^2}   Y  \Big\|_{2,\si}^2 \ud z\,. \notag
\end{align}
Recall that we have proved the boundedness of $X^{\alpha_n}_z$ with fixed $\ep$ and $\d$, and it converges to $- \frac{\ep + i \d}{\ep^2 + \d^2} Y$ as $n \to \infty$. Combining this with \eqref{eq:contou}, we have $Y = 0$, which contradicts with $\norm{Y}_{2,\si} = 1$. 
\end{proof}

\section{Convergence rate estimate via hypocoercivity} \label{sec:spacetime}

While it is shown in \cref{sec:limitbeha} that breaking the quantum detailed balance may increase the spectral gap and thereby help accelerate the mixing of QMS, the estimation of the spectral gap of the non-self-adjoint operator $\mc{L}_\alpha$ could be quite hard in general. In this section, we consider the hypocoercive primitive QMS \eqref{eq:irreverseqms} with \cref{ass:1} below, in which case we can derive an explicit and constructive decay rate estimate in the $L^2$-distance and the time-averaged one, bypassing the direct estimation of $\lad(\mc{L}_\alpha)$.
This analysis is motivated by the variational framework based on space-time Poinca\'e inequality, recently developed in \cites{albritton2019variational,cao2023explicit,brigati2023construct} for kinetic Fokker-Planck equations. 

\subsection{Setup and main result} For ease of exposition, we set the coupling parameter $\alpha = 1$ for $\mc{L}_\alpha$ (as we can always absorb $\alpha > 0$ in  $\mc{L}^H$) and consider the primitive Lindbladian:
\begin{align} \label{eq:qmslindblad}
    \mc{L} = \mc{L}^H + \mc{L}^D\,,
\end{align}
where $\mc{L}^H = i [H,\dd]$ and $\mc{L}^D$ is a $\si$-KMS detailed balanced Lindbladian. 

Our main result is the decay estimate in \cref{thm:ratest}, which is based on a novel space-time Poincar\'e  inequality for time-dependent operators (see \cref{thm:tspoincare}). 
The scaling of the convergence rate in various parameters, the optimal selection of $\alpha$, and the comparison with the recent work \cite{fang2024mixing} will be discussed in \cref{subsec:scaling}.

We recall the space $\maf{H}$ of operators with average zero \eqref{def:spacezeroavg}. Let $\Pi_0$ be the orthogonal projection (i.e., the conditional expectation \cites{bardet2017estimating}) to $\ker({\mc{L}^D|_{\maf{H}}})$ with respect to $\l \dd, \dd\r_{\si,1/2}$. For simplicity of notation, we will often abuse the notation and suppress the restriction to $\maf{H}$ in the sequel. We denote by $\Pi_+ = 1 - \Pi_0$ the complement projection, and then $\maf{H}$ can be decomposed as
\begin{equation} \label{eq:decomspace}
    \maf{H} = \maf{H}_0 \oplus \maf{H}_+\,, 
\end{equation}
where 
\begin{equation*}
    \maf{H}_0 := \ker(\mc{L}^D|_{\maf{H}})\q \text{and}\q \maf{H}_+ := \ran(\Pi_+) = \ker(\mc{L}^D|_{\maf{H}})^\perp\,.
\end{equation*}
Thanks to the finite dimensionality of $\mc{H}$ and the dissipativity of $\mc{L}^D$ with detailed balance condition, the Poincar\'{e} inequality for $\mc{L}^D$ holds on $\maf{H}_+$: there exists constant $\lad_D > 0$ such that 
\begin{equation} \label{eq:coerciveLd}
\mc{E}_{\mc{L}^D}(X,X) = - \l X, \mc{L}^D X\r_{\si,1/2} \ge \lad_D \norm{X}_{2,\si}^2\,,\q \forall\, X \in \maf{H}_+\,,
\end{equation}
where $\lad_D$ equals the spectral gap of $\mc{L}^D$. It follows that 
\begin{equation} \label{eq:poinwhole}
 \Re \mc{E}_{\mc{L}}(X,X) = - \Re \l X, (\mc{L}^H + \mc{L}^D) X\r_{\si,1/2} \ge \lad_D \norm{X}_{2,\si}^2\,, \q \forall\, X \in \maf{H}_+\,,
\end{equation}
since $\mc{L}^H$ is anti-Hermitian for $\l \dd , \dd\r_{\si,1/2}$. 

\begin{remark} \label{rem:time-avg}
The subspaces $\maf{H}_0$ and $\maf{H}_+$ correspond to non-decaying and decaying modes of the Lindblad dynamics $\exp(t \mc{L}^D)$, respectively. For the entire dynamics $\exp(t \mc{L})$ which is primitive, the Hamiltonian flow $\exp(t \mc{L}^H)$ 
facilitates the propagation of the dissipation on $\maf{H}_+$ to $\maf{H}_0$ (and hence the whole space $\maf{H}$), ensuring the convergence of $\exp(t \mc{L})$ to the equilibrium. However, the standard energy estimate, based on the Poincar\'{e} inequality \eqref{eq:poinwhole}, only implies the naive $L^2$-contraction, rather than the exponential decay, unless $\maf{H}_+ = \maf{H}$ (see \cref{rem:coercive}). It means that new proof techniques are needed. In particular, we shall consider the time-averaged evolution \eqref{eq:average}, instead of a single time slice (i.e., $T \to 0$ in $\frac{1}{T} \int_t^{t + T} \norm{\mc{P}_s X}^2_{2,\si} \ud s$); see also \cref{rem:sing}. 
\end{remark}

To estimate the exponential decay rate of Lindblad dynamics \eqref{eq:qmslindblad}, we introduce the following assumption, inspired by the hypocoercivity theory of classical kinetic equations. See \cref{rem:class1} and \cref{rem:analog} for detailed discussions. 

\begin{assumption} \label{ass:1}
The subspace $\ker({\mc{L}^D|_{\maf{H}}})$ is non-trivial, i.e., $\dim \ker({\mc{L}^D|_{\maf{H}}}) \ge 1$. In addition, the coherent term satisfies 
    \begin{equation} \label{assp:1}
        \Pi_0 \mc{L}^H \Pi_0 = 0 \,.
    \end{equation}
\end{assumption}

With the help of the decomposition \eqref{eq:decomspace} and \cref{ass:1}, the generator $\mc{L}$, restricted on the space $\maf{H}$, can be reformulated into the following block form on $\maf{H}_0 \oplus \maf{H}_+$: 
\begin{align} \label{eq:block}
    \mc{L}|_{\maf{H}} = \mm 0 & \mc{L}^H_{0+}\\ \mc{L}^H_{+0} & \mc{L}_{++}  \nn,
\end{align}
where 
\begin{align*}
    \mc{L}^H_{+0} := \Pi_+  \mc{L}^H \Pi_0 = \mc{L}^H \Pi_0\q \text{and}\q  \mc{L}^H_{0+} := \Pi_0  \mc{L}^H \Pi_+ = - (\mc{L}^H_{+0})^\star\,,
\end{align*}
and by the inequality \eqref{eq:poinwhole} and Lax-Milgram theorem, $\mc{L}_{++} : = \Pi_+ \mc{L} \Pi_+$ is invertible on $\maf{H}_+$ with $$\norm{\mc{L}_{++}^{-1}}_{\maf{H}_+\to \maf{H}_+} \le \lad_D^{-1}\,.$$

Noting that the QMS $\mc{P}_t$ of \eqref{eq:qmslindblad} is assumed to be primitive, it readily follows that $\mc{L}$ is invertible on $\maf{H}$, and by the block form \eqref{eq:block} under \cref{ass:1}, there necessarily holds 
\begin{align*}
    \ker(\mc{L}^H_{+0}) = \ker(\Pi_0) \,,
\end{align*}
equivalently, $\mc{L}^H_{+0}$ is injective on $\maf{H}_0$. It means that $(\mc{L}^H_{+0})^\star\mc{L}^H_{+0} X = 0$ for a $X \in \maf{H}_0$ if and only if $X = 0$. Thus, 
there exists $s_H > 0$ such that for any $X \in \maf{H}_0$, 
\begin{equation} \label{assp:2}
        \norm{\mc{L}_{+0}^H X}_{2,\si} \ge s_H \norm{X}_{2,\si}\,.
\end{equation}
Here, $s_H$ characterizes the singular value gap of $\mc{L}_{+0}^H$. Without loss of generality, we let $\lad_D$ and $s_H$ be the largest (optimal) constants such that \cref{eq:coerciveLd,assp:2} hold:
\begin{align} \label{eq:optconst}
    \lad_D := \inf_{X \in \maf{H}_+\backslash\{0\}} \frac{\mc{E}_{\mc{L}^D}(X,X)}{\norm{X}_{2,\si}^2}\,,\q s_H := \inf_{X \in \maf{H}_0\backslash\{0\}} \frac{ \norm{\mc{L}_{+0}^H X}_{2,\si} }{\norm{X}_{2,\si}}\,.
\end{align}

\begin{remark} \label{rem:analog}
As discussed in \cref{rem:coercive}, the condition $\dim \ker({\mc{L}^D|_{\maf{H}}}) \ge 1$ in \cref{ass:1} is equivalent to the hypocoercivity of $\mc{L}$. Indeed, if $\ker({\mc{L}^D|_{\maf{H}}}) = 0$, then $\Pi_0$ and $s_H$ have to be zero as well. In this case (i.e., $\dim \ker(\mc{L}^D) = 1$), the Lindbladian is coercive and the analysis in this section based on the space-time Poincar\'{e} inequality does not apply.  

The condition \eqref{assp:1} 
is motivated by \cite{dolbeault2015hypocoercivity}*{Assumption (H3)} for linear kinetic equations; see also \cite{bernard2022hypocoercivity}*{Assumption 2.1}.
For classical dynamics, inequalities \eqref{eq:coerciveLd} and \eqref{assp:2} are usually referred to as the microscopic and macroscopic coercivities, respectively, which are essentially the Poincar\'{e} inequalities of measures $\kappa$ and $\mu$ (see \eqref{eq:invariantclass}). 

Note that $\dim \ker({\mc{L}^D|_{\maf{H}}}) \ge 1$ is a quite natural condition in practical open quantum dynamics since the quantum noise or decoherence usually happens only locally on a large quantum system (see \cref{sec:infinite} for examples), while the assumption \eqref{assp:1} is a technical one and could be restrictive for applications. It would be interesting to relax \eqref{assp:1} and we leave it for future investigation.

\end{remark}

Before we state the main results, we introduce some additional notations that will be used in what follows. We consider the time interval $[0, T]$ with the normalized Lebesgue measure:
\begin{equation} \label{eq:normalles}
    \rd \lad(t) := \frac{1}{T}\chi_{[0,T]}(t) \ud t\,.
\end{equation}
We denote by $H_\lad^k([0,T])$ with $k \ge 0$ the standard Sobolev space of functions $f(t)$ on $[0,T]$, and by $\l f, g \r_{2,\lad} = \int_0^T \bar{f} g \ud \lad$ the inner product with respect to $\lad$. 

Given a full-rank quantum state $\si$, for a time-dependent operator $(X_t)_{t \in [0,T]} \subset \mc{B}(\mc{H})$, we define its state average, time average, and the state-time average by
\begin{equation*}
    \l X_t \r_\si := \tr(\si X_t)\,,\q \l X_t \r_{\lad} := \frac{1}{T}\int_0^T X_t \ud t\,,\q \l X_t \r_{\lad\otimes \si} := \l \l X_t\r_\si \r_\lad = \frac{1}{T} \int_0^T \tr(\si X_t) \ud t \,,
\end{equation*}
respectively. Here $\lad \otimes \si$ can be viewed as an augmented quantum state on the product space $[0,T] \times \mc{B}(\mc{H})$. Moreover, 
for $X_t, Y_t \in [0,T] \times \mc{B}(\mc{H})$, we define the inner product:
\begin{align*}
    \l X_t, Y_t \r_{2, \lad \otimes \si} := \frac{1}{T} \int_0^T \l X_t, Y_t\r_{\si,1/2} \ud t\,,
\end{align*}
where $\l \dd, \dd \r_{\si,1/2}$ is the KMS inner product. For $Y_t = X_t$, we write $\norm{X_t}_{2,\lad \otimes \si}^2 = \l X_t, X_t \r_{2, \lad \otimes \si}$. This allows us to introduce the Sobolev space of time-dependent operators: letting $\maf{S}$ be a subspace of $\mc{B}(\mc{H})$, for an integer $k \ge 0$, 
\begin{align*}
    H^k_{\lad \otimes \si}([0,T];\maf{S}): = \left\{X_t \in [0,T] \times \maf{S}\,;\  \norm{\p_t^j X_t}_{2,\lad \otimes \si} < \infty\,, \q j = 0,1,\ldots, k \right\}\,.
\end{align*}

We now present the main result of this section. 

\begin{theorem} \label{thm:ratest}
Under \cref{ass:1}, let $\lad_D$ and $s_H$ be constants introduced in \cref{eq:coerciveLd,assp:2}. For a given $X_0 \in \bh$, we denote 
$X_t: = \mc{P}_t X_0$. It holds that for any $T > 0$, the time average of $\norm{X_t}_{2,\si}$ over the interval $[t,t+T]$ exponentially decays in $t$:
\begin{align} \label{eq:averagedecay}
    \frac{1}{T}\int_t^{t + T} \norm{X_\tau - \l X_\tau\r_\si}_{2,\si}^2 \ud \tau  \le e^{-\nu t} \frac{1}{T}\int_0^{T} \norm{X_\tau - \l X_\tau\r_\si}_{2,\si}^2 \ud \tau\,,
\end{align}
with convergence rate:
\begin{align} \label{eq:ratelower}
    \nu = \frac{\lad_D}{C_{1,T}^2 + \lad_D C_{2,T}^2}\,. 
\end{align}
Here constants $C_{1,T}$ and $C_{2,T} := C_{2,T,\beta = 0}$ are defined in \cref{thm:tspoincare} below. In addition, we have 
\begin{equation} \label{eq:expdecay}
      \norm{X_{t} - \l X_t \r_{\si}}_{2,\si}^2 \le C_T e^{-\nu t} \norm{X_0 - \l X_0 \r_{\si}}_{2,\si}^2\,,\q \forall\, t \ge 0\,, 
\end{equation}
where 
\begin{align*}
  C_T := e^{\nu T} \le e^{\mathcal{O}\left(\frac{1}{\lad_D T + (\lad_D T)^{-1}} \right)} = \mc{O}(1)\,.
\end{align*} 
\end{theorem}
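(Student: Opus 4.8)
The plan is to deduce \cref{thm:ratest} from the space-time Poincar\'e inequality \cref{thm:tspoincare} in two stages: first establish the time-averaged decay \eqref{eq:averagedecay}, then convert it into the pointwise decay \eqref{eq:expdecay}. Throughout, set $Y_t := X_t - \l X_t\r_\si \in \maf{H}$; since $\mc{L}_\alpha^\dag(\si)=0$ the average $\l X_t\r_\si$ is conserved and $Y_t = \mc{P}_t Y_0$ solves the same evolution inside $\maf{H}$, so it suffices to work on $\maf{H}$. The starting point is the elementary energy identity: differentiating $\norm{Y_t}_{2,\si}^2$ and using that $\mc{L}^H$ is anti-Hermitian for $\l\cdot,\cdot\r_{\si,1/2}$ gives
\begin{equation*}
    \frac{\rd}{\rd t}\norm{Y_t}_{2,\si}^2 = -2\,\mc{E}_{\mc{L}^D}(Y_t,Y_t) = -2\norm{\na Y_t}_{2,\si}^2 \le 0\,,
\end{equation*}
so $t\mapsto\norm{Y_t}_{2,\si}^2$ is nonincreasing and, integrating over $[0,T]$, controls $\int_0^T\norm{\na Y_\tau}^2_{2,\si}\ud\tau$ by $\tfrac12\norm{Y_0}_{2,\si}^2$.

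Next I would apply \cref{thm:tspoincare} to the time-dependent operator $(Y_\tau)_{\tau\in[0,T]}$, which (with $\beta=0$) should read, in the notation of the theorem, as a bound of the form
\begin{equation*}
    \int_0^T \norm{Y_\tau - \l Y_\tau\r_{\lad}}_{2,\si}^2\ud\tau \le C_{1,T}^2 \int_0^T \norm{\na Y_\tau}_{2,\si}^2\ud\tau + C_{2,T}^2 \int_0^T \norm{(\p_\tau + \mc{L}^H)Y_\tau}_{\cdot}^2\ud\tau\,,
\end{equation*}
where on the solution the "transport" term $(\p_\tau + \mc{L}^H)Y_\tau = \mc{L}^D Y_\tau = -\na^\star\na Y_\tau$ is again controlled by $\norm{\na Y_\tau}_{2,\si}$ via \eqref{eq:coerciveLd}/boundedness of $\na$ on the finite-dimensional space (this is exactly where the factor $\lad_D C_{2,T}^2$ enters the denominator of \eqref{eq:ratelower}). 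Combining this with the energy identity and the microscopic coercivity \eqref{eq:coerciveLd} on $\maf{H}_+$, together with the observation that $\int_0^T\norm{\l Y_\tau\r_\lad}^2$ is handled because $\l Y_\tau\r_\lad$ lies in the appropriate averaged subspace, yields a differential/integral inequality of the shape $\tfrac1T\int_t^{t+T}\norm{Y_\tau}_{2,\si}^2\ud\tau \le -\,c\,\tfrac{\rd}{\rd t}\big(\tfrac1T\int_t^{t+T}\norm{Y_\tau}_{2,\si}^2\ud\tau\big)$ with $c = (C_{1,T}^2 + \lad_D C_{2,T}^2)/\lad_D = 1/\nu$; Gr\"onwall then gives \eqref{eq:averagedecay}.

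For the pointwise bound \eqref{eq:expdecay}, I would use monotonicity of $\norm{Y_t}_{2,\si}^2$ once more: for $t\ge T$,
\begin{equation*}
    \norm{Y_t}_{2,\si}^2 \le \frac1T\int_{t-T}^{t}\norm{Y_\tau}_{2,\si}^2\ud\tau \le e^{-\nu(t-T)}\,\frac1T\int_0^T\norm{Y_\tau}_{2,\si}^2\ud\tau \le e^{-\nu(t-T)}\norm{Y_0}_{2,\si}^2\,,
\end{equation*}
which is $\norm{Y_t}_{2,\si}^2 \le e^{\nu T}e^{-\nu t}\norm{Y_0}_{2,\si}^2$, i.e.\ $C_T = e^{\nu T}$; for $t\in[0,T]$ the same bound holds trivially by contraction since $e^{\nu T}e^{-\nu t}\ge 1$. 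Finally, to see $C_T = \Or(1)$ one substitutes the explicit forms of $C_{1,T}, C_{2,T}$ from \cref{thm:tspoincare} — these scale polynomially in $T$ and $T^{-1}$ — so that $\nu T$ is of order $(\lad_D T + (\lad_D T)^{-1})^{-1}$, which is bounded uniformly in $T$; optimizing over $T$ (balancing the two terms, $\lad_D T \sim 1$) gives the stated $\mc{O}(1)$ prefactor. The main obstacle I anticipate is not any single step here but the careful bookkeeping of the transport term: verifying that $(\p_t+\mc{L}^H)Y_t$ appearing in \cref{thm:tspoincare} is, along the actual flow, exactly $\mc{L}^D Y_t$ and that its $L^2_{\lad\otimes\si}$-norm is correctly absorbed — with the right power of $\lad_D$ — into the energy dissipation $\int\norm{\na Y_\tau}^2$, since a sloppy estimate there changes the rate \eqref{eq:ratelower} and could destroy the $\Or(1)$ claim for $C_T$.
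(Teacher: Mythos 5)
Your proposal follows the paper's proof essentially verbatim: the same energy identity $\frac{\rd}{\rd t}\norm{Y_t}_{2,\si}^2 = -2\,\mc{E}_{\mc{L}^D}(Y_t,Y_t)$, the same application of the space-time Poincar\'e inequality with $\beta\to 0$, the same absorption of the transport term $(\p_t - \mc{L}^H)Y_t = \mc{L}^D Y_t$ into the dissipation via a weighted ($\eta$) convex combination, Gr\"onwall, and the same conversion of averaged to pointwise decay via monotonicity of $\norm{Y_t}_{2,\si}$. Two small clarifications: the space-time Poincar\'e inequality subtracts the \emph{state-time} average $\l Y_t\r_{\lad\otimes\si}$, which vanishes identically once $\l Y_0\r_\si=0$ is assumed (so there is no residual $\l Y_\tau\r_\lad$ term to "handle"), and the inequality $\norm{(\beta-\mc{L}^D)^{-1/2}\mc{L}^D Y}_{2,\si}^2 \le \mc{E}_{\mc{L}^D}(Y,Y)$ comes from the resolvent comparison $(-\mc{L}^D)^{-1}\ge(\beta-\mc{L}^D)^{-1}$ on $\maf{H}_+$ rather than from boundedness of $\na$ or from \eqref{eq:coerciveLd}.
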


Note that both $\nu$ and $C_T$ depend on the parameter $T$, which can be chosen to optimize the decay estimate (see \cref{subsec:scaling}). The proof follows from a quantum space-time Poincar\'{e} inequality established in \cref{thm:tspoincare} below and the standard energy estimate. For ease of readability, we shall defer all the proofs and some related technical lemmas to \cref{sec:lemaproof}. 

\begin{theorem} \label{thm:tspoincare}
Let $s_H$ be the constant given in \cref{assp:2} under \cref{ass:1}. It holds that 
for any constant $\beta > 0$ and $X_t \in L^2_{\lad \otimes \si}([0,T]; \mc{B}(\mc{H}))$, 
\begin{align*} 
    \norm{X_t - \l X_t \r_{\lad \otimes \si}}_{2,\lad \otimes \si} \le C_{1,T} \norm{(1 - \Pi_0) X_t}_{2,\lad \otimes \si} + C_{2,T,\beta}\norm{(\beta - \mc{L}^D)^{-1/2}(-\p_t + \mc{L}^H) X_t}_{2, \lad \otimes \si}\,,
\end{align*}
where 
\begin{equation} \label{constant1}
    C_{1,T} := \sqrt{2} + 14 + \frac{12\sqrt{5}}{s_H T} + \frac{24}{(s_H T)^2} +  \sqrt{2} \max\left\{\frac{6}{s_H^2 T} + \frac{3}{s_H}, \frac{T}{\pi} \right\} \Norm{\mc{L}^H \Pi_+}_{\maf{H} \to \maf{H}} \,,
\end{equation}
and
\begin{equation} \label{constant2}
    C_{2,T,\beta} := \sqrt{2} \Norm{(\beta -  \mc{L}^D)^{1/2}}_{2 \to 2} \left(\max\left\{\frac{6}{s_H^2 T} + \frac{3}{s_H}, \frac{T}{\pi} \right\} + \max\left\{\frac{1}{s_H}, \frac{T}{\pi} \right\} \right)\,. 
\end{equation}
\end{theorem}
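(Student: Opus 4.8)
This is the quantum counterpart of the space--time Poincar\'e inequalities of Armstrong--Mourrat and Cao--Lu--Wang, so the plan is to follow their variational scheme: bound the ``macroscopic'' component $\Pi_0 X_t$ by pairing it against a test operator that inverts the macroscopic transport $\mc{L}^H_{0+}$ and vanishes at the time endpoints, while the ``microscopic'' component $(1-\Pi_0)X_t$ lands on the right-hand side for free. First I would reduce: since $\mc{L}^H\mi=\mc{L}^D\mi=\p_t\mi=0$, subtracting $\l X_t\r_{\lad\otimes\si}\mi$ changes neither side, so assume $\l X_t\r_{\lad\otimes\si}=0$ and write $X_t=c(t)\mi+Y_t$ with $Y_t:=X_t-\tr(\si X_t)\mi\in\maf{H}$ and $c(t):=\tr(\si X_t)$ a scalar function with $\l c\r_\lad=0$. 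The two pieces are KMS-orthogonal, and on the scalar one the claim is Wirtinger's inequality $\norm{c}_{2,\lad}\le\frac{T}{\pi}\norm{c'}_{2,\lad}$ together with $(\beta-\mc{L}^D)^{-1/2}(c'\mi)=\beta^{-1/2}c'\mi$, which is absorbed into the $C_{2,T,\beta}$ term because $\Norm{(\beta-\mc{L}^D)^{1/2}}_{2\to2}\ge\beta^{1/2}$. So it suffices to take $X_t\in\maf{H}$; writing $u_t:=\Pi_0 X_t$ and $w_t:=(1-\Pi_0)X_t$, orthogonality gives $\norm{X_t}_{2,\lad\otimes\si}^2=\norm{u_t}_{2,\lad\otimes\si}^2+\norm{w_t}_{2,\lad\otimes\si}^2$, and the whole content becomes the bound on $\norm{u_t}_{2,\lad\otimes\si}$.

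\textbf{Continuity equation and endpoint control.} Set $R_t:=(-\p_t+\mc{L}^H)X_t$, i.e.\ $\p_t X_t=\mc{L}^H X_t-R_t$. Projecting with $\Pi_0$ and using \cref{ass:1}, which forces $\Pi_0\mc{L}^H\Pi_0=0$, gives $\p_t u_t=\mc{L}^H_{0+}w_t-\Pi_0 R_t$; projecting with $\Pi_+$ gives $\p_t w_t=\mc{L}^H_{+0}u_t+\Pi_+\mc{L}^H w_t-\Pi_+ R_t$. Since $\Pi_0$ commutes with $\mc{L}^D$ and $\mc{L}^D\Pi_0=0$, the negative-order datum converts to an $L^2$-bound, $\norm{\Pi_0 R_t}_{2,\si}\le\Norm{(\beta-\mc{L}^D)^{1/2}}_{2\to2}\norm{(\beta-\mc{L}^D)^{-1/2}R_t}_{2,\si}$, whence $\norm{\p_t u_t}_{2,\lad\otimes\si}\lesssim\Norm{\mc{L}^H\Pi_+}_{\maf{H}\to\maf{H}}\norm{w_t}_{2,\lad\otimes\si}+\Norm{(\beta-\mc{L}^D)^{1/2}}_{2\to2}\norm{(\beta-\mc{L}^D)^{-1/2}R_t}_{2,\lad\otimes\si}$. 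The other ingredient I would record is that pointwise endpoint values are tamed by the fundamental theorem of calculus and Cauchy--Schwarz in time: $\max\{\norm{u_0}_{2,\si},\norm{u_T}_{2,\si}\}\le\norm{u_t}_{2,\lad\otimes\si}+T\norm{\p_t u_t}_{2,\lad\otimes\si}$, and the last term is now controlled by $w_t$ and $R_t$ through the continuity equation.

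\textbf{Test operator and energy pairing.} Using the macroscopic coercivity \eqref{assp:2} --- equivalently, that $\mc{L}^H_{0+}$ has a right inverse of norm $\le s_H^{-1}$ on $\ran(\mc{L}^H_{+0})$ --- I would invoke the divergence-type lemma (\cref{lem:diverest}) to construct $\Phi_t\in H^1_{\lad\otimes\si}([0,T];\maf{H}_+)$ with $\Phi_0=\Phi_T=0$ solving $\mc{L}^H_{0+}\Phi_t=u_t-\ell_t$, where $\ell_t$ is a time-localized endpoint correction built from $u_0,u_T$ on intervals of length $\simeq T$, satisfying $\norm{\Phi_t}_{2,\lad\otimes\si}\lesssim\max\{s_H^{-1},\tfrac{T}{\pi}\}\norm{u_t}_{2,\lad\otimes\si}$, $\norm{\p_t\Phi_t}_{2,\lad\otimes\si}\lesssim\max\{\tfrac{6}{s_H^2T}+\tfrac{3}{s_H},\tfrac{T}{\pi}\}\big(\norm{\p_t u_t}_{2,\lad\otimes\si}+\tfrac1T\norm{u_t}_{2,\lad\otimes\si}\big)$, and $|\l u_t,\ell_t\r_{2,\lad\otimes\si}|\le\tfrac14\norm{u_t}_{2,\lad\otimes\si}^2+\mathcal O(1)\,T\,\norm{u_t}_{2,\lad\otimes\si}\norm{\p_t u_t}_{2,\lad\otimes\si}$ (the two maxima arising from choosing between the direct $s_H^{-1}$-bound and a Wirtinger-in-time bound, the $\tfrac14$ from the localization length being a small fraction of $T$). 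Then, since $\mc{L}^H$ is KMS-anti-self-adjoint and $\Pi_0$ KMS-self-adjoint,
\[
\norm{u_t}_{2,\lad\otimes\si}^2=\l u_t,\mc{L}^H_{0+}\Phi_t\r_{2,\lad\otimes\si}+\l u_t,\ell_t\r_{2,\lad\otimes\si}=-\l\mc{L}^H u_t,\Phi_t\r_{2,\lad\otimes\si}+\l u_t,\ell_t\r_{2,\lad\otimes\si}.
\]
I substitute $\mc{L}^H u_t=\mc{L}^H X_t-\mc{L}^H w_t=\p_t X_t+R_t-\mc{L}^H w_t$ and integrate by parts in time: the boundary term vanishes because $\Phi_0=\Phi_T=0$, and $\l u_t,\p_t\Phi_t\r_{\si,1/2}=0$ because $\p_t\Phi_t\in\maf{H}_+$, so $-\l\p_t X_t,\Phi_t\r_{2,\lad\otimes\si}=\l w_t,\p_t\Phi_t\r_{2,\lad\otimes\si}$. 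This gives
\[
\norm{u_t}_{2,\lad\otimes\si}^2=\l w_t,\p_t\Phi_t\r_{2,\lad\otimes\si}-\l R_t,\Phi_t\r_{2,\lad\otimes\si}+\l\mc{L}^H w_t,\Phi_t\r_{2,\lad\otimes\si}+\l u_t,\ell_t\r_{2,\lad\otimes\si}.
\]
Each term is then Cauchy--Schwarz: the first via $\norm{\p_t\Phi_t}\,\norm{w_t}$; the second by writing $\l R_t,\Phi_t\r=\l(\beta-\mc{L}^D)^{-1/2}R_t,(\beta-\mc{L}^D)^{1/2}\Phi_t\r$ and bounding $(\beta-\mc{L}^D)^{1/2}$ on $\maf{H}_+$ by $\Norm{(\beta-\mc{L}^D)^{1/2}}_{2\to2}$; the third via $\Norm{\mc{L}^H\Pi_+}_{\maf{H}\to\maf{H}}\norm{w_t}\,\norm{\Phi_t}$; the fourth as above. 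Feeding in the bounds of the previous paragraph, absorbing $\tfrac14\norm{u_t}^2$, and dividing by $\norm{u_t}_{2,\lad\otimes\si}$ yields $\norm{u_t}_{2,\lad\otimes\si}\le C_{1,T}\norm{w_t}_{2,\lad\otimes\si}+C_{2,T,\beta}\norm{(\beta-\mc{L}^D)^{-1/2}R_t}_{2,\lad\otimes\si}$, with the $T$-growing endpoint contributions lodged in the $\Norm{\mc{L}^H\Pi_+}$-coefficient (inside the maxima) and the universal numbers $\sqrt2+14$, $\tfrac{12\sqrt5}{s_HT}$, $\tfrac{24}{(s_HT)^2}$ coming from tracking the constants in \cref{lem:diverest}.

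\textbf{Main obstacle.} Everything after the construction of $\Phi_t$ is bookkeeping of Cauchy--Schwarz; the genuine difficulty is the time-boundary terms. Inverting $\mc{L}^H_{0+}\Phi_t=u_t$ slice by slice gives $\norm{\Phi_t}\le s_H^{-1}\norm{u_t}$ but no control of $\Phi_t$ at $t=0,T$, so the integration by parts would leak uncontrollable quantities $\l X_t,\Phi_t\r|_{t=0,T}$. Resolving this --- that is, proving \cref{lem:diverest} --- requires building the endpoint correction $\ell_t$ localized near $t=0,T$ so that simultaneously (i) $\Phi_t$ vanishes there, (ii) the spurious term $\l u_t,\ell_t\r$ carries a coefficient strictly below $1$ on $\norm{u_t}^2$ uniformly in $T>0$, and (iii) the extra time-derivatives produced by the cutoff, which degrade $\norm{\p_t\Phi_t}$ by powers of $(s_HT)^{-1}$, are still only ever paired against the space--time norm of $w_t$ rather than pointwise values of $w$. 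Keeping these compatible, together with the conversion of $\norm{u_0},\norm{u_T}$ into space--time norms via the fundamental theorem of calculus and the continuity equation, is the crux and is exactly what forces the somewhat elaborate constants in \eqref{constant1}--\eqref{constant2} and the blow-up as $T\to0$.
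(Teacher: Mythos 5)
Your high-level plan is the right one and agrees with the paper's: reduce to estimating the $\Pi_0$-component, pair it against a test operator that inverts $\mc{L}^H_{0+}$ and vanishes at $t=0,T$, integrate by parts in time, and Cauchy--Schwarz everything, placing the $(1-\Pi_0)$-piece and the $(\beta-\mc{L}^D)^{-1/2}(-\p_t+\mc{L}^H)X_t$-piece on the right. The algebra in your ``energy pairing'' paragraph (anti-self-adjointness of $\mc{L}^H$, boundary-term cancellation via $\Phi_0=\Phi_T=0$, and the orthogonality $\l u_t,\p_t\Phi_t\r_{\si,1/2}=0$) is correct and, after the translation $\Phi_t\leftrightarrow -Y^{(1)}_t$ and $\ell_t\leftrightarrow -\p_t Y^{(0)}_t$, coincides term by term with the paper's identity \eqref{auxeq:space1}--\eqref{auxeq:space3}.

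Where you diverge is in what the test objects are and how they are built, and this is where the genuine gap sits. The paper's \cref{lem:diverest} produces a \emph{pair} $(Y^{(0)}_t,Y^{(1)}_t)\in\wb{\maf{H}}_0\times\maf{H}_+$ solving the full space-time divergence equation $-\p_t Y^{(0)}_t+(\mc{L}^H_{+0})^\star Y^{(1)}_t=F_t$, \emph{both} vanishing at $t=0,T$, with all four norms $\|Y^{(j)}_t\|$, $\|\p_t Y^{(j)}_t\|$ bounded by $\|F_t\|_{2,\lad\otimes\si}$; this is achieved by decomposing $F_t$ into the $\ker(\maf{L}^H)$-harmonic part (treated by explicit separation of variables) and its orthocomplement (treated by an elliptic Neumann problem, \cref{lem:prioriest}). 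Your version --- a single $\Phi_t\in\maf{H}_+$ solving the slice-by-slice equation $\mc{L}^H_{0+}\Phi_t=u_t-\ell_t$ with a near-boundary cutoff $\ell_t$ --- is \emph{not} what \cref{lem:diverest} provides, so you cannot ``invoke'' it as stated. You would have to prove the analogue yourself, and here several things do not follow from what you have written: (i) the bound $\|\Phi_t\|_{2,\lad\otimes\si}\lesssim\max\{s_H^{-1},T/\pi\}\|u_t\|_{2,\lad\otimes\si}$ has no $T/\pi$ branch under your construction, since $\Phi_t=(\mc{L}^H_{0+})^+(u_t-\ell_t)$ only gives $s_H^{-1}\|u_t-\ell_t\|$, and indeed you separated the scalar mode at the start precisely so that time-Wirtinger would not reappear here; (ii) the $\max\{6/(s_H^2T)+3/s_H,\,T/\pi\}$ bound for $\|\p_t\Phi_t\|$ and the absorption constant $\tfrac14$ in $|\l u_t,\ell_t\r|$ are asserted, not derived, and with a naive cutoff of width $\delta T$ the $\p_t\ell_t$ term introduces a $(\delta T)^{-1}$ factor whose product with the $s_H^{-1}$ from the pseudo-inverse does not reproduce the paper's combination $6/(s_H^2T)+3/s_H$; and (iii) the numerical constants $\sqrt2+14$, $12\sqrt5/(s_HT)$, $24/(s_HT)^2$ are fixed by the detailed estimates \eqref{eq:mainest}--\eqref{eq:mainestdev} of \cref{lem:diverest} (in particular the separation-of-variables bounds \eqref{auxeq11}--\eqref{auxeq22}) and cannot be read off from your sketch. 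Finally, the paper handles the quantity that you call $\l u_t,\ell_t\r$ not by an absorption trick with a $\tfrac14$ margin, but by Cauchy--Schwarz directly against $\|Y^{(0)}_t\|$ and $\|\mc{L}^H_{+0}Y^{(0)}_t\|$, both of which are controlled by \cref{lem:diverest,lem:prioriest}; your absorption route is plausible but would give different (and as written unverified) constants.

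In short: right strategy, correct identification of the crux (vanishing-endpoint test operators), but the construction of those test operators --- which in the paper is the whole content of \cref{lem:diverest} and its separation-of-variables machinery --- is replaced by an endpoint-cutoff heuristic that is not carried out, mis-attributed to a lemma with a different statement, and whose claimed estimates are not internally consistent.
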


\begin{remark} \label{rem:sing}
The blow-up of constants $C_{1,T}$ and $C_{2,T,\beta}$ in \eqref{constant1} and \eqref{constant2} as $T \to 0$ is expected from the hypocoercivity of the Lindbladian \eqref{eq:qmslindblad}. Indeed, if there exists some constant $C > 0$ such that $\limsup_{T \to 0} C_{1, T} + C_{2, T, \beta} \le C$. Then from \eqref{eq:averagedecay}, we can readily derive, by letting $T \to 0$, 
\begin{equation*}
       \norm{X_{t} - \l X_t \r_{\si}}_{2,\si}^2 \le  e^{-\nu t} \norm{X_0 - \l X_0 \r_{\si}}_{2,\si}^2\,,\q \forall\, t \ge 0\,, 
\end{equation*}
for some $\nu > 0$, that is, \eqref{eq:expdecay} holds with prefactor $C_T = 1$. This means that $\mc{L}$ in \eqref{eq:qmslindblad} is coercive, equivalently, $\dim \ker(\mc{L^D}) = 1$, which contradicts with \cref{ass:1} (see \cref{rem:coercive}). 
\end{remark}

\subsection{Scaling of the convergence rate} \label{subsec:scaling}

We next discuss the scaling of the exponential convergence rate $\nu$ in \eqref{eq:expdecay} with respect to various model parameters. This allows us to optimally select the parameters to enhance the dissipation of QMS. 

\medskip

\noindent 
\emph{Optimizing time interval length $T$.} Recall that in the exponential decay \eqref{eq:expdecay}, both the multiplicative constant $C_T$ and the convergence rate $\nu$ depend on the choice of $T$. By \cref{constant1,constant2}, it holds that $C_{1, T}, C_{2, T, \beta} \to + \infty$, if we set $T \to 0$ or $T \to + \infty$ (see also \cref{rem:sing}). Such choice is not optimal, as in this case, the convergence rate estimate \eqref{eq:ratelower} reduces to the trivial one $\nu = 0$, and \cref{eq:expdecay} is simply the contraction: $ \norm{X_{t} - \l X_t \r_{\si}}_{2,\si} \le \norm{X_0 - \l X_0 \r_{\si}}_{2,\si}$. Thus the optimum should be achieved somewhere intermediate.  We note that this parameter is only used in the estimate of convergence, and does not affect the actual QMS.

The optimal $T$ for maximizing the rate estimate could be formally written as 
\[
T = \arg\max_T \frac{\lad_D}{C_{1, T}^2 + \lad_D C_{2, T}^2}\,.
\]
However, it is hard to derive its explicit formula. We will consider a few asymptotic regimes here. 
When $s_H \to 0$, using the ansatz $T = \Theta(s_H^t)$ with $t \in \R$, we have that if $t \ge -1$,  
\begin{align*}
    C_{1,T} = \Theta\left(\frac{1}{s_H^{2 + 2t}} + \frac{1}{s_H^{2 + t}} \right)\,,\q C_{2,T} = \Theta\left(\frac{1}{s_H^{2 + t}} \right)\,,
\end{align*}
and if $t < - 1$, 
\begin{align*}
    C_{1,T} = C_{2,T} = \Theta\left(s_H^t \right)\,, 
\end{align*}
which implies 
\begin{equation*}
    \frac{\lad_D}{C_{1,T}^2 + \lad_D C_{2,T}^2} = \begin{dcases}
        \Theta(s_H^{-2t}) & \text{if $t \le -1$}\,,\\
        \Theta(s_H^{4+2t}) & \text{if $0 \ge t \ge -1$}\,, \\
        \Theta(s_H^{4+4t}) & \text{if $t \ge 0$}\,.
    \end{dcases}
\end{equation*}
Similarly, when $s_H = \Theta(\Norm{\mc{L}^H \Pi_+}_{\maf{H} \to \maf{H}}) \to + \infty$, we have that if $t \ge -1$, then $C_{1, T} = \Theta(s_H^{1 + t})$ and $C_{2, T} = \Theta(s_H^t)$, while if $t < -1$, then $C_{1, T} = \Theta(s_H^{-2 t - 2})$ and  $C_{2, T} = \Theta(s_H^{- t - 2})$. It follows that 
\begin{equation*}
    \frac{\lad_D}{C_{1,T}^2 + \lad_D C_{2,T}^2} = \begin{dcases}
        \Theta(s_H^{4+4t}) & \text{if $t \le -1$}\,,\\
        \Theta(s_H^{-2-2t}) & \text{if $0 \ge t \ge -1$}\,, \\
        \Theta(s_H^{-2t}) & \text{if $t \ge 0$}\,.
    \end{dcases}
\end{equation*}
This suggests that the optimal $T$ is proportional to $s_H^{-1}$ (at least when $s_H \ll 1$ and $s_H \gg 1$). 
Therefore, we set $T = 3/s_H$, and a direct computation gives 
\begin{equation*} 
    C_{1,T} \le 28 + \frac{5 \sqrt{2}}{s_H}  \Norm{\mc{L}^H \Pi_+}_{\maf{H} \to \maf{H}} \,,\q  C_{2,T} =  \big\|(  \mc{L}^D)^{1/2}\big\|_{2 \to 2} \frac{6 \sqrt{2}}{s_H} \,.
\end{equation*}
Then, \cref{thm:ratest} can be simplified as follows. 

\begin{corollary} \label{coro:simplified}
    Under \cref{ass:1}, let $\lad_D$ and $s_H$ be constants in \cref{eq:optconst}. For a $X_0 \in \bh$, we denote 
$X_t: = \mc{P}_t X_0$. Then, it holds that 
\begin{equation} \label{eq:expdecay2}
      \norm{X_{t} - \l X_t \r_{\si}}_{2,\si}^2 \le C_T e^{-\nu t} \norm{X_0 - \l X_0 \r_{\si}}_{2,\si}^2\,,\q \forall\, t \ge 0\,, 
\end{equation}
with the multiplicative constant  $C_T = e^{3\nu/s_H} = \mc{O}(1)$ and the convergence rate:
\begin{equation} \label{rate}
    \begin{aligned} 
    \nu & = \frac{\lad_D s_H^2}{(28 s_H + 5 \sqrt{2} \Norm{\mc{L}^H \Pi_+}_{\maf{H} \to \maf{H}})^2 + 72 \lad_D \big\|\mc{L}^D\big\|_{2 \to 2}}  \\ &= \begin{dcases}
        \Theta(\lad_D) & \text{if $s_H = \Theta(\Norm{\mc{L}^H \Pi_+}_{\maf{H} \to \maf{H}}) \gg 1$}\,,\\
        \Theta\left(\frac{\lad_D s_H^{2}}{\norm{\mc{L}^H \Pi_+}_{\maf{H} \to \maf{H}}^2 +  \lad_D \|\mc{L}^D \|_{2 \to 2}}\right) & \text{if $s_H \ll 1$}\,.
    \end{dcases}
\end{aligned}
\end{equation}
\end{corollary}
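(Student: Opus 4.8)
The plan is to obtain \cref{coro:simplified} by specializing \cref{thm:ratest} to the time window $T = 3/s_H$ and then simplifying the two constants $C_{1,T}$ and $C_{2,T} = C_{2,T,\beta = 0}$ of \cref{thm:tspoincare}. The choice $T = \Theta(s_H^{-1})$ is not arbitrary: feeding the ansatz $T = \Theta(s_H^{t})$ into the explicit formulas \cref{constant1,constant2} and into $\nu = \lad_D/(C_{1,T}^2 + \lad_D C_{2,T}^2)$ shows, in both limiting regimes $s_H \to 0$ and $s_H = \Theta(\Norm{\mc{L}^H\Pi_+}_{\maf{H}\to\maf{H}}) \to \infty$, that the resulting rate is maximized up to constants at $t = -1$; fixing the proportionality constant to $3$ merely makes the subsequent arithmetic clean.

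First I would substitute $T = 3/s_H$, so that $s_H T = 3$, and resolve the maxima in \cref{constant1,constant2}. Since
\begin{equation*}
    \frac{6}{s_H^2 T} + \frac{3}{s_H} = \frac{5}{s_H} > \frac{3}{\pi s_H} = \frac{T}{\pi}\,,\qquad \frac{1}{s_H} > \frac{3}{\pi s_H} = \frac{T}{\pi}\,,
\end{equation*}
each maximum is attained at its first argument, namely $5/s_H$ and $1/s_H$ respectively. The purely numerical part of $C_{1,T}$ becomes $\sqrt{2} + 14 + \tfrac{12\sqrt{5}}{3} + \tfrac{24}{9} = \sqrt{2} + 14 + 4\sqrt{5} + \tfrac{8}{3}$, which I would bound crudely by $28$. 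Hence
\begin{equation*}
    C_{1,T} \le 28 + \frac{5\sqrt{2}}{s_H}\,\Norm{\mc{L}^H\Pi_+}_{\maf{H}\to\maf{H}}\,,\qquad C_{2,T} = \frac{6\sqrt{2}}{s_H}\,\bigl\|(-\mc{L}^D)^{1/2}\bigr\|_{2\to2}\,,
\end{equation*}
the second identity using $\beta = 0$ and the sum of the two maxima $\tfrac{5}{s_H} + \tfrac{1}{s_H} = \tfrac{6}{s_H}$.

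Next I would insert these bounds into the rate $\nu = \lad_D/(C_{1,T}^2 + \lad_D C_{2,T}^2)$ of \eqref{eq:ratelower}, multiply numerator and denominator by $s_H^2$, and use $\|(-\mc{L}^D)^{1/2}\|_{2\to2}^2 = \|\mc{L}^D\|_{2\to2}$ (valid since $-\mc{L}^D \ge 0$ on $\maf{H}$). This produces exactly
\begin{equation*}
    \nu = \frac{\lad_D s_H^2}{\bigl(28 s_H + 5\sqrt{2}\,\Norm{\mc{L}^H\Pi_+}_{\maf{H}\to\maf{H}}\bigr)^2 + 72\,\lad_D\Norm{\mc{L}^D}_{2\to2}}\,,
\end{equation*}
together with the decay bound \eqref{eq:expdecay2} and prefactor $C_T = e^{\nu T} = e^{3\nu/s_H}$; that $C_T = \mc{O}(1)$ is inherited from \cref{thm:ratest}, where $C_T \le e^{\mc{O}(1/(\lad_D T + (\lad_D T)^{-1}))}$ and $1/(\lad_D T + (\lad_D T)^{-1}) \le \tfrac12$ by AM--GM. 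Finally the two-case asymptotics in \eqref{rate} come out by inspection of this formula: if $s_H = \Theta(\Norm{\mc{L}^H\Pi_+}_{\maf{H}\to\maf{H}}) \gg 1$ the first summand in the denominator is $\Theta(s_H^2)$ and dominates, giving $\nu = \Theta(\lad_D)$; if $s_H \ll 1$ the $28 s_H$ term is negligible against the fixed quantities, so the denominator is $\Theta(\Norm{\mc{L}^H\Pi_+}_{\maf{H}\to\maf{H}}^2 + \lad_D\Norm{\mc{L}^D}_{2\to2})$ and $\nu = \Theta\bigl(\lad_D s_H^2/(\Norm{\mc{L}^H\Pi_+}_{\maf{H}\to\maf{H}}^2 + \lad_D\Norm{\mc{L}^D}_{2\to2})\bigr)$.

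There is no genuine obstacle here: the corollary is pure bookkeeping on top of \cref{thm:ratest}, whose proof (through the space-time Poincar\'e inequality \cref{thm:tspoincare}) carries all the analytic content and may be invoked as a black box. The only mildly delicate points are checking that $28 \ge \sqrt{2} + 14 + 4\sqrt{5} + \tfrac{8}{3} \approx 27.03$ and correctly identifying the active branch of each $\max\{\cdot,\cdot\}$ at $T = 3/s_H$, both elementary.
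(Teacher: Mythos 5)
Your proposal is correct and reproduces the paper's own derivation: the paper also fixes $T = 3/s_H$ (motivated by the same ansatz $T = \Theta(s_H^t)$ analysis), resolves the maxima in \cref{constant1,constant2} in the same way, bounds the numerical part of $C_{1,T}$ by $28$, and substitutes into $\nu = \lad_D/(C_{1,T}^2 + \lad_D C_{2,T}^2)$. One minor note in your favor: you write $\|(-\mc{L}^D)^{1/2}\|_{2\to 2}$ where the paper writes $\|(\mc{L}^D)^{1/2}\|_{2\to 2}$ as shorthand for the same thing, which is the cleaner choice since $-\mc{L}^D$ is the nonnegative operator.
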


\begin{remark}\label{rem:upplowbb}
The lower bound estimate \eqref{rate} for the rate $\nu$ is monotonically decreasing in $\norm{\mc{L}^D}_{2 \to 2}$ and $\norm{\mc{L}^H \Pi_+}_{\maf{H} \to \maf{H}}$ and increasing in $\lad_D$ and $s_H$.
Thus, for a given concrete model, we need to further bound $\lad_D$ and $s_H$ from below and $\norm{\mc{L}^D}_{2 \to 2}$ and $\norm{\mc{L}^H \Pi_+}_{\maf{H} \to \maf{H}}$ from above, which should be considered case by case.

An elementary upper bound for $\norm{\mc{L}^H \Pi_+}_{\maf{H} \to \maf{H}}$ can be given by $\lad_{\max} (H) - \lad_{\min}(H)$, which we will use in \cref{sec:example} where a few examples are considered. Here $\lad_{\max/\min}(H)$ denotes the maximal/minimal eigenvalue of $H$. In the case of local commuting Hamiltonians with local jumps, the norms $\|\mc{L}^D\|_{2 \to 2}$ and $\|\mc{L}^H \Pi_+\|_{\maf{H} \to \maf{H}}$ would typically scale polynomially in the number of qubits $n$. If there further holds $\lad_D = s_H = \Omega({\rm poly\,}(n^{-1}))$, then by standard arguments as in \cite{temme2010chi}, we can conclude the polynomial mixing time $t_{\rm mix} = \mc{O}({\rm poly\,}(n)$ (i.e., fast mixing). The detailed discussion of the dimension dependence of involved constants is beyond the scope of this work.
\end{remark}

\noindent 
\emph{Choice of coupling parameter $\alpha$.} 
We proceed to discuss the scaling of the convergence rate $\nu$ in magnitudes of $\mc{L}^H$ and $\mc{L}^D$. Note that the choice of the parameter indeed changes the QMS.

For clarity, we will consider the Lindbladian $\mc{L}_\alpha = \alpha \mc{L}^H + \mc{L}^D$ in \eqref{model:lalpha}, and suppose that \cref{ass:1} holds and operators $\mc{L}^H$ and $\mc{L}^D$ satisfy inequalities \eqref{eq:coerciveLd} and \eqref{assp:2} with constants $\lad_D$ and $s_H$. 
Note that \cref{thm:ratest} directly applies to the QMS $\exp(t \mc{L}_\alpha)$. We also introduce the rescaled Lindbladian $\w{\mc{L}}_\gamma = \mc{L}^H + \gamma \mc{L}^D$ with $\gamma = \alpha^{-1}$ that satisfies 
\begin{equation*}
    e^{t \mc{L}_\alpha} = e^{\alpha t \w{\mc{L}}_\gamma}\,. 
\end{equation*}
It follows that if the exponential decay \eqref{eq:expdecay} holds for the flow $X_t = e^{t \w{\mc{L}}_\gamma}X_0$ with the prefactor $C_T$ and the convergence rate $\nu(\w{\mc{L}}_\gamma)$, then $X_t = e^{t \mc{L}_\alpha}X_0$ would satisfy \eqref{eq:expdecay} with the same $C_T$ and rate $\nu(\mc{L}_\alpha) = \alpha \nu(\w{\mc{L}}_\gamma)$. Below, we let $C_{1,T}$ and $C_{2,T} = C_{2,T,\beta = 0}$ be constants defined in \eqref{constant1}-\eqref{constant2} based on $\lad_D$ and $s_H$ in \eqref{eq:optconst}.

We first apply \cref{thm:ratest} to $\w{\mc{L}}_\gamma$ and find 
\begin{equation*}
     \nu(\w{\mc{L}}_\gamma) \ge \frac{\gamma \lad_D}{C_{1,T}^2 + \gamma^2 \lad_D C_{2,T}^2}\,,
\end{equation*}
which is maximized at 
\begin{equation*}
    \gamma = \frac{C_{1,T}}{\sqrt{\lad_D}C_{2,T}}\,.
\end{equation*} 
Then, we readily obtain 
\begin{align} \label{eq:ratewithalpha}
    \nu(\mc{L}_\alpha) = \alpha \nu(\w{\mc{L}}_\gamma) \ge \frac{\lad_D}{C_{1,T}^2 + \alpha^{-2} \lad_D C_{2,T}^2} \nearrow \frac{\lad_D}{C_{1,T}^2} \q \text{as}\q \alpha \to \infty\,,
\end{align}
i.e., the lower bound estimate of $\nu$ is monotonically increasing when $\alpha \to \infty$. 
We note that the scaling behaviors of $\nu(\mc{L}_\alpha)$ and $\nu(\w{\mc{L}}_\gamma)$ are quite different. 
Some further direct estimation with the above discussions and \cref{coro:simplified} gives the following result.

\begin{corollary} \label{coro:ratepara}
     Under \cref{ass:1}, let $\lad_D$ and $s_H$ be constants in \cref{eq:optconst} with associated constants $C_{1,T}$ and $C_{2,T} = C_{2,T,\beta = 0}$ in \eqref{constant1}-\eqref{constant2}.  
     \begin{itemize}
         \item  For the Lindbladian $\w{\mc{L}}_\gamma = \mc{L}^H + \gamma \mc{L}^D$ with $\gamma > 0$, the convergence rate estimate \eqref{rate} is maximized at $\gamma_* = \frac{C_{1,T}}{\sqrt{\lad_D}C_{2,T}}$, and there holds
    \begin{equation*}
       \max_{\gamma > 0} \nu(\w{\mc{L}}_\gamma) \ge \frac{\sqrt{\lad_D}}{C_{1,T} C_{2,T}} \underset{T = 3/s_H}{=}  \Theta\left(\frac{\sqrt{\lad_D}s_H^{2}}{\big( s_H + \norm{\mc{L}^H \Pi_+}_{\maf{H} \to \maf{H}}\big) \sqrt{\|\mc{L}^D \|_{2 \to 2}}}\right)\,.
    \end{equation*}
    \item For the Lindbladian $\mc{L}_\alpha = \alpha \mc{L}^H + \mc{L}^D$ with $\alpha > 0$, the convergence rate estimate \eqref{rate} is increasing in $\alpha$, and there holds 
    \begin{equation*}
       \liminf_{\alpha \to \infty}\nu(\mc{L}_\alpha) \ge \frac{\lad_D}{C_{1,T}^2} \underset{T = 3/s_H}{=} \Theta\left(\frac{\lad_D s_H^2}{s_H^2 + \norm{\mc{L}^H \Pi_+}^2_{\maf{H} \to \maf{H}}}\right)\,.
    \end{equation*}
     \end{itemize}
\end{corollary}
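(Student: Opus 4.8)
The plan is to treat \cref{coro:ratepara} as a direct consequence of \cref{thm:ratest}: once one understands how the constants $C_{1,T}$ and $C_{2,T} = C_{2,T,\beta = 0}$ behave under a rescaling of the dissipative part, both bullets follow from an elementary one-variable optimization together with the scaling identity $e^{t\mc{L}_\alpha} = e^{\alpha t \w{\mc{L}}_\gamma}$ with $\gamma = \alpha^{-1}$ recorded above \eqref{eq:ratewithalpha}.

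First I would apply \cref{thm:ratest} to $\w{\mc{L}}_\gamma = \mc{L}^H + \gamma \mc{L}^D$ for fixed $\gamma > 0$. Since $\gamma > 0$, $\ker((\gamma\mc{L}^D)|_{\maf{H}}) = \ker(\mc{L}^D|_{\maf{H}})$, so $\Pi_0$, $\Pi_+$, and hence \cref{ass:1} are unchanged, as are the singular value gap $s_H$ of $\mc{L}^H_{+0}$ and the norm $\Norm{\mc{L}^H\Pi_+}_{\maf{H}\to\maf{H}}$. The microscopic coercivity constant of $\gamma\mc{L}^D$ on $\maf{H}_+$ equals $\gamma\lad_D$, and, using $-\mc{L}^D \ge 0$ on $\maf{H}$, one has $\Norm{(-\gamma\mc{L}^D)^{1/2}}_{2\to2} = \sqrt{\gamma}\,\Norm{(-\mc{L}^D)^{1/2}}_{2\to2}$. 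Inspecting \eqref{constant1} and \eqref{constant2}, every occurrence of $\mc{L}^D$ sits inside $C_{2,T}$, so $C_{1,T}$ is unchanged while $C_{2,T}$ becomes $\sqrt{\gamma}\,C_{2,T}$; plugging these into the rate formula of \cref{thm:ratest} gives
\begin{equation*}
    \nu(\w{\mc{L}}_\gamma) \ge \frac{\gamma\lad_D}{C_{1,T}^2 + \gamma^2\lad_D C_{2,T}^2}\,.
\end{equation*}
By AM--GM (or a derivative computation) the right-hand side is maximized over $\gamma > 0$ at $\gamma_* = C_{1,T}/(\sqrt{\lad_D}\,C_{2,T})$, with maximal value $\sqrt{\lad_D}/(2 C_{1,T} C_{2,T}) = \Theta\!\big(\sqrt{\lad_D}/(C_{1,T} C_{2,T})\big)$, which gives the first bullet; substituting $T = 3/s_H$ and the simplified bounds on $C_{1,T}, C_{2,T}$ from \cref{coro:simplified} produces the displayed $\Theta$-expression.

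For the second bullet I would set $\gamma = \alpha^{-1}$ in the inequality above and use $\nu(\mc{L}_\alpha) = \alpha\,\nu(\w{\mc{L}}_{1/\alpha})$ to obtain $\nu(\mc{L}_\alpha) \ge \lad_D/(C_{1,T}^2 + \alpha^{-2}\lad_D C_{2,T}^2)$, which is manifestly increasing in $\alpha$ and tends to $\lad_D/C_{1,T}^2$ as $\alpha\to\infty$; taking $\liminf_{\alpha\to\infty}$ and again specializing $T = 3/s_H$ via \cref{coro:simplified} yields the claim. There is no substantial obstacle here, since \cref{thm:ratest} and \cref{coro:simplified} do all the heavy lifting; the only point deserving care is the bookkeeping just described — checking that $C_{1,T}$ in \eqref{constant1} is genuinely independent of the scaling of $\mc{L}^D$, that $\beta = 0$ is admissible so that $C_{2,T} = C_{2,T,\beta = 0}$ is well-defined and scales as $\sqrt{\gamma}$, and that \cref{ass:1} is preserved under $\mc{L}^D \mapsto \gamma \mc{L}^D$.
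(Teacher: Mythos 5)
Your proposal is correct and follows the same route the paper sketches in the text preceding \cref{coro:ratepara}: apply \cref{thm:ratest} to $\w{\mc{L}}_\gamma$, observe that $C_{1,T}$ is scale-invariant while $C_{2,T}\mapsto\sqrt{\gamma}\,C_{2,T}$ (and $\lad_D\mapsto\gamma\lad_D$), optimize in $\gamma$, and then use $\nu(\mc{L}_\alpha)=\alpha\,\nu(\w{\mc{L}}_{1/\alpha})$ for the second bullet. You are in fact slightly more careful than the paper in noting that the optimized value is $\sqrt{\lad_D}/(2C_{1,T}C_{2,T})$, which only matches the displayed bound up to the constant absorbed into $\Theta$.
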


\begin{remark}
Let us view $\mc{L}^H$ and $\mc{L}^D$ as the generators of the unitary system evolution and the decoherence process induced by local quantum noise, respectively. $\gamma > 0$ represents the noise strength. The above derivation shows that the convergence rate of the total dynamics $\exp(t \w{\mc{L}}_\gamma)$ is not monotone in $\gamma > 0$ (it will first increase and then decrease). In the regime of small noise, i.e., $\gamma \ll 1$, it is physically expected that the stronger the noise is, the faster the dynamic decays. Here, the interesting phenomenon is that when noise is large enough, i.e., $\gamma \gg 1$, the strong noise gives the fast local decoherence of $\exp(t \gamma \mc{L}_D)$, but it meanwhile slows down the overall decoherence rate $\nu(\w{\mc{L}}_\gamma)$. This is consistent with the numerical observation in \cite{laracuente2022self}.
\end{remark}

\medskip

\noindent 
\emph{Comparison with \cite{fang2024mixing} via DMS approach.} 
We finally compare our result \cref{thm:ratest} with the recent work \cite{fang2024mixing}  motivated by \cite{dolbeault2015hypocoercivity} (the analysis framework is quite different from the current work), which also assumes \cref{ass:1} for the Lindbladian $\mc{L} = \mc{L}^H  + \mc{L}^D$.

Let constants $s_H$ and $\lad_D$ be given as in \eqref{eq:optconst}. For some $\eta > 0$, we define the operator: 
\begin{align*}
    \mc{A} := \left(\eta + (\mc{L}^H \Pi_0)^\star (\mc{L}^H \Pi_0)\right)^{-1} (\mc{L}^H \Pi_0)^\star\,,
\end{align*}
and the Lyapunov
functional for $\epsilon \in (0,1)$: 
\begin{align*}
    \mf{L}(X) = \frac{1}{2} \norm{X}_{2,\si}^2 - \epsilon \, \Re \l  \mc{A} X, X \r_{\si,1/2}\,.
\end{align*}
Let the constant $C_M(\eta)$ be given by 
\begin{equation*}
    \norm{\mc{A}\mc{L}^H \Pi_+ X}_{2,\si} +  \norm{\mc{A}\mc{L}^D  X}_{2,\si} \le C_M(\eta) \norm{\Pi_+ X}_{2,\si}\,.
\end{equation*}
Then, \cite{fang2024mixing}*{Theorem 1} gives the exponential decay estimate:
\begin{equation*}
    \norm{e^{t \mc{L}} X}_{2,\si} \le C e^{-\nu t} \norm{X}_{2,\si},,\q \text{for}\ X \in \maf{H}\,,
\end{equation*}
where 
\begin{equation} \label{dmseq1}
    C = \left(\frac{1+ \epsilon}{1 - \epsilon}\right)^{1/2}\,, \q \nu = \min\left\{\frac{1}{4}\frac{\lad_D}{1 + \eps}, \frac{1}{3}\frac{\eps s_H^2}{(1 + \eps)(\eta + s_H^2)} \right\},
\end{equation}
with
\begin{equation} \label{dmseq2}
    \eps = \frac{1}{2}\min\left\{ \frac{\lad_D s_H^2}{(\eta + s_H^2)(1 + C_M(\eta))^2},1  \right\}\,.
\end{equation}
To compare with our results, we let $\lad_D, s_H \ll 1$, and note, by using \cite{fang2024mixing}*{Lemma 2}, 
\begin{equation*}
    C_M(\eta) \le \frac{1}{2 \sqrt{\eta}} \left(\norm{\mc{L}^H \Pi_+}_{\maf{H} \to \maf{H}} + \norm{\mc{L}^D}_{2 \to 2}\right)\,.
\end{equation*}
It follows that the optimal $\eta$ in this regime is $\eta = \Theta(s_H^2)$, and there holds \cite{fang2024mixing}*{Corollary 4}
\begin{equation} \label{eqfang}
    C = \Theta(1)\,,\q \nu = \Theta \left(\frac{\lad_D s_H^2}{\norm{\mc{L}^H \Pi_+}^2_{\maf{H} \to \maf{H}} + \norm{\mc{L}^D}^2_{2 \to 2}}\right)\,,
\end{equation}
which matches the rate estimate in \cref{coro:simplified} in the sense that both give $\nu = \Theta (\lad_D s_H^2)$. 

However, beyond the regime $\lad_D, s_H \ll 1$, 
the explicit form of the rate estimate as \eqref{eqfang} is not available. 
Accurate estimations for $\lad_D$, $s_H$, and $C_M(\eta)$ are necessary to find the minimum in \cref{dmseq1,dmseq2}.
One advantage of our results is that they do not depend on the twisted norm, and their scaling with respect to the model parameters is fully explicit. This allows us to determine the optimal $\gamma$ and $\alpha$ for the convergence rate estimation \eqref{dmseq1} in \cref{coro:ratepara} above. It would be interesting to conduct a detailed case study for various physical models to understand the sharpness of our decay rate estimate compared to that in \cite{fang2024mixing}.




\subsection{Proofs and Lemmas}\label{sec:lemaproof}

This section is devoted to the proofs of \cref{thm:ratest} (exponential decay of time-averaged $L^2$-distance) and 
\cref{thm:tspoincare} (quantum space-time Poincar\'{e} inequality).

For this, we need some auxiliary technical lemmas, in particular, the key \cref{lem:diverest}, where we consider a divergence-type equation, whose solutions provide some crucial test functions used for the proof of \cref{thm:tspoincare}. We start with lifting \cref{assp:2} to the space of time-dependent operators, which follows from a standard tensorization technique. For ease of exposition, we define the subspace of $\mc{B}(\mc{H})$:
\begin{equation*}
   \wb{\maf{H}}_0 := \maf{H}_0 \oplus {\rm Span}\{\bf 1\}\,.
\end{equation*}

\begin{lemma} \label{lem:tspoin1}
Let $s_H$ be the constant in \eqref{assp:2} under \cref{ass:1}. For $X_t \in H^1_{\lad \otimes \si}([0,T]; \wb{\maf{H}}_0)$, there holds 
    \begin{align} \label{eq:tspoin1}
        \norm{X_t - \l X_t\r_{\lad \otimes \si}}^2_{2,\lad \otimes \si} \le \max\left\{\frac{1}{s_H^2}, \frac{T^2}{\pi^2} \right\} \left(\Norm{ \mc{L}_{+0}^H  X_t}_{2, \lad \otimes \si}^2 + \big\|\p_t X_t\big\|^2_{2,\lad \otimes \si}   \right).
    \end{align}
\end{lemma}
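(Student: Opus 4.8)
The plan is to prove the inequality by a tensorization argument that separately controls the oscillation of $X_t$ in the time variable and in the operator variable, using a one-dimensional Poincaré inequality on $[0,T]$ for the former and the macroscopic coercivity \eqref{assp:2} for the latter. First I would split the total oscillation via the orthogonal decomposition with respect to $\l\dd,\dd\r_{2,\lad\otimes\si}$. Writing $\overline{X}:=\l X_t\r_{\lad} = \frac1T\int_0^T X_t\ud t$ for the time average (an element of $\wb{\maf H}_0$) and noting that $\l X_t\r_{\lad\otimes\si} = \l\overline{X}\r_\si$, the Pythagorean identity gives
\begin{align*}
    \norm{X_t - \l X_t\r_{\lad\otimes\si}}^2_{2,\lad\otimes\si}
    = \norm{X_t - \overline{X}}^2_{2,\lad\otimes\si} + \norm{\overline{X} - \l\overline{X}\r_\si}^2_{2,\si}\,,
\end{align*}
since $t\mapsto X_t-\overline X$ has zero time-average while $\overline X - \l\overline X\r_\si$ is constant in $t$. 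This reduces matters to bounding the two terms on the right separately.

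For the first term, $t\mapsto X_t$ has values in the fixed finite-dimensional Hilbert space $(\wb{\maf H}_0,\l\dd,\dd\r_{\si,1/2})$, so applying the scalar Poincaré (Wirtinger) inequality on $[0,T]$ with the normalized measure $\lad$ — namely $\int_0^T |f-\bar f|^2\ud\lad \le \frac{T^2}{\pi^2}\int_0^T|f'|^2\ud\lad$, which has sharp constant $T^2/\pi^2$ — coordinatewise in any orthonormal basis yields
\begin{align*}
    \norm{X_t-\overline X}^2_{2,\lad\otimes\si} \le \frac{T^2}{\pi^2}\,\norm{\p_t X_t}^2_{2,\lad\otimes\si}\,.
\end{align*}
For the second term, $\overline X\in\wb{\maf H}_0$ so $\overline X - \l\overline X\r_\si\in\maf H_0$, and \eqref{assp:2} (equivalently the definition of $s_H$ in \eqref{eq:optconst}) gives $\norm{\overline X - \l\overline X\r_\si}^2_{2,\si}\le s_H^{-2}\norm{\mc L_{+0}^H(\overline X - \l\overline X\r_\si)}^2_{2,\si} = s_H^{-2}\norm{\mc L_{+0}^H \overline X}^2_{2,\si}$, where the last equality uses that $\mc L^H_{+0} = \mc L^H\Pi_0$ annihilates $\mi$ (since $[H,\mi]=0$); then Jensen's inequality for the convex function $Y\mapsto\norm{Y}^2_{2,\si}$ applied to the time average gives $\norm{\mc L^H_{+0}\overline X}^2_{2,\si} = \norm{\l\mc L^H_{+0}X_t\r_\lad}^2_{2,\si}\le \norm{\mc L^H_{+0}X_t}^2_{2,\lad\otimes\si}$. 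Adding the two bounds and replacing both $T^2/\pi^2$ and $s_H^{-2}$ by their maximum yields \eqref{eq:tspoin1}.

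I expect the only genuinely delicate point to be bookkeeping around the constant function $\mi$: one must check that $\mc L^H_{+0}$ indeed kills the $\mathrm{Span}\{\mi\}$ component so that the macroscopic coercivity estimate, stated on $\maf H_0$, applies verbatim to $\overline X - \l\overline X\r_\si$, and that the orthogonal splitting above is genuinely orthogonal in $L^2_{\lad\otimes\si}$ (it is, because the two pieces are respectively time-mean-zero and time-constant, and $\l\dd,\dd\r_{\si,1/2}$ pairs them to a time integral of a constant times a mean-zero function). Everything else is the standard one-dimensional Poincaré inequality plus Jensen, so no further obstacle is anticipated.
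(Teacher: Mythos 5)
Your proposal is correct and follows essentially the same route as the paper's own proof: the Pythagorean split into a time-oscillating piece and a time-constant piece, the one-dimensional Wirtinger/Poincar\'e inequality with sharp constant $T^2/\pi^2$ for the former, and the macroscopic coercivity \eqref{assp:2} combined with Jensen's inequality for the latter. The extra remark that $\mc{L}^H_{+0}$ annihilates the $\mathrm{Span}\{\mi\}$ component is implicit in the paper but correctly spelled out.
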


\begin{proof}
We first note 
\begin{equation*}
    \norm{X_t - \l X_t\r_{\lad \otimes \si}}^2_{2,\lad \otimes \si} = \norm{ X_t - \l X_t\r_{\lad}}^2_{2,\lad \otimes \si}  + \norm{\l X_t \r_{\lad} - \l X_t\r_{\lad \otimes \si}}_{2,\lad \otimes \si}^2\,,
\end{equation*}
thanks to, for any $X_t \in L^2_{\lad \otimes \si}([0,T];\mc{B}(\mc{H}))$, 
$$\l X_t - \l X_t\r_{\lad} ,  \l X_t \r_{\lad} - \l X_t\r_{\lad \otimes \si} \r_{2,\lad \otimes \si} = 0\,.$$
It is easy to see $\l X_t \r_{\lad} - \l X_t\r_{\lad \otimes \si} \in \maf{H}_0$ for $X_t \in H^1_{\lad \otimes \si}([0,T]; \wb{\maf{H}}_0)$. By \cref{assp:2}, we have 
\begin{equation*}
   \norm{  \l X_t \r_{\lad} - \l X_t\r_{\lad \otimes \si} }_{2,\lad \otimes \si}^2 \le  \frac{1}{s^2_H} \norm{\l \mc{L}_{+0}^H  X_t \r_\lad}_{2,\si}^2 \le \frac{1}{s^2_H} \norm{ \mc{L}_{+0}^H X_t}_{2, \lad \otimes \si}^2\,,
\end{equation*}
where the last inequality follows from the convexity of $\norm{\dd}_{2,\si}^2$. Moreover, by the standard Poincar\'{e} inequality for $H^1$-functions \cite{evans2022partial}:
\begin{equation*}
    \left\|f - \int_0^T f(s) \ud \lad\right\|_{L^2([0,T])} \le \frac{T}{\pi} \norm{f'}_{L^2([0,T])}\,,
\end{equation*}
there holds
\begin{align*}
    \norm{ X_t - \l  X_t \r_\lad}^2_{2,\lad \otimes \si} & = \int_0^T \norm{ X_t - \l  X_t \r_\lad}_{2,\si}^2 \ud \lad \\ 
    & \le  \frac{T^2}{\pi^2} \int_0^T \norm{\p_t X_t}_{2,\si}^2  \ud \lad \le \frac{T^2}{\pi^2}  \norm{\p_t X_t}_{2,\lad \otimes \si}^2\,.
\end{align*}
Then, we have 
\begin{equation*}
     \norm{X_t - \l X_t\r_{\lad \otimes \si}}^2_{2,\lad \otimes \si} \le \frac{1}{s^2_H} \norm{ \mc{L}_{+0}^H  X_t}_{2, \lad \otimes \si}^2 + \frac{T^2}{\pi^2}  \norm{\p_t X_t}_{2,\lad \otimes \si}^2\,,
\end{equation*}
which gives the desired \eqref{eq:tspoin1}. 
\end{proof}

We consider the following abstract elliptic operator:
\begin{equation} \label{eq:time-laplace}
    \maf{L}^H = 
    - \p_{tt} + \big(\mc{L}_{+0}^H\big)^\star \mc{L}_{+0}^H\,,
\end{equation}
and the space:
\begin{equation*}
\mc{V} := \left\{X_t \in H^1_{\lad \otimes \si}([0,T]; \wb{\maf{H}}_0)\,;\ \l X_t\r_{\lad \otimes \si} = 0 \right\}\,.
\end{equation*}
The well-posedness of the operator $\maf{L}^H$ is established in the following lemma.

\begin{lemma} \label{lem:prioriest}
Let $s_H$ be the constant given in \eqref{assp:2} under \cref{ass:1}. Given $F_t \in H^{-1}_{\lad \otimes \si}([0,T]; \wb{\maf{H}}_0)$ with $\l F_t \r_{\lad \otimes \si} = 0$, 
consider the Neumann boundary value problem: 
\begin{align} \label{eq:elliptic}
    \maf{L}^H X_t = F_t\,, \q  \p_t X_t|_{t = 0, T} = 0\,.
\end{align}
Then there is a unique weak solution $X_t \in \mc{V}$ such that 
\begin{equation} \label{eq:elliptic2}
    \l \p_t Y_t, \p_t X_t \r_{2, \lad \otimes \si} + \l \mc{L}_{+0}^H Y_t, \mc{L}_{+0}^H X_t \r_{2, \lad \otimes \si} = \l F_t, Y_t \r_{2, \lad \otimes \si}\,, \q \forall\, Y_t \in H^1_{\lad \otimes \si}([0, T]; \wb{\maf{H}}_0)\,.
\end{equation}
In addition, if $F_t \in L^2_{\lad \otimes \si}([0,T]; \wb{\maf{H}}_0)$, the solution $X_t$ satisfies $X_t \in H^2_{\lad \otimes \si}([0,T]; \wb{\maf{H}}_0)$ with the estimate:
\begin{align} \label{eq:prioriest}
    \norm{\p_t X_t}^2_{2,\lad \otimes \si} + \norm{\mc{L}_{+0}^H X_t}_{2,\lad \otimes \si}^2 \le \max\left\{\frac{1}{s^2_H}, \frac{T^2}{\pi^2} \right\} \norm{F_t}_{2,\lad \otimes \si}^2\,, 
\end{align}
and 
\begin{align} \label{eq:prioriest2}
     \big\|\p_{tt} X_t\big\|^2_{2,\lad \otimes \si} + \Norm{(\mc{L}_{+0}^H)^\star\mc{L}_{+0}^H X_t}_{2,\lad \otimes \si}^2 + 2 \big\| \p_t \mc{L}_{+0}^H X_t \big\|_{2,\lad \otimes \si}^2 = \big\| \maf{L}^H X_t \big\|_{2,\lad \otimes \si}^2 = \big\| F_t \big\|_{2,\lad \otimes \si}^2 \,.
\end{align}
\end{lemma}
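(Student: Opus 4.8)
\emph{Overall approach.} The plan is to read \eqref{eq:elliptic} as a one-dimensional (in the time variable $t$) elliptic boundary value problem and to apply the Lax--Milgram theorem, with the coercivity supplied by the space-time Poincar\'e estimate of \cref{lem:tspoin1}. Define on $H^1_{\lad\otimes\si}([0,T];\wb{\maf{H}}_0)$ the sesquilinear form
\[
    B(X_t, Y_t) := \l \p_t Y_t, \p_t X_t\r_{2,\lad\otimes\si} + \l \mc{L}_{+0}^H Y_t, \mc{L}_{+0}^H X_t\r_{2,\lad\otimes\si}\,,
\]
so that the claimed identity \eqref{eq:elliptic2} reads $B(X_t, Y_t) = \l F_t, Y_t\r_{2,\lad\otimes\si}$. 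Boundedness of $B$ on $H^1_{\lad\otimes\si}$ is immediate since $\norm{\mc{L}_{+0}^H}_{2\to2} < \infty$ by finite-dimensionality of $\mc{H}$, and $Y_t \mapsto \l F_t, Y_t\r_{2,\lad\otimes\si}$ is a bounded conjugate-linear functional for $F_t \in H^{-1}_{\lad\otimes\si}$. For coercivity on the closed subspace $\mc{V}$: if $X_t \in \mc{V}$ then $\l X_t\r_{\lad\otimes\si} = 0$, so \cref{lem:tspoin1} gives $\norm{X_t}_{2,\lad\otimes\si}^2 \le \max\{s_H^{-2}, T^2/\pi^2\}\, B(X_t, X_t)$, whence $B(X_t,X_t) \ge c\,(\norm{X_t}_{2,\lad\otimes\si}^2 + \norm{\p_t X_t}_{2,\lad\otimes\si}^2)$ for some $c>0$ depending only on $s_H$ and $T$. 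Lax--Milgram then produces a unique $X_t \in \mc{V}$ with $B(X_t, Y_t) = \l F_t, Y_t\r_{2,\lad\otimes\si}$ for all $Y_t \in \mc{V}$.

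\emph{Extension of the weak identity and the first estimate.} To reach arbitrary test operators $Y_t \in H^1_{\lad\otimes\si}([0,T];\wb{\maf{H}}_0)$, decompose $Y_t = (Y_t - a\mi) + a\mi$ with $a := \l Y_t\r_{\lad\otimes\si} \in \C$, so $Y_t - a\mi \in \mc{V}$. The scalar mode contributes nothing: $\p_t\mi = 0$, and $\mi \perp \maf{H}$ for the KMS inner product gives $\mc{L}_{+0}^H\mi = \mc{L}^H\Pi_0\mi = 0$, hence $B(X_t, a\mi) = 0$; moreover $\l F_t, a\mi\r_{2,\lad\otimes\si} = 0$ because $\l F_t\r_{\lad\otimes\si} = 0$. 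Thus \eqref{eq:elliptic2} holds on all of $H^1_{\lad\otimes\si}([0,T];\wb{\maf{H}}_0)$, and uniqueness in $\mc{V}$ is immediate from coercivity. Taking $Y_t = X_t$ and using \cref{lem:tspoin1} once more,
\[
    B(X_t, X_t) = \l F_t, X_t\r_{2,\lad\otimes\si} \le \norm{F_t}_{2,\lad\otimes\si}\norm{X_t}_{2,\lad\otimes\si} \le \norm{F_t}_{2,\lad\otimes\si}\left(\max\{s_H^{-2}, T^2/\pi^2\}\, B(X_t,X_t)\right)^{1/2}\,,
\]
and dividing by $B(X_t,X_t)^{1/2}$ yields exactly \eqref{eq:prioriest}.

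\emph{Regularity and the second identity.} If $F_t \in L^2_{\lad\otimes\si}([0,T];\wb{\maf{H}}_0)$, then the weak formulation gives $\p_{tt}X_t = (\mc{L}_{+0}^H)^\star\mc{L}_{+0}^H X_t - F_t$ distributionally on $(0,T)$; the right-hand side lies in $L^2_{\lad\otimes\si}$ (using $X_t \in H^1 \subset L^2$ and $\norm{\mc{L}_{+0}^H}_{2\to2} < \infty$), so $X_t \in H^2_{\lad\otimes\si}([0,T];\wb{\maf{H}}_0)$, and an integration by parts in \eqref{eq:elliptic2} identifies the natural boundary condition $\p_t X_t|_{t=0,T} = 0$. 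With this in hand, expand
\[
    \norm{\maf{L}^H X_t}_{2,\lad\otimes\si}^2 = \norm{\p_{tt}X_t}_{2,\lad\otimes\si}^2 + \norm{(\mc{L}_{+0}^H)^\star\mc{L}_{+0}^H X_t}_{2,\lad\otimes\si}^2 - 2\,\Re\,\l \p_{tt}X_t, (\mc{L}_{+0}^H)^\star\mc{L}_{+0}^H X_t\r_{2,\lad\otimes\si}\,.
\]
Integrating the cross term by parts in $t$ (the boundary contribution vanishes since $\p_t X_t = 0$ at $t=0,T$), using that $\mc{L}_{+0}^H$ is time-independent and hence commutes with $\p_t$, and using the KMS-adjoint identity $\l A, (\mc{L}_{+0}^H)^\star\mc{L}_{+0}^H A\r_{\si,1/2} = \norm{\mc{L}_{+0}^H A}_{2,\si}^2$, this cross term equals $+2\norm{\p_t\mc{L}_{+0}^H X_t}_{2,\lad\otimes\si}^2$; together with $\maf{L}^H X_t = F_t$ this is precisely \eqref{eq:prioriest2}.

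\emph{Main obstacle.} The only genuinely non-routine ingredient is the coercivity of $B$, which is exactly the content of \cref{lem:tspoin1}; everything else is standard Lax--Milgram together with one-dimensional elliptic regularity. The point that needs care is the bookkeeping with $\wb{\maf{H}}_0 = \maf{H}_0 \oplus {\rm Span}\{\mi\}$: the scalar "gauge" mode is annihilated by both $\p_t$ and $\mc{L}_{+0}^H$ and is removed from the data by the hypothesis $\l F_t\r_{\lad\otimes\si} = 0$, which is precisely what makes the mean-zero space $\mc{V}$ the right functional setting and lets the weak identity extend from $\mc{V}$ to all of $H^1_{\lad\otimes\si}([0,T];\wb{\maf{H}}_0)$.
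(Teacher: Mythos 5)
Your proof is correct and follows essentially the same route as the paper: coercivity of the quadratic form via \cref{lem:tspoin1}, Lax--Milgram on $\mc{V}$, elliptic regularity via the equation $-\p_{tt}X_t = F_t - (\mc{L}_{+0}^H)^\star\mc{L}_{+0}^H X_t$, and the expansion of $\|\maf{L}^H X_t\|^2$ with the cross term handled by integration by parts and commutativity of $\p_t$ with $\mc{L}_{+0}^H$. You supply a few details the paper leaves implicit (extending the weak identity from $\mc{V}$ to all of $H^1_{\lad\otimes\si}([0,T];\wb{\maf{H}}_0)$ using $\l F_t\r_{\lad\otimes\si}=0$, and identifying the natural Neumann boundary condition needed to kill the boundary term in the cross-term integration by parts), but the substance of the argument is identical.
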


\begin{proof}
We consider the bilinear form on $\mc{V}$ associated with the equation \eqref{eq:elliptic}:
    \begin{equation*}
        \mc{Q}(Y_t, X_t) =  \big\l \p_t Y_t, \p_t X_t \big\r_{2, \lad \otimes \si} + \big\l \mc{L}_{+0}^H Y_t, \mc{L}_{+0}^H X_t \big\r_{2, \lad \otimes \si}\,. 
    \end{equation*}
    One can easily see that $\mc{Q}(\dd,\dd)$ is an inner product on $\mc{V}$. In fact, if $\mc{Q}(X,X) = 0$, we have $\p_t X_t = 0$ and $\mc{L}_{+0}^H X = 0$. The latter condition $\mc{L}_{+0}^H X = 0$ gives that for a.e.\,$t \in [0,T]$, $X_t$ is proportional to the identity ${\bf 1}$. Then the former one $\p_t X_t = 0$, along with $\l X_t\r_{\lad \otimes \si} = 0$, implies $X_t = 0$. Further, \cref{lem:tspoin1} shows that  the bilinear form $\mc{Q}(\dd,\dd)$ is coercive on $\mc{V}$: for some $C > 0$, 
    \begin{align*}
        \mc{Q}(X_t,X_t) \ge C \left(\norm{X_t}^2_{2,\lad \otimes \si} + \norm{\p_t X_t}^2_{2,\lad \otimes \si}\right), \q \forall\, X_t \in \mc{V}\,.
    \end{align*}
    The existence of the weak solution to \eqref{eq:elliptic} is guaranteed by the Lax-Milgram theorem with the 
    estimate \eqref{eq:prioriest}. Finally, note that $\maf{L}^H X_t = F_t$ is equivalent to $$- \p_{tt} X_t = F_t - \left(\mc{L}_{+0}^H\right)^\star \mc{L}_{+0}^H X_t\,,$$ which gives $X \in H^2_{\lad \otimes \si}([0,T];\wb{\maf{H}}_0)$. The estimate \eqref{eq:prioriest2} follows from 
    \begin{align*}
        \big\| \maf{L}^H X_t \big\|_{2,\lad\otimes \si}^2 & =  \big\l (-\p_{tt} + (\mc{L}_{+0}^H)^\star\mc{L}_{+0}^H)X_t, (-\p_{tt} + (\mc{L}_{+0}^H)^\star\mc{L}_{+0}^H)X_t  \big\r_{2,\lad\otimes \si} \\
        & =  \big\|\p_{tt} X_t\big\|^2_{2,\lad \otimes \si} + \Norm{(\mc{L}_{+0}^H)^\star\mc{L}_{+0}^H X_t}_{2,\lad \otimes \si}^2 + 2\, \Re \left\l -\p_{tt} X_t,  (\mc{L}_{+0}^H)^\star\mc{L}_{+0}^H X_t  \right\r_{2,\lad\otimes \si}\,,
    \end{align*}
    and by the commutativity between $\p_t$ and $\mc{L}_{+0}^H$, 
    \begin{equation*}
        \left\l -\p_{tt} X_t,  (\mc{L}_{+0}^H)^\star\mc{L}_{+0}^H X_t  \right\r_{2,\lad\otimes \si} =  \left\l \p_{t} \mc{L}_{+0}^H X_t,  \p_{t} \mc{L}_{+0}^H  X_t  \right\r_{2,\lad\otimes \si}\,. \qedhere
    \end{equation*}
\end{proof}

The next technical lemma is essential and deals with a divergence-type equation \eqref{eq:diver} associated with the elliptic one \eqref{eq:elliptic}, which is inspired by the arguments in \cite{cao2023explicit}*{Lemma 2.6} and
\cite{brigati2023construct}*{Section 6} for the underdamped Langevin dynamics. The main purpose of this lemma is to construct necessary test operators $(Y_t^{(0)}, Y_t^{(1)})$, vanishing at time points $t = 0, T$, required for the proof of \cref{thm:tspoincare} (see \cref{auxeq:space1,auxeq:space2,auxeq:space3} below). Letting $X_t$ be the solution to \eqref{eq:elliptic}, it is easy to see that $(Y_t^{(0)}, Y_t^{(1)}) = (\p_t X_t, \mc{L}_{+0}^H X_t)$ satisfies \eqref{eq:diver}, except the one $Y_t^{(1)} = 0$ at $t = 0, T$ unless $F_t$ is orthogonal to the subspace $\hh = \{X_t\,;\  \maf{L}^H X_t = 0\}$. In the case of $F_t \in \hh$, the construction of 
the solution $(Y_t^{(0)}, Y_t^{(1)})$ to \eqref{eq:diver} with desired estimates is subtle and relies on the method of separation of variables. Combining these two parts of solutions finishes the proof of \cref{lem:diverest}.
 
\begin{lemma} \label{lem:diverest}
Let $s_H$ be the constant given in \eqref{assp:2} under \cref{ass:1}. For $F_t \in L^2_{\lad \otimes \si}([0,T];\wb{\maf{H}}_0 )$ with $\l F_t \r_{\lad\otimes \si} = 0$, there exists $Y^{(0)}_t \in  H^1_{\lad \otimes \si}([0,T];\wb{\maf{H}}_0)$ and $Y^{(1)}_t \in  H^1_{\lad \otimes \si}([0,T]; \maf{H}_+)$ that satisfy the following equation with Dirichlet boundary condition in $t$:
\begin{equation} \label{eq:diver}
    -\p_t Y^{(0)}_t + \left(\mc{L}_{+0}^H \right)^{\star} \ Y_t^{(1)} = F_t\,, \q \text{$Y^{(j)}_t|_{t = 0, T} = 0$ for $j = 0, 1$}\,.
\end{equation}
In addition, we have 
\begin{equation} \label{eq:mainest}
\begin{aligned}
    & \big\|Y^{(0)}_t\big\|_{2,\lad \otimes \si} \le \sqrt{2} \max \left\{\frac{1}{s_H}, \frac{T}{\pi} \right\}  \Norm{F_t}_{2,\lad \otimes \si}\,, \\ 
& \big\|Y^{(1)}_t\big\|_{2,\lad \otimes \si} \le  \sqrt{2} \max\left\{\frac{6}{s_H^2 T} + \frac{3}{s_H}, \frac{T}{\pi} \right\}  \Norm{F_t}_{2,\lad \otimes \si}\,, 
\end{aligned} 
\end{equation}
and 
\begin{equation} \label{eq:mainestdev}
     \begin{aligned}
         & \big\|\p_t Y^{(0)}_t\big\|_{2,\lad \otimes \si} \le \sqrt{26} \left(1 + \frac{1}{\sqrt{26}} + \frac{2}{s_H T} \right)  \Norm{F_t}_{2,\lad \otimes \si}\,,\\
         & \big\|\p_t Y^{(1)}_t\big\|_{2,\lad \otimes \si} \le 12 \left( \frac{13}{12} + \frac{\sqrt{5}}{s_H T} + \frac{2}{(s_H T)^2} \right) \Norm{F_t}_{2,\lad \otimes \si}\,.
     \end{aligned}
\end{equation}
\end{lemma}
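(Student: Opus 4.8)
The plan is to split $F_t$ according to the abstract elliptic operator $\maf{L}^H = -\p_{tt} + (\mc{L}_{+0}^H)^\star\mc{L}_{+0}^H$ from \eqref{eq:time-laplace} and handle the two pieces by different mechanisms. Let $\hh := \{\Psi_t\,;\ \maf{L}^H\Psi_t = 0\}$ denote the (finite-dimensional) space of time-dependent operators solving $\maf{L}^H\Psi_t = 0$ \emph{without} boundary conditions, and decompose $F_t = F_t^\sharp + F_t^\flat$ orthogonally in $L^2_{\lad\otimes\si}([0,T];\wb{\maf{H}}_0)$ with $F_t^\sharp\perp\hh$ and $F_t^\flat\in\hh$. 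Since $\mi\in\hh$ (because $\mc{L}_{+0}^H\mi = 0$ and $\p_{tt}\mi = 0$) and $\l Z_t,\mi\r_{2,\lad\otimes\si}=\overline{\l Z_t\r_{\lad\otimes\si}}$, both $F^\sharp$ and $F^\flat$ inherit $\l\cdot\r_{\lad\otimes\si}=0$. I will build a pair $(Y^{(0)},Y^{(1)})$ solving \eqref{eq:diver} for each of $F^\sharp$ and $F^\flat$ separately; the final pair is the sum, and the bounds \eqref{eq:mainest}--\eqref{eq:mainestdev} follow by the triangle inequality from the bounds for the two constituents.

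\emph{The piece $F^\sharp\perp\hh$.} Here I solve the Neumann problem $\maf{L}^H X_t = F_t^\sharp$, $\p_tX_t|_{t=0,T}=0$ via \cref{lem:prioriest} (well-posed since $\l F^\sharp\r_{\lad\otimes\si}=0$), and set $(Y_t^{(0)},Y_t^{(1)}) := (\p_tX_t,\mc{L}_{+0}^HX_t)$. Then $-\p_tY_t^{(0)} + (\mc{L}_{+0}^H)^\star Y_t^{(1)} = \maf{L}^HX_t = F_t^\sharp$, and $Y_t^{(0)}$ vanishes at $t=0,T$ by the Neumann condition. The point is that $Y_t^{(1)}$ vanishes there too: diagonalize $(\mc{L}_{+0}^H)^\star\mc{L}_{+0}^H$ on $\wb{\maf{H}}_0$ into eigenspaces $E_0=\mathrm{Span}\{\mi\},E_1,E_2,\dots$ with eigenvalues $0,\mu_1^2\le\mu_2^2\le\cdots$ (so $\mu_k\ge s_H$ for $k\ge1$ by \eqref{assp:2}), expand $X_t=\sum_kX_t^{(k)}$ componentwise so that $-\p_{tt}X_t^{(k)}+\mu_k^2X_t^{(k)}=F_t^{\sharp,(k)}$, observe that $F^\sharp\perp\hh$ forces each scalar coefficient of $F^{\sharp,(k)}$ to be $L^2[0,T]$-orthogonal to $\cosh(\mu_k t)$ and $\sinh(\mu_k t)$, and test the scalar ODE against $\cosh(\mu_k t)$ and $\sinh(\mu_k t)$: after two integrations by parts the Neumann boundary terms collapse to multiples of $X_t^{(k)}|_{t=0,T}$, yielding the Dirichlet condition $X_t^{(k)}|_{t=0,T}=0$ for all $k\ge1$ and hence $\mc{L}_{+0}^HX_t|_{t=0,T}=0$ (the $E_0$-mode carries no $\mc{L}_{+0}^H$-contribution). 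The estimates for this pair are precisely \eqref{eq:prioriest}--\eqref{eq:prioriest2}.

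\emph{The piece $F^\flat\in\hh$.} For this piece I build $(Y^{(0)},Y^{(1)})$ explicitly by separation of variables, eigenmode by eigenmode (this is where the arguments of \cite{cao2023explicit} and \cite{brigati2023construct} enter). Fix an orthonormal basis $\{e_k\}$ of each $E_k$; the corresponding scalar coefficient $c=c(t)$ of $F_t^\flat$ satisfies $-c''+\mu_k^2c=0$, so $c$ is a combination of $\cosh(\mu_kt),\sinh(\mu_kt)$ for $k\ge1$ (resp.\ of $1,t$ for $k=0$, with $\int_0^Tc=0$ by the zero-average property). In a mode with $\mu_k>0$ I seek a $Y^{(0)}$-component $g_k(t)e_k$ and a $Y^{(1)}$-component $h_k(t)\,\mu_k^{-1}\mc{L}_{+0}^He_k$ (a unit vector in $\maf{H}_+$, since $(\mc{L}_{+0}^H)^\star\mc{L}_{+0}^He_k=\mu_k^2e_k$) solving the single scalar identity $-g_k'+\mu_kh_k=c$ with $g_k(0)=g_k(T)=h_k(0)=h_k(T)=0$. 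This is underdetermined, so I set $h_k(t)=\tfrac{c_k}{T}\,b(t)$ with the fixed bump $b(t)=6t(T-t)/T^2$ (normalized so $\tfrac1T\int_0^Tb=1$) and flux $c_k=\mu_k^{-1}\int_0^Tc$, and then $g_k(t)=\mu_k\int_0^th_k-\int_0^tc$, which vanishes at $t=0$ by construction and at $t=T$ because $\mu_k\int_0^Th_k=\int_0^Tc$; the $k=0$ mode is simpler ($g_0(t)=-\int_0^tc$, no $Y^{(1)}$-part). Estimating $\|g_k\|,\|h_k\|,\|g_k'\|,\|h_k'\|$ over $[0,T]$ in terms of $\|c\|_{L^2}$ uses Cauchy--Schwarz on $\int_0^Tc$, the one-dimensional Poincar\'e inequality $\|f\|_{L^2}\le\tfrac{T}{\pi}\|f'\|_{L^2}$ for $g_k$ (which has a zero), the explicit norms $\|b\|_{L^2}^2=\tfrac{6T}{5}$ and $\|b'\|_{L^2}^2=\tfrac{12}{T}$, and the bound $\mu_k\ge s_H$; then $g_k'=\mu_kh_k-c$ controls the derivative of $Y^{(0)}$. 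Summing squares over $k$ (Parseval, since $\|Y^{(0),\flat}\|_{2,\lad\otimes\si}^2=\sum_k\tfrac1T\|g_k\|_{L^2}^2$ and likewise for $Y^{(1),\flat}$) converts these into bounds in $\|F^\flat\|_{2,\lad\otimes\si}$ with the stated $s_H$- and $T$-dependence.

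The main obstacle is this last step: extracting the explicit, $\mu_k$-uniform constants in \eqref{eq:mainest}--\eqref{eq:mainestdev}. One must check that the crude bound $(\int_0^Tc)^2\le T\|c\|_{L^2}^2$ combined with $\mu_k\ge s_H$ already suffices (it does, since the slowest mode is the worst), and keep careful track of how the bump normalization, the factor $\mu_k$ in $g_k'=\mu_kh_k-c$, and the Poincar\'e constant $T/\pi$ combine; matching the exact numerical factors ($\sqrt2$, $\sqrt{26}$, $12$, $\tfrac{13}{12}$, $\sqrt5$, and the $\tfrac{6}{s_H^2T}+\tfrac{3}{s_H}$ terms) is bookkeeping, but it needs attention both inside the mode-by-mode estimates and when adding the $F^\sharp$ and $F^\flat$ pairs at the end.
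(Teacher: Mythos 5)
Your overall strategy matches the paper's: decompose $F_t = F^\sharp_t + F^\flat_t$ into the $\hh$-orthogonal and $\hh$-components, solve the Neumann problem for $F^\sharp$ via \cref{lem:prioriest} and take $(Y^{(0)},Y^{(1)}) = (\p_t X, \mc{L}^H_{+0} X)$, and build an explicit divergence solution for $F^\flat$ mode-by-mode by separation of variables. Your mode-by-mode verification that $X_t^{(k)}|_{t=0,T}=0$ for $k\ge 1$ when $F^\sharp\perp\hh$ (testing the scalar ODE against $\cosh(\mu_k t)$ and $\sinh(\mu_k t)$) is correct and is an explicit version of the paper's abstract integration-by-parts argument.

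The gap is in your construction for the $F^\flat$ piece. You use the $\mu$-independent bump $b(t) = 6t(T-t)/T^2$, whereas the paper uses a $\mu_k$-\emph{adapted} cubic polynomial in the variable $e^{-\mu_k t}$. Your choice is simpler and actually yields tighter estimates for $\|Y^{(1)}\|$, $\|\p_t Y^{(0)}\|$, $\|\p_t Y^{(1)}\|$ — but it produces a strictly worse estimate for $\|Y^{(0)}\|$, and the stated bound fails. Concretely: from $g_k'=\mu_k h_k - f_k$ with $\mu_k\|h_k\|_{2,\lad}\le\sqrt{6/5}\,\|f_k\|_{2,\lad}$ and the Poincar\'e inequality for $g_k$ (which vanishes at both endpoints) you get
\begin{equation*}
\|g_k\|_{2,\lad} \le \frac{T}{\pi}\bigl(1+\sqrt{6/5}\bigr)\|f_k\|_{2,\lad} \approx \frac{2.1\,T}{\pi}\|f_k\|_{2,\lad}\,,
\end{equation*}
\emph{uniformly over all $k\ge 1$}, i.e.\ a $T$-growing bound. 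The paper's $\psi_{0,\mu}$ instead obeys $\|\psi_{0,\mu}\|_{2,\lad}^2\le\frac{(1-\ell)^2}{30(1+\ell)\mu^2}v_T$, which after accounting for the $|c_\pm^\mu|$ coefficients gives a $T$-\emph{uniform} bound $\lesssim \tfrac{1}{\sqrt5 s_H}$ on the $\mu$-modes, leaving only the $k=0$ mode to contribute the $T/\pi$ term. When you assemble your bound for $T\gg 1/s_H$ you obtain
\begin{equation*}
\|Y^{(0)}\|_{2,\lad\otimes\si} \le \sqrt{1 + \bigl(1+\sqrt{6/5}\bigr)^2}\,\frac{T}{\pi}\,\|F\|_{2,\lad\otimes\si} \approx \frac{2.3\,T}{\pi}\|F\|_{2,\lad\otimes\si}\,,
\end{equation*}
which exceeds the stated $\sqrt{2}\,T/\pi$. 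So this is not merely bookkeeping as you suggest at the end: a $T$-independent bump cannot reproduce the $1/\mu$ decay in $\|\psi_{0,\mu}\|$ that the paper's $e^{-\mu t}$-adapted polynomial provides, and without it the first estimate in \eqref{eq:mainest} cannot be reached. You would either need to switch to a $\mu_k$-adapted ansatz for the bump (as the paper does), or prove the lemma with a larger numerical constant and propagate the change through \cref{thm:tspoincare,thm:ratest}.
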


\begin{proof}
{\bf Decomposition of source $F_t$.} 
We recall the operator $\maf{L}^H$ defined in \eqref{eq:time-laplace}. 
Let $\hh$ be the subspace of $L^2_{\lad \otimes \si}([0,T]; \wb{\maf{H}}_0)$ consisting of those $X_t \in H^2_{\lad \otimes \si}([0,T]; \wb{\maf{H}}_0)$ satisfying $\maf{L}^H X_t = 0$.
Then, for a given $F_t \in L^2_{\lad \otimes \si}([0,T];\wb{\maf{H}}_0 )$ with $\l F_t \r_{\lad\otimes \si} = 0$, we decompose
\begin{equation} \label{eq:sourcedecomp}
    F_t = F^h_t + F^\perp_t \q \text{with}\q F^h_t \in \hh\,,\ F^\perp_t \perp \hh\,,
\end{equation}
with respect to the inner product $\l \dd, \dd \r_{2, \lad \otimes \si}$. Since a constant operator $X_t = c {\bf 1}$ with $c \in \C$ belongs to $\hh$, we have $\l F^\perp_t\r_{\lad \otimes \si} = 0$ and thus $\l F^h_t \r_{\lad \otimes \si} = \l F - F^\perp_t \r_{\lad \otimes \si} = 0$. We shall construct the solution to the equation \eqref{eq:diver} with source $F^h_t$ or $F^\perp_t$ separately. 

\medskip

\noindent
{\bf Construction of solution for $F^\perp$.}
We first consider the easier case of $F^\perp_t$. Let $X_t$ be the solution to the variational equation \eqref{eq:elliptic2} with source $F^\perp_t$. We take the test function $Z_t \in \hh$ and obtain
\begin{equation*}
 0 = \left\l Z_t, F_t^\perp \right\r_{2, \lad \otimes \si} = \left\l \maf{L}^H Z_t, X_t \right\r_{2,\lad \otimes \si} + \left\l \p_t Z_t, X_t \right\r_{\si,1/2}|_{t = T} - \left\l \p_t Z_t, X_t \right\r_{\si,1/2}|_{t = 0}\,,
\end{equation*}
using integration by parts, thanks to the $H^2$-regularity of $Z_t$ in $t$. Noting that $ \maf{L}^H Z_t = 0$ and $\p_t Z_t$ is arbitrary at $t = 0, T$ (see also the decomposition \eqref{auxeq:depsource} below for an operator in $\mathbb{H}$), we can conclude  $X_t|_{t = 0, T} = 0$. We then define 
\begin{align*}
Y^{(0)}_t = \p_t X_t\,,\q  Y^{(1)}_t = \mc{L}_{+0}^H X_t\,,
\end{align*}
with the following estimate by \cref{lem:prioriest}, 
\begin{equation} \label{eq:est1}
   \big\|Y^{(0)}_t\big\|^2_{2,\lad \otimes \si} + \big\|Y^{(1)}_t\big\|_{2,\lad \otimes \si}^2 \le \max\left\{\frac{1}{s^2_H}, \frac{T^2}{\pi^2} \right\} \Norm{F^\perp_t}_{2,\lad \otimes \si}^2\,,
\end{equation}
and 
\begin{align} \label{eq:devest1}
     \big\|\p_t Y^{(0)}_t\big\|^2_{2,\lad \otimes \si} + \big\|\p_t Y^{(1)}_t\big\|_{2,\lad \otimes \si}^2 \le \Norm{\maf{L}^H X_t}_{2,\lad\otimes \si}^2 \le \Norm{F^\perp_t}_{2,\lad \otimes \si}^2\,.
\end{align}

\medskip

\noindent
{\bf Construction of solution for $F^h$.}
We proceed to consider the case of $F^h_t$. For this, recall that $\left(\mc{L}_{+0}^H\right)^\star \mc{L}_{+0}^H = \Pi_0 \left(\mc{L}^H\right)^2 \Pi_0$ is a self-adjoint operator on $\maf{H}_0$. We assume that its orthonormal eigenvectors $\{W_\mu\}$ 
with respect to KMS inner product $\l \dd , \dd \r_{\si,1/2}$ are given by 
\begin{align*} 
    \left(\mc{L}_{+0}^H\right)^\star \mc{L}_{+0}^H W_\mu = \mu^2 W_\mu\,,
\end{align*}
where $\mu^2 \ge s_H^2 > 0$ by \cref{assp:2}. With the help of $\{W_\mu\}$, we have the following decomposition:
\begin{equation}  \label{auxeq:depsource}
      F^h_t = f_0(t) + \sum_\mu f_\mu(t) W_\mu\,,
\end{equation}
with $\l f_0(t) \r_\lad = 0$. Then the condition $\maf{L}^H F_t^h = 0$ gives 
\begin{align*}
     0 = - f_0''(t) + \sum_\mu \left(-f_\mu'' + \mu^2 f_\mu(t)\right) W_\mu\,,
\end{align*}
further implying 
\begin{equation*}
    f_0''(t) = 0 \qquad \text{and} \qquad f_\mu'' - \mu^2 f_\mu(t) = 0\,.    
\end{equation*}
Solving above ODEs yields $f_0(t) = c(t - \frac{T}{2})$ for some constant $c \in \C$, and that $f_\mu(t)$ can be decomposed as: for some $c_+^\mu, c_-^\mu \in \C$, 
\begin{equation} \label{eq:generalflad}
    f_\mu(t) = c_+^\mu e^{- \mu t} + c_-^\mu e^{-\mu(T - t)}\,,
\end{equation}
where coefficients  $c_{\pm}^\mu$ are computed below. 
Due to linearity, to construct the solution to \cref{eq:diver} for $F^h_t$, it suffices to construct for each term in the decomposition \eqref{auxeq:depsource} and add them together.

\medskip

\noindent
\textit{Computation of coefficients $c_{\pm}^\mu$.} We now calculate the constants $c_{\pm}^\mu$ in \eqref{eq:generalflad}. For convenience, we define constants:
\begin{equation} \label{def:constant}
    v_T := \norm{e^{-\mu t}}_{2,\lad}^2 = \norm{e^{-\mu (T - t)}}_{2,\lad}^2 =  \int_0^T e^{- 2 \mu t} \ud \lad(t) = \frac{1 - \ell^2}{2 \mu T}\,,  \q \ell := e^{- \mu T}\,,
\end{equation}
and functions ${\bf e}_{\mu, \pm}$ as linear combinations of $e^{- \mu t}$ and $e^{- \mu(T - t)}$:
\begin{equation*}
  {\bf e}_{\mu,+}(t) := e^{- \mu t} - \frac{\l e^{- \mu t}, e^{- \mu(T - t)}\r_{2,\lad}}{\norm{e^{- \mu(T - t)}}_{2,\lad}^2}e^{- \mu(T - t)} = e^{-\mu t} -  v_T^{-1} e^{- \mu(2 T - t)}\,,
\end{equation*}
and 
\begin{equation*}
    {\bf e}_{\mu,-}(t) := e^{- \mu(T - t)} - \frac{\l e^{- \mu t}, e^{- \mu(T - t)}\r_{2,\lad}}{\norm{e^{- \mu t}}_{2,\lad}^2}e^{- \mu t} = e^{- \mu(T - t)} - v_T^{-1} e^{- \mu (t + T)}\,.
\end{equation*}
Taking the inner product of \cref{eq:generalflad} with $e^{-\mu t}$ and $e^{-\mu (T - t)}$ and solving the resulting linear equations in $c_{\pm}^\mu \in \C$, we find 
\begin{equation*}
   c_{+}^\mu := \frac{\l f_\mu, {\bf e}_{\mu, +}\r_{2\,\lad}}{\l {\bf e}_{\mu, +}, e^{-\mu t}\r_{2,\lad}}\,,\q  c_{-}^\mu: = \frac{\l f_\mu, {\bf e}_{\mu, -}\r_{2\,\lad}}{\l {\bf e}_{\mu, -}, e^{-\mu (T - t)}\r_{2,\lad}}\,.
\end{equation*}
Before we proceed, we estimate 
\begin{equation} \label{eq:sourceestimate}
    \begin{aligned}
     \norm{c_{+}^\mu e^{-\mu t}}_{2,\lad}^2 = |c_{+}^\mu|^2 v_T \le \left(\frac{\norm{{\bf e}_{\mu, +}}_{2,\lad}}{\l {\bf e}_{\mu, +}, e^{-\mu t}\r_{2,\lad}} \right)^2 \norm{f_\mu}_{2,\lad}^2 v_T = \frac{v_T^2}{v_T^2 - \ell^2} \norm{f_\mu}_{2,\lad}^2\,,
    \end{aligned}
\end{equation}
by a direct computation: 
\begin{equation*}
    \norm{{\bf e}_{\mu, +}}_{2,\lad}^2 = \l {\bf e}_{\mu, +}, e^{-\mu t}\r_{2,\lad} = v_T - v_T^{-1} \ell^2\,.
\end{equation*}
It is also easy to see that $v_T^2/(v_T^2 - \ell^2)$ is monotonically decreasing in $\mu T$ and 
\begin{equation*}
    \frac{v_T^2}{v_T^2 - \ell^2} \to \begin{dcases}
        1\,,&   {\rm as}\q \mu  T \to \infty\,, \\
        \frac{3}{(\mu T)^2}(1 + \mathcal{O}(\mu T))\,,&  {\rm as}\q \mu  T \to 0\,. \\
    \end{dcases}
\end{equation*}
Thus, for any $\mu, T > 0$, there holds
\begin{equation} \label{eq:constest}
    \frac{v_T^2}{v_T^2 - \ell^2} \le  \frac{4}{(\mu T)^2} + 1 \,.
\end{equation}
A similar estimate holds for $ \norm{c_{-}^\mu e^{-\mu (T -t)}}_{2,\lad}$: 
\begin{align*}
     \norm{c_{-}^\mu e^{-\mu (T - t)}}_{2,\lad}^2 = |c_{-}^\mu|^2 v_T \le  \frac{v_T^2}{v_T^2 - \ell^2} \norm{f_\mu}_{2,\lad}^2\,.
\end{align*}

\medskip

\noindent
\textit{Construction of solution for $F^h = f_0(t)$.}
We start with constructing the solution for $f_0(t)$, which is direct. We set $Y^{(0)}_t = - \int_0^t f_0(\tau) \ud \tau$ and $Y^{(1)}_t = 0$. It is easy to see, by $Y_t^{(0)}|_{t = 0, T} = 0$,  
\begin{equation} \label{eq:est0}
    \big\|Y^{(0)}_t\big\|_{2,\lad \otimes \si} \le \frac{T}{\pi} \norm{f_0}_{2,\lad \otimes \si}\,,
\end{equation}
and by definition of $Y_t^{(0)}$, 
\begin{equation} \label{eq:devest0}
   \big\|\p_t Y^{(0)}_t\big\|_{2,\lad \otimes \si} = \norm{f_0}_{2,\lad \otimes \si}\,.
\end{equation}

\medskip

\noindent
\textit{Construction of solution for $F^h = e^{-\mu t} W_\mu$.} 
We next consider the case of $f_\mu(t)W_\mu$ with $f_\mu$ being of the form \eqref{eq:generalflad}. By linear combination, we only need to consider $f_\mu(t) = e^{-\mu t}$ or $e^{-\mu (T - t)}$. We shall focus on the former case, as the latter one can be similarly handled by reversing the time. To be precise, we shall construct the solution to 
\begin{align*}
     -\p_t Y^{(0)}_t + \left(\mc{L}_{+0}^H \right)^{\star} \ Y_t^{(1)} = e^{-\mu t} W_\mu\,.
\end{align*}
We consider the following ansatz: $Y^{(0)}_t = \psi_{0,\mu}(t) W_\mu$ and $Y^{(1)}_t = \psi_{1,\mu}(t) \mc{L}^H_{+0} W_\mu$ for some functions $\psi_{0,\mu}$ and $\psi_{1,\mu}$ satisfying $\psi_{0,\mu}|_{t = 0,T} = \psi_{1,\mu}|_{t = 0,T} = 0$ and the equation:  
\begin{equation} \label{eq:psi0psi1}
    - \psi_{0,\mu}'(t) + \mu^2 \psi_{1,\mu}(t) = e^{- \mu t}\,.
\end{equation}
We first compute
\begin{equation} \label{auxeq:source}
    \norm{e^{-\mu t} W_\mu}_{2,\lad \otimes \si}^2 = \frac{1}{T} \int_0^T e^{-2 \mu t} \ud t = v_T\,.
\end{equation}
Motivated by the proof of \cite{cao2023explicit}*{Lemma 2.6}, we consider the following ansatz: 
\begin{align} \label{eq:soleqphi}
    \psi_{0,\mu}(t) = p_{0,\mu}(e^{-\mu t})\,,\q \psi_{1,\mu}(t) = p_{1,\mu}(e^{-\mu t})\,,
\end{align}
where $p_{0,\mu}$ and $p_{1,\mu}$ are polynomial functions to be constructed. Note from the equation \eqref{eq:psi0psi1} that (recalling $\ell = e^{-\mu T}$ in \cref{def:constant}) 
\begin{equation} \label{eq:constraint}
    \int_0^T \psi_{1,\mu}(t) \ud t =  \int_\ell^1 \frac{p_{1,\mu}(x) }{\mu x} \ud x = \int_0^T \frac{e^{-\mu t} + \psi_{0,\mu}'(t)}{\mu^2} \ud t = \frac{1 - \ell}{\mu^3}\,.
\end{equation}
We define 
\begin{equation} \label{eq:p1mu}
     p_{1,\mu}(x) =  \mu^{-2} g_\mu(x)\,, \q g_{\mu}(x) = \frac{6}{(\ell - 1)^2} x (1- x)(x - \ell)\,,
\end{equation}
which satisfies the constraint \eqref{eq:constraint} and $p_{1,\mu}(1) = p_{1,\mu}(\ell) = 0$. We then compute 
\begin{equation} \label{auxeq:apsi}
    \psi_{0,\mu}(t) = \int_0^t \mu^2 p_{1,\mu}(e^{-\mu s}) - e^{- \mu s} \ud s = \frac{1}{\mu} \int_{e^{-\mu t}}^1 \frac{g_{\mu}(x) - x}{x} \ud x = p_{0,\mu}(e^{-\mu t} )\,,
\end{equation}
and by \cref{eq:p1mu},  
\begin{align} \label{auxeq:appp}
   \mu p_{0,\mu}(s) = \int_s^1 \frac{6}{(\ell - 1)^2} (1- x)(x - \ell) - 1 \ud x = \frac{(s-1)(\ell - 2 s + 1)(\ell - s)}{(\ell - 1)^2}\,.
\end{align}
By the construction, we have, for $x \in [\ell, 1]$, 
\begin{equation*}
   0 \le  g_\mu(x) \le \frac{3}{2} x\,,\q  g_\mu'(x) = \frac{6}{(\ell - 1)^2} (-3 x^2 - \ell + 2 x (1 + \ell))\,.
\end{equation*}
It follows that, by \cref{eq:constraint}, 
\begin{align} \label{auxeq11}
    \norm{\psi_{1,\mu}}_{2,\lad}^2 = \frac{1}{T \mu^5}\int_\ell^1  \frac{g_\mu(x)^2}{x} \ud x \le \frac{9}{4 T \mu^5}\int_\ell^1  x \ud x \le \frac{9(1 - \ell^2)}{8 T \mu^5} = \frac{9}{4 \mu^4} v_T \,,
\end{align}
and 
\begin{equation} \label{auxeq12}
\begin{aligned}
    \norm{\psi_{1,\mu}'}_{2,\lad}^2 = \frac{1}{T \mu^3}\int_\ell^1 g'(x)^2 x \ud x & = \frac{36}{T \mu^3 (\ell - 1)^4}  \int_\ell^1 (-3 x^2 - \ell + x (2 + 2 \ell))^2 x \ud x \\
    & = \frac{6}{5 T \mu^3 (\ell - 1)^4}  (1 - \ell)^3 (3 + 2 \ell + 2 \ell^2 + 3 \ell^3) \le \frac{12}{T \mu^3 (1 - \ell)}\,.
\end{aligned}
\end{equation}
Noting $ \psi_{0,\mu}'(t) = \mu^2 \psi_{1,\mu}(t) - e^{\mu t}$ by \cref{eq:psi0psi1}, a straightforward computation gives 
\begin{align} \label{auxeq21}
    \norm{\psi_{0,\mu}'}_{2,\lad}^2 \le 2 \mu^4 \norm{\psi_{1,\mu}}_{2,\lad}^2 + 2 \norm{e^{\mu t}}_{2,\lad}^2 \le 2 \mu^4 \frac{9}{4 \mu^4} v_T + 2 v_T = \frac{13}{2} v_T\,.
\end{align}
Then, by \cref{auxeq:apsi,auxeq:appp}, we have 
\begin{equation} \label{auxeq22}
\begin{aligned}
     \norm{\psi_{0,\mu}}_{2,\lad}^2 = \frac{1}{T \mu} \int_\ell^1 \frac{p_{0,\mu}(x)^2}{x} \ud x & = \frac{(1-\ell)^3}{T \mu^3} \int_0^1  \frac{ (s-1)^2(2s - 1)^2 s^2}{s + \ell(1-s)} \ud s \\
     & \le \frac{(1-\ell)^3}{T \mu^3} \int_0^1 (s-1)^2(2s - 1)^2 s \ud s = \frac{(1-\ell)^2}{30(1+\ell) \mu^2} v_T\,.
\end{aligned}
\end{equation}

\medskip

\noindent
\textit{Estimate of norms.} Recalling the decomposition \eqref{auxeq:depsource}-\eqref{eq:generalflad} for $F^h_t$, with the help of $\psi_{0,\mu}$ and $\psi_{1,\mu}$ defined above, 
 the solution to \eqref{eq:diver} with source $F^h_t = \sum_\mu f_\mu(t) W_\mu$  can be constructed as follows: 
\begin{align} \label{eq:repy0}
    Y_t^{(0)} = - \int_0^t f_0(\tau) \ud \tau + \sum_\mu \left(c_+^\mu\psi_{0,\mu}(t) - c_-^\mu \psi_{0,\mu}(T - t)\right) W_\mu\,, 
\end{align}
and 
\begin{align} \label{eq:repy1}
    Y_t^{(1)} = \sum_\mu \left(c_+^\mu\psi_{1,\mu}(t) + c_-^\mu \psi_{1,\mu}(T - t)\right) \mc{L}^H_{+0} W_\mu\,.
\end{align}
We now estimate, by \cref{eq:est0}, 
\begin{align*}
     \big\|Y^{(0)}_t\big\|^2_{2,\lad \otimes \si}  & \le \frac{T^2}{\pi^2}\norm{f_0}_{2,\lad}^2 + \sum_\mu \norm{c_+^\mu\psi_{0,\mu}(t) - c_-^\mu \psi_{0,\mu}(T - t)}_{2,\lad}^2  \\
    & \le \frac{T^2}{\pi^2}\norm{f_0}_{2,\lad}^2 + 2 \sum_\mu (|c_+^\mu|^2 + |c_-^\mu|^2) \norm{\psi_{0,\mu}(t)}_{2,\lad}^2 \,,
\end{align*}
and similarly, 
\begin{align*}
     \big\|Y^{(1)}_t\big\|_{2,\lad \otimes \si}^2 & \le \sum_\mu \norm{c_+^\mu\psi_{1,\mu}(t) + c_-^\mu \psi_{1,\mu}(T - t)}^2_{2,\lad} \mu^2 \\
    & \le  2 \mu^2 \sum_\mu (|c_+^\mu|^2 + |c_-^\mu|^2) \norm{\psi_{1,\mu}(t)}_{2,\lad}^2\,.
\end{align*}
Then, thanks to \cref{auxeq22}, we have, by $\mu^2 \ge s^2_H$ and \cref{eq:sourceestimate}, 
\begin{equation}\label{eq:est2}
    \begin{aligned}
     \sum_\mu |c_+^\mu|^2 \norm{\psi_{0,\mu}(t)}_{2,\lad}^2   \le & \sum_\mu |c_+^\mu|^2 \frac{(1-\ell)^2}{30(1+\ell)} \frac{1}{s^2_H}  v_T \\
    \le & \sum_\mu  \frac{(1-\ell)^2}{30(1+\ell)} \frac{1}{s^2_H} \frac{v_T^2}{v_T^2 - \ell^2} \norm{f_\mu}_{2,\lad}^2  \\
    \le &\, \frac{1}{20} \frac{1}{s^2_H} \sum_\mu   \norm{f_\mu}_{2,\lad}^2 \,,
    \end{aligned}
\end{equation}
where the third inequality follows from 
\begin{equation*}
     \frac{(1-\ell)^2}{1+\ell} \frac{v_T^2}{v_T^2 - \ell^2}  \le \frac{3}{2}\,,\q  \mu T > 0\,.
\end{equation*}
Similarly, by \cref{eq:constest,auxeq11}, we have 
\begin{equation}\label{eq:est3}
    \begin{aligned}
     \sum_\mu |c_+^\mu|^2  \mu^2 \norm{\psi_{1,\mu}(t)}_{2,\lad}^2   \le & \sum_\mu |c_+^\mu|^2 \frac{9}{4 s_H^2} v_T \le \sum_\mu \frac{9}{4 s^2_H}  \frac{v_T^2}{v_T^2 - \ell^2} \norm{f_\mu}_{2,\lad}^2  \\
    \le & \, 9 \left(\frac{1}{s_H^4 T^2} + \frac{1}{4 s^2_H} \right) \sum_\mu \norm{f_\mu}_{2,\lad}^2\,.
    \end{aligned}
\end{equation}
The estimate for the part associated with $c_-^\mu$ can be dealt with similarly. We  then conclude
\begin{equation} \label{eq:cond1}
     \big\|Y^{(0)}_t\big\|^2_{2,\lad \otimes \si} \le \max \left\{\frac{1}{5 s^2_H} , \frac{T^2}{\pi^2} \right\} \norm{F_t^h}_{2,\lad \otimes \si}^2\,, \q \big\|Y^{(1)}_t\big\|_{2,\lad \otimes \si}^2 \le 9 \left(\frac{4}{s_H^4 T^2} + \frac{1}{s^2_H} \right) \norm{F_t^h}_{2,\lad \otimes \si}^2\,,
\end{equation}
by $\norm{F_t^h}_{2,\lad \otimes \si}^2 = \norm{f_0}_{2,\lad}^2 + \sum_\mu \norm{f_\mu}_{2,\lad}^2$. In the same manner, we estimate 
\begin{equation*}
     \big\|\p_t Y^{(0)}_t\big\|^2_{2,\lad \otimes \si} \le \norm{f_0}_{2,\lad}^2 + 2 \sum_\mu (|c_+^\mu|^2 + |c_-^\mu|^2) \norm{\psi_{0,\mu}'(t)}_{2,\lad}^2 \,,
\end{equation*}
and 
\begin{equation*}
       \big\|\p_t Y^{(1)}_t\big\|_{2,\lad \otimes \si}^2 \le 2 \mu^2 \sum_\mu (|c_+^\mu|^2 + |c_-^\mu|^2)   \norm{\psi_{1,\mu}'(t)}_{2,\lad}^2\,.
\end{equation*}
By using \cref{auxeq12,auxeq21}, as well as \cref{eq:sourceestimate} with \cref{eq:constest}, we find 
\begin{equation*}  
    \begin{aligned}
       \sum_\mu |c_+^\mu|^2 \norm{\psi_{0,\mu}'(t)}_{2,\lad}^2  
       \le & \sum_\mu    \frac{13}{2}  \frac{v_T^2}{v_T^2 - \ell^2} \norm{f_\mu}_{2,\lad}^2 \\
        \le & \sum_\mu  \left(\frac{13}{2} + \frac{26}{(\mu T)^2}\right) \norm{f_\mu}_{2,\lad}^2\,.
    \end{aligned}
\end{equation*}
and 
\begin{equation*}  
    \begin{aligned}
       \sum_\mu |c_+^\mu|^2 \mu^2 \norm{\psi_{1,\mu}'(t)}_{2,\lad}^2 
       \le & \sum_\mu \frac{12}{T \mu (1 - \ell) v_T}    \frac{v_T^2}{v_T^2 - \ell^2} \norm{f_\mu}_{2,\lad}^2 \\
        \le &\, 36 \sum_\mu  \left( 1 + \frac{5}{(\mu T)^2} + \frac{4}{(\mu T)^4} \right) \norm{f_\mu}_{2,\lad}^2\,,
    \end{aligned}
\end{equation*}
where the last inequality is implied by 
\begin{equation*}
   T \mu (1 - \ell) v_T = \frac{(1-\ell)^2(1+\ell)}{2} \ge \frac{(1-\ell)^2}{2} = \frac{(\mu T)^2}{2}\,,\q \text{as}\q \mu T \to 0\,,
\end{equation*}
and hence 
\begin{equation*}
    \frac{1}{T \mu (1 - \ell) v_T} \le 3 + \frac{3}{(\mu T)^2}\,.
\end{equation*}
Therefore, similarly as for \cref{eq:cond1}, by $\mu^2 \ge s^2_H$, we have 
\begin{equation*}
    \big\|\p_t Y^{(0)}_t\big\|^2_{2,\lad \otimes \si}  \le 52 \left(\frac{1}{2} + \frac{2}{(s_H T)^2} \right) \norm{F_t^h}_{2,\lad \otimes \si}^2\,, 
\end{equation*}
and 
\begin{equation*}
    \big\|\p_t Y^{(1)}_t\big\|_{2,\lad \otimes \si}^2 \le 144 \left( 1 + \frac{5}{(s_H T)^2} + \frac{4}{(s_H T)^4} \right) \norm{F_t^h}_{2,\lad \otimes \si}^2\,.
\end{equation*}
The proof is completed by \cref{eq:est1,eq:devest1}. 
\end{proof}

With the help of the above lemmas, we are ready to give the proof of \cref{thm:tspoincare}. 

\begin{proof}[Proof of \cref{thm:tspoincare}]
We first note, by triangle inequality,   
\begin{equation} \label{eq:initial}
    \norm{X_t - \l X_t \r_{\lad \otimes \si}}_{2,\lad \otimes \si} \le \norm{\Pi_0 X_t - \l X_t \r_{\lad \otimes \si}}_{2,\lad \otimes \si} + \norm{(1 - \Pi_0)X_t}_{2,\lad \otimes \si}\,.
\end{equation}
It suffices to estimate $\norm{\Pi_0 X_t - \l X_t \r_{\lad \otimes \si}}_{2,\lad \otimes \si}$. 
Let $(Y^{(0)}_t, Y^{(1)}_t)$ be a solution constructed in \cref{lem:diverest} for the divergence-type equation \eqref{eq:diver} with source $F_t = \Pi_0 X_t - \l X_t \r_{\lad \otimes \si}$.   
Then, we can compute  
\begin{equation} \label{auxeq:space1}
    \begin{aligned}
          \norm{\Pi_0 X_t - \l X_t \r_{\lad \otimes \si}}_{2,\lad \otimes \si}^2 & = \left\l \Pi_0 X_t, - \p_t Y^{(0)}_t + \left(\mc{L}_{+0}^H\right)^{\star} Y^{(1)}_t \right\r_{2, \lad \otimes \si} \\
    & = \left\l \p_t \Pi_0 X_t, Y^{(0)}_t \right\r_{2, \lad \otimes \si} + \left\l \mc{L}_{+0}^H \Pi_0 X_t, Y^{(1)}_t \right\r_{2, \lad \otimes \si}\\
     & = \left\l \p_t \Pi_0 X_t - \mc{L}^H \Pi_0 X_t, Y^{(0)}_t - Y^{(1)}_t \right\r_{2, \lad \otimes \si}\,,
    \end{aligned}
\end{equation}
where the first equality is from 
\begin{align*}
    \left\l {\bf 1}, - \p_t Y^{(0)}_t + \left(\mc{L}_{+0}^H\right)^{\star} Y^{(1)}_t \right\r_{2, \lad \otimes \si} = 0\,,
\end{align*}
the second equality is by integration by parts for $t$ with $ Y_t^{(0)}|_{t = 0, T} = 0$, while
the third one uses \cref{ass:1} and the orthogonality from $Y^{(0)}_t \in \maf{H}_0$ and $Y^{(1)}_t \in \maf{H}_+$:
\begin{equation*}
\left\l \p_t \Pi_0 X_t,  Y^{(1)}_t \right\r_{\si,1/2} = \left\l \mc{L}_{+0}^H \Pi_0 X_t,  Y^{(0)}_t \right\r_{\si,1/2} = 0\,,\q \forall \ t \in [0,T]\,.
\end{equation*}
We continue to decompose the formula \eqref{auxeq:space1}:
\begin{equation} \label{auxeq:space2}
\begin{aligned}
     & \left\l (\p_t - \mc{L}^H) \Pi_0 X_t, Y^{(0)}_t - Y^{(1)}_t \right\r_{2, \lad \otimes \si} \\ = & \left\l (\p_t - \mc{L}^H)  X_t, Y^{(0)}_t - Y^{(1)}_t \right\r_{2, \lad \otimes \si} - \left\l (\p_t - \mc{L}^H) (1-\Pi_0) X_t, Y^{(0)}_t - Y^{(1)}_t \right\r_{2, \lad \otimes \si},
\end{aligned}
\end{equation}
where the second term can be further reformulated as follows:
\begin{equation} \label{auxeq:space3}
    \begin{aligned}
          \Big\l (\p_t - \mc{L}^H) (1-\Pi_0) X_t, & Y^{(0)}_t - Y^{(1)}_t \Big\r_{2, \lad \otimes \si} \\ & =  \left\l (1-\Pi_0) X_t, \p_t Y^{(1)}_t + \mc{L}_{+0}^H Y^{(0)}_t  - \mc{L}^H Y^{(1)}_t\right\r_{2, \lad \otimes \si}\,.
    \end{aligned}
\end{equation}

We next estimate $\norm{\Pi_0 X_t - \l X_t \r_{\lad \otimes \si}}_{2,\lad \otimes \si}$ with the computations in \cref{auxeq:space1,auxeq:space2,auxeq:space3}. First, thanks to the dissipativity of $\mc{L}^D$, the operator $\beta - \mc{L}^D$ is invertible on $L^2_{\lad \otimes \si}([0,T];\mc{B}(\mc{H}))$ for any $\beta > 0$, which, along with the estimate \eqref{eq:mainest}, implies 
\small
\begin{equation}\label{auxeq:mainest1}
    \begin{aligned}
         & \left|\left\l \big(\p_t - \mc{L}^H \big)   X_t, Y^{(0)}_t - Y^{(1)}_t \right\r_{2, \lad \otimes \si} \right| \\ \le & \Norm{\big(\beta - \mc{L}^D\big)^{-1/2}\big(\p_t - \mc{L}^H \big)  X_t}_{2,\lad\otimes \si} \Norm{\big(\beta - \mc{L}^D \big)^{1/2}\left(Y^{(0)}_t - Y^{(1)}_t\right)}_{2,\lad\otimes \si} 
    \\ \le &\, C_{2, T,\beta} \Norm{\big(\beta - \mc{L}^D\big)^{-1/2}\big(\p_t - \mc{L}^H\big)  X_t}_{2,\lad\otimes \si}  \Norm{\Pi_0 X_t - \l X_t \r_{\lad \otimes \si}}_{2,\lad \otimes \si}\,,
    \end{aligned}
\end{equation}
\normalsize
where $$C_{2, T,\beta} = \sqrt{2} \Norm{(\beta -  \mc{L}^D)^{1/2}}_{2 \to 2} \left(\max\left\{\frac{6}{s_H^2 T} + \frac{3}{s_H}, \frac{T}{\pi} \right\} + \max\left\{\frac{1}{s_H}, \frac{T}{\pi} \right\} \right).$$ Then, again by estimates in \cref{lem:diverest}, we have
\begin{equation} \label{auxeq:mainest2}
    \begin{aligned}
         &\big\| \p_t Y^{(1)}_t + \mc{L}_{+0}^H Y^{(0)}_t  - \mc{L}^H Y^{(1)}_t \big\|_{2,\lad \otimes \si} \\ \le \, & \big\| \p_t Y^{(1)}_t\big\|_{2,\lad \otimes \si} + \big\|\mc{L}_{+0}^H Y^{(0)}_t\big\|_{2,\lad \otimes \si} + \big\|\mc{L}^H Y^{(1)}_t\big\|_{2,\lad \otimes \si}  \\
 \le & \w{C}_{1,T} \Norm{\Pi_0 X_t - \l X_t \r_{\lad \otimes \si}}_{2,\lad \otimes \si}\,,
    \end{aligned}
\end{equation}
with 
\begin{equation*}
   \w{C}_{1,T} = \sqrt{2} + 13 + \frac{12\sqrt{5}}{s_H T} + \frac{24}{(s_H T)^2} +  \sqrt{2} \max\left\{\frac{6}{s_H^2 T} + \frac{3}{s_H}, \frac{T}{\pi} \right\} \Norm{\mc{L}^H \Pi_+}_{\maf{H} \to \maf{H}}\,,
\end{equation*}
where we also used, by similar estimates as \cref{eq:prioriest2,eq:repy0,eq:est2},
\begin{equation*}
    \big\|\mc{L}_{+0}^H Y^{(0)}_t\big\|_{2,\lad \otimes \si} \le \sqrt{2} \Norm{\Pi_0 X_t - \l X_t \r_{\lad \otimes \si}}_{2,\lad \otimes \si}\,.
\end{equation*}
The proof is completed by the following estimate, along with \cref{eq:initial,auxeq:mainest1,auxeq:mainest2},
\begin{multline*}
  \Norm{\Pi_0 X_t - \l X_t \r_{\lad \otimes \si}}_{2,\lad \otimes \si}^2 \le \big|\big\l (\p_t - \mc{L}^H)  X_t, Y^{(0)}_t - Y^{(1)}_t \big\r_{2, \lad \otimes \si} \big| \\ +  \Norm{(1-\Pi_0) X_t}_{2,\lad\otimes\si}   \big\| \p_t Y^{(1)}_t + \mc{L}_{+0}^H Y^{(0)}_t  - \mc{L}^H Y^{(1)}_t \big\|_{2,\lad \otimes \si}\,. \mbox{\qedhere}
\end{multline*}
\end{proof}

We end this section with the proof of \cref{thm:ratest}.

\begin{proof}[Proof of \cref{thm:ratest}]
Note that if $\l X_0\r_\si = 0$, then for any $t > 0$, $\l \mc{P}_t X_0 \r_{\si} = \l X_0 \r_{\si} = 0$. Without loss of generality, we assume $\l X_0\r_\si = 0$. Then, for any $t, T > 0$, we compute 
\begin{align} \label{eq:energyest}
          \frac{\rd}{\rd t} \frac{1}{T}\int_t^{t + T} \norm{X_\tau}_{2,\si}^2 \ud \tau & = \int_0^{T} \frac{\rd}{\rd t} \norm{X_{t + \tau}}_{2,\si}^2 \ud \lad(\tau) \notag \\ & = \int_0^T \l X_{t + \tau}, (\mc{L} + \mc{L}^\star) X_{t + \tau} \r_{\si,1/2} \ud \lad(\tau)  \\
      & = 2 \int_0^T \l X_{t + \tau}, \mc{L}^D X_{t + \tau} \r_{\si,1/2}  \ud \lad(\tau) = - 2 \int_0^T \mc{E}_{\mc{L}^D}(X_{t + \tau}, X_{t + \tau}) \ud \lad(\tau)\,, \notag
\end{align}
due to 
\begin{align*}
    \mc{L} + \mc{L}^\star = 2 \mc{L}^D\,.
\end{align*}
Recalling the coercivity \eqref{eq:coerciveLd}, we have, for any $s > 0$,  
\begin{align*}
    \mc{E}_{\mc{L}^D}(X_s, X_s) \ge \lad_D \norm{(1 - \Pi_0)X_s}_{2,\si}^2\,.
\end{align*}
On the other hand, by the Lindblad equation, 
\begin{equation*}
    \mc{L}^D X_s = (\p_s - \mc{L}^H) X_s\,,
\end{equation*}
which implies, for any $\beta > 0$, 
\begin{align*}
\mc{E}_{\mc{L}^D}(X_s, X_s) & = \left\l \mc{L}^D X_s,  \left(- \mc{L}^D\right)^{-1} \mc{L}^D X_s \right\r_{\si,1/2} \\
& \ge \Norm{\left(\beta - \mc{L}^D\right)^{-1/2} \mc{L}^D X_s}_{2,\si}^2 = \Norm{\left(\beta - \mc{L}^D\right)^{-1/2}\left(\p_s - \mc{L}^H\right) X_s}_{2,\si}^2\,.
\end{align*}
Here $\left(- \mc{L}^D\right)^{-1}$ in the first line is well-defined as the pseudo-inverse of $\mc{L}^D$.  
It follows from \eqref{eq:energyest} that for any $\eta \in (0,1)$, 
\begin{align*}
& \frac{\rd}{\rd t} \frac{1}{T}\int_t^{t + T} \norm{X_\tau}_{2,\si}^2 \ud \tau \\
\le & - 2 \int_0^T \eta \lad_D \Norm{(1 - \Pi_0)X_{t + \tau}}_{2,\si}^2 + (1 - \eta) \Norm{\left(\beta - \mc{L}^D\right)^{-1/2}\left(\p_\tau - \mc{L}^H\right) X_{t + \tau}}_{2,\si}^2 \ud \lad(\tau)\,.
\end{align*}
We take $\eta = \frac{C_{1,T}^2}{C_{1,T}^2 + \lad_D C_{2,T,\beta}^2}$ with constants $C_{1,T}$ and $C_{2,T,\beta}$ given in \cref{thm:tspoincare}, and find 
\begin{align} \label{eq:differenceest}
    \frac{\rd}{\rd t} \frac{1}{T}\int_t^{t + T} \norm{X_\tau}_{2,\si}^2 \ud \tau \le &   -   \frac{2 \lad_D}{C_{1,T}^2 + \lad_D C_{2,T,\beta}^2} \int_0^T C_{1,T}^2  \Norm{(1 - \Pi_0)X_{t + \tau}}_{2,\si}^2 \notag \\  & \qquad \qquad \qquad  \qquad \quad  +  C_{2,T,\beta}^2 \Norm{\left(\beta - \mc{L}^D\right)^{-1/2}\left(\p_\tau - \mc{L}^H\right) X_{t + \tau}}_{2,\si}^2 \ud \lad(\tau)\\
        \le &   -   \frac{\lad_D}{C_{1,T}^2 + \lad_D C_{2,T,\beta}^2} \int_0^T  \norm{X_{t + \tau}}_{2,\si}^2 \ud \lad(\tau)\,.\notag 
\end{align}
Then, by \cref{eq:differenceest} above, a direct application of Gr\"{o}nwall's inequality gives the exponential decay of the time average of $L^2$-distance: 
\begin{align*}
    \frac{1}{T}\int_t^{t + T} \norm{X_\tau}_{2,\si}^2 \ud \tau  \le e^{-\nu t} \frac{1}{T}\int_0^{T} \norm{X_\tau}_{2,\si}^2 \ud \tau\,,
\end{align*}
with 
\begin{align} \label{eq:estconverge}
    \nu =  \frac{\lad_D}{C_{1,T}^2 + \lad_D C_{2,T,\beta}^2}\,. 
\end{align}
The decay estimate in $L^2$-distance can be readily implied by the above time-average one: 
\begin{align} \label{eq:expest}
    \norm{X_{t + T}}_{2,\si}^2 \le \frac{1}{T}\int_t^{t + T} \norm{X_\tau}_{2,\si}^2 \ud \tau  \le e^{-\nu t} \norm{X_0}_{2,\si}^2\,,
\end{align}
since $\norm{X_t}_{2,\si}$ is decreasing in $t$ by the contractivity of QMS. The estimate \eqref{eq:expest} gives 
\begin{align*} 
    \norm{X_{t}}_{2,\si}^2 \le e^{-\nu (t - T)} \norm{X_0}_{2,\si}^2\,,\q \forall\, t \ge T\,,
\end{align*}
which in fact holds for any $t > 0$, by $\norm{X_{t}}_{2,\si} \le \norm{X_0}_{2,\si}$ and $e^{-\nu (t - T)} \ge 1$ for $t \in [0, T]$. 

Noting that the convergence rate estimate \eqref{eq:estconverge}-\eqref{eq:expest} holds for any $\beta > 0$, and it is monotonically decreasing in $\beta$, we can take $\beta \to 0$ to maximize the lower bound. 
Finally, for bounding the prefactor $C_T = e^{\nu T}$, it suffices to consider $\nu T$, which can be easily estimated as follows: 
\begin{align*}
    \nu T \le \frac{\lad_D T}{C_{1,T}^2 + \lad_D C_{2,T,\beta}^2} & \le \frac{\lad_D T \pi^2}{(\sqrt{2} + 14)^2 + 2 \norm{(\mc{L}^D)^{1/2}}_{2 \to 2}^2 \lad_D T^2} \\
    & = \mathcal{O}\left(\frac{\lad_D T}{1 + (\lad_D T)^2} \right) = \mc{O}(1)\,,
\end{align*}
by using \cref{constant1,constant2} and 
\begin{equation*}
    \norm{(\beta - \mc{L}^D)^{1/2}}_{2 \to 2} \ge  \norm{( \mc{L}^D)^{1/2}}_{2 \to 2} = \sup_{X \in \maf{H}\backslash\{0\}} \frac{\norm{(\mc{L}^D)^{1/2} X}_{2,\si}}{\norm{X}_{2,\si}} \ge \sqrt{\lad_D}\,.
\end{equation*}
The proof is complete. 
\end{proof}

\section{Applications and examples} \label{sec:example}

In this section, we apply the results in \cref{sec:spacetime} to some Davies generators with detailed balance broken by a coherent term, namely, the Lindbladian $\mc{L}$ in \eqref{eq:qmslindblad} with $\mc{L}^D$ satisfying $\si$-GNS DBC (noting that up to a coherent term, the Davies generator is exactly the class of $\si$-GNS detailed balanced Lindbladian \cites{kossakowski1977quantum}; see also \cite{ding2024efficient}*{Section 2.1}). 
For the reader's convenience, we first recall the canonical form for the GNS detailed balanced QMS \cite{alicki1976detailed}. 

\begin{lemma}\label{lem:qms_gns}
For a given full-rank state $\si \in \dhh$, a Lindbladian $\mc{L}$ satisfying $\si$-{\rm GNS DBC} is of the following form:
 \begin{align} \label{eq:structure}
        \mc{L}(X) = \sum_{j \in \mc{J}} \left(e^{-\omega_j/2}L_j^\dag[X,L_j] + e^{\omega_j/2}[L_j,X]L^\dag_j\right)\,,
    \end{align}
with $\ww_j \in \R$ and $|\mc{J}| \le N^2 -1$, where jumps $L_j \in \mc{B}(\mc{H})$ satisfy
\begin{align} \label{eq:eigmodular}
    \Delta_{\si}(L_j) =  e^{-\omega_j}L_j\,,\quad \tr(L^\dag_j L_k) = c_j\d_{j,k}\,,\quad \tr(L_j) = 0\,,
\end{align}
for some normalization constants $c_j > 0$. Moreover, for each $j$, there exists $j' \in \mc{J}$ such that 
\begin{align} \label{eq:adjoint_index}
    L_j^\dag = L_{j'}\,,\quad \ww_j = - \ww_{j'}\,.
\end{align}
\end{lemma}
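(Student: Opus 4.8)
The claim is Alicki's structure theorem \cite{alicki1976detailed} for GNS‑detailed‑balanced generators, so the plan is to reconstruct that argument in the present notation. I would start from the general GKSL form $\mc{L}(X) = i[H,X] + \sum_{k}\big(V_k^\dagger X V_k - \tfrac12\{V_k^\dagger V_k, X\}\big)$ with linearly independent $V_k$, and use the standard gauge freedom of this representation to normalize the jumps: shifting each $V_k$ by a scalar multiple of $\mathbf 1$ and absorbing the resulting first–order terms into $H$ makes $\tr(V_k)=0$, while a unitary mixing of the $V_k$ that diagonalizes the positive semidefinite Kossakowski matrix makes $\tr(V_j^\dagger V_k)=c_j\d_{j,k}$ with $c_j>0$. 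Since traceless linearly independent operators span a subspace of $\{X:\tr X=0\}$, this already yields $|\mc{J}|\le N^2-1$.

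The structural heart of the proof is modular covariance. I would invoke the classical fact that $\si$‑GNS self‑adjointness of $\mc{L}$ forces $\mc{L}$ to commute with the modular group, i.e.\ $\mc{L}(\si^{it}X\si^{-it})=\si^{it}\mc{L}(X)\si^{-it}$, equivalently $\mc{L}$ preserves each eigenspace $\mc{B}_\omega:=\{X:\Delta_\si(X)=e^{-\omega}X\}$ of the modular operator. Granting this, one can re‑choose the GKSL jumps so that each $V_k$ lies in a single eigenspace $\mc{B}_{\omega_k}$: decompose the Choi matrix of the dissipative part along the grading $\bh=\bigoplus_\omega\mc{B}_\omega$ and diagonalize inside each graded block. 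Because $e^{t\mc{L}}$ is $\ast$‑preserving, the resulting family $\{V_k\}$ can be taken closed under the adjoint, and since $\Delta_\si(V_k^\dagger)=e^{\omega_k}V_k^\dagger$, whenever $V_k^\dagger=V_{k'}$ one gets $\omega_{k'}=-\omega_k$; this is \eqref{eq:adjoint_index}, and the three conditions in \eqref{eq:eigmodular} now hold for the $V_k$.

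It then remains to identify the coefficients. Writing $\mc{L}(X)=i[H,X]+\sum_k\gamma_k\big(V_k^\dagger X V_k-\tfrac12\{V_k^\dagger V_k,X\}\big)$ with $\gamma_k>0$ and each $V_k$ a modular eigenoperator, I would impose full GNS‑DBC by testing $\l A,\mc{L}(B)\r_{\si,1}=\l \mc{L}(A),B\r_{\si,1}$ on modular eigenoperators $A,B$, using orthogonality of the $V_k$ together with the elementary relation $\si^{1/2}V_k=e^{-\omega_k/2}V_k\si^{1/2}$. Splitting the identity into its GNS‑self‑adjoint and GNS‑anti‑self‑adjoint parts: the anti‑self‑adjoint part must vanish, which — since $i[G,\dd]$ with $[G,\si]=0$ is GNS‑anti‑self‑adjoint, exactly as in the KMS computation recalled earlier in the paper — eliminates the residual coherent term and leaves only the symmetric dissipator, while matching the self‑adjoint part pins down the exponential weights and produces the $e^{\pm\omega_j/2}$ factors after absorbing constants by the rescaling $L_j:=\lambda_j V_j$ (chosen consistently with $L_{k'}=L_k^\dagger$). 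Finally, the algebraic identity
\[
\sum_j\Big(e^{-\omega_j/2}L_j^\dagger[X,L_j]+e^{\omega_j/2}[L_j,X]L_j^\dagger\Big)=2\sum_j e^{-\omega_j/2}\Big(L_j^\dagger X L_j-\tfrac12\{L_j^\dagger L_j,X\}\Big),
\]
valid on a $\dagger$‑closed family because the commutator remainders of the two terms in each $\dagger$‑pair cancel, puts $\mc{L}$ into the stated form, with \eqref{eq:eigmodular} and \eqref{eq:adjoint_index} inherited from the $V_j$ (and the normalization constants $c_j$ absorbing the factor of $2$).

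I expect the main obstacle to be the passage from GNS self‑adjointness to modular covariance together with the subsequent re‑selection of the GKSL jump operators as modular eigenoperators: this is where one genuinely uses the noncommutative structure (uniqueness up to unitary of the minimal GKSL representation, analyticity of the modular automorphism group), whereas everything downstream — the coefficient matching and the rewriting via the identity above — is careful but routine computation with the relations $\si^{1/2}V_k=e^{-\omega_k/2}V_k\si^{1/2}$.
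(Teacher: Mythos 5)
The paper offers no proof of this lemma: it is stated as a recollection of Alicki's structure theorem and is simply cited (\cite{alicki1976detailed}; see also \cite{carlen2017gradient} for a modern treatment). So there is no in-paper argument to compare against; what you have written is a reconstruction of the cited proof, and on its merits it is sound. The genuinely nontrivial input — that $\si$-GNS self-adjointness of $\mc{L}$ with $\si$ invariant forces $\mc{L}$ to commute with the modular group $\Delta_\si^{it}$, hence preserves each Bohr eigenspace $\mc{B}_\omega$ — is correctly isolated as the crux and is a standard result (it follows, e.g., from the fact that GNS self-adjointness is equivalent to KMS self-adjointness together with modular covariance, $\Gamma_\si^{-1} R_\si = \Delta_\si^{-1/2}$); you are right to flag it as the step that uses genuinely noncommutative structure. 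The downstream steps check out: the normalizations $\tr L_j = 0$, $\tr(L_j^\dag L_k) = c_j\delta_{jk}$ are the usual GKSL gauge-fixing, the bound $|\mc{J}|\le N^2-1$ is immediate from linear independence within the traceless subspace, the coefficient matching across the $\dagger$-paired eigenspaces forces $\gamma_{j'} = e^{\omega_j}\gamma_j$, and the final rewriting identity is a correct re-indexing over the involution $j\mapsto j'$ using $L_{j'}=L_j^\dagger$, $\omega_{j'}=-\omega_j$. One small caveat worth recording explicitly if you were to flesh this out: the elimination of the coherent part requires the uniqueness (up to the usual gauge) of the GKSL decomposition once jumps are chosen traceless and orthogonal, since otherwise ``the anti-self-adjoint part must vanish'' does not by itself identify $H$ as the sole source of anti-self-adjointness; this is standard but should be said.
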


With the help of \cref{lem:qms_gns}, it is easy to design some primitive hypocoercive Lindblad dynamics with broken GNS detailed balance. Specifically, we start with a non-primitive $\mc{L}^D$ of the form \eqref{eq:structure} with $\si$-GNS DBC. From \cref{lem:converg},  its kernel is given by $\ker(\mc{L}^D) = \{X\,;\ [L_j, X] = 0\ \text{for all $j \in \mc{J}$}\}$. In some cases, it is possible to efficiently characterize $\ker(\mc{L}^D)$, allowing the design of a coherent term $\mc{L}^H = i[H,\cdot]$ such that $\mc{L} = \mc{L}^H + \mc{L}^D$ is primitive with \cref{ass:1}. Consequently, the results of \cref{sec:spacetime} can be applied to quantify the convergence rate of $\exp(t \mc{L})$.

\subsection{Unital case} \label{sec:infinite}
Let us first consider the simpler infinite temperature regime, where $\si = \mi/N$ is the maximally mixed state. In this case, the KMS inner product $\l \dd,\dd \r_{\si,1/2}$ reduces to the normalized HS inner product:
\begin{align} \label{eq:productid}
    \l X , Y\r_{\frac{\mi}{N}} := \frac{1}{N} \l X , Y\r\,,
\end{align}
and $\Delta_\si$ becomes the identity operator. This implies 
$\ww_j = 0$ in \eqref{eq:structure} and that the jumps $L_j$ can be taken to be self-adjoint. Thus, the Lindbladian with $\frac{\mi}{N}$-GNS DBC satisfies $\mc{L} = \mc{L}^\dag$ and has the following general form: for some self-adjoint operators $L_j$, 
\begin{align} \label{eq:gen_symm}
    \mc{L}(X) := - \sum_{j \in \mc{J}} [L_j, [L_j, X]]\,.
\end{align}

We shall focus on the dissipative Lindbladian $\mc{L}^D$ with a single jump to illustrate the ideas. The general case \eqref{eq:gen_symm} can be discussed similarly. Let $A$ be a self-adjoint operator on $\mc{H}$ with spectral decomposition: $A = \sum_i \ka_i P_i$, where $P_i$ is the spectral projection associated with the eigenvalue $\ka_i$. We define the non-primitive Lindbladian with jump $A$: 
\begin{align} \label{eq:singlejump}
    \mc{L}^D(X) := - [A, [A, X]] = - \sum_{i,j} (\ka_i - \ka_j)^2 P_i X P_j\,,
\end{align}
which is self-adjoint on $\bh$. The eigenvalues of $\mc{L}^D$ are given by $- (\ka_i - \ka_j)^2$ with the eigenspaces $P_i  \bh P_j $. We hence see the norm and spectral gap of $\mc{L}^D$: 
\begin{align} \label{eq:ep1norm}
 \norm{\mc{L}^D}_{2 \to 2} = \max_{i,j} |\ka_i - \ka_j|^2\,,\q   \lad_D = \min_{i\neq j} |\ka_i - \ka_j|^2\,,
\end{align}
as well as the kernel $\ker(\mc{L}^D) = \{X\,;\ [A, X] = 0\}$. In what follows, for simplicity, we assume that each $\ka_j$ is of multiplicity one, and we denote $P_j  = \ket{j}\bra{j}$.  Then $\ker(\mc{L}^D)$ is characterized by the diagonal matrices on the basis $\ket{i}\bra{j}$:
\begin{equation} \label{exp:kernel1}
    \ker(\mc{L}^D) = \Big\{X\in \bh\,; X = \sum_j x_j \ket{j}\bra{j}\,,\ x_j \in \C \Big\}\,.
\end{equation}

We next characterize the coherent term $\mc{L}^H = i [H, \dd]$ such that $\ker(\mc{L}^H) \bigcap \ker(\mc{L}^D) = {\rm Span}\{\mi\}$, which by \cref{lem:kernel}, implies that $\mc{L} =  \mc{L}^H + \mc{L}^D$ is primitive. We write $H = \sum_{ij} H_{ij} \ket{i}\bra{j}$, and then by \cref{exp:kernel1}, $X \in \ker(\mc{L}^H) \bigcap \ker(\mc{L}^D)$ is equivalent to 
\begin{equation} \label{eq:ep1commu}
   (x_i - x_j) H_{ij} = 0\,.
\end{equation}
The condition $\ker(\mc{L}^H) \bigcap \ker(\mc{L}^D) = {\rm Span}\{\mi\}$ implies that the constant vector $(x_j)_j$ is the only solution to \cref{eq:ep1commu}. It follows that for any two distinct indices $i_1,i_m$, there exists a sequence of indices $(i_1,i_2,\ldots,i_m)$ such that 
\begin{align*}
    \text{$H_{i_{k}i_{k+1}} \neq 0$ for $1 \le k \le m -1$}\,,
\end{align*}
that is, $H_{ij}$ is an irreducible self-adjoint matrix. 

We proceed to characterize $H$ satisfying \cref{ass:1}: $\Pi_0 \mc{L}^H \Pi_0 = 0$, where $\Pi_0$ is the projection to $\ker(\mc{L}^D)$. By our construction, for any diagonal matrices $X = \sum_j x_j \ket{j}\bra{j}$ and $H$, 
\begin{equation*}
  \l X, [H, X] \r_{\frac{\mi}{N}} = 0\,,
\end{equation*}
i.e., $\Pi_0 \mc{L}^H \Pi_0 = 0$ always holds. 

Note that $\maf{H}_0$ in \eqref{eq:decomspace} in this example is parameterized by 
\begin{align*}
  \maf{H}_0 = \Big\{X\in \bh\,; X = \sum_j x_j \ket{j}\bra{j}\,,\ \sum_j x_j = 0 \Big\},
\end{align*}
and there holds
\begin{equation*}
    \Norm{\mc{L}^H \Pi_+}_{\maf{H} \to \maf{H}} \le \lad_{\max}(H) - \lad_{\min}(H)\,.
\end{equation*}
To apply \cref{coro:simplified}, it suffices to estimate 
\begin{align} \label{auxexp2}
    s_H^2 = \inf_{X \in \maf{H}_0 \backslash\{0\}} \frac{ \norm{\mc{L}^H \Pi_0 X}_{\frac{\mi}{N}}^2}{\norm{X}_{\frac{\mi}{N}}^2}\,,
\end{align}
where for $X \in \maf{H}_0$, $\norm{X}_{\frac{\mi}{N}}^2 = \frac{1}{N} \sum_j |x_j|^2$ and
\begin{align} \label{auxexp1}
    \norm{\mc{L}^H \Pi_0 X}_{\frac{\mi}{N}}^2 = \l [H,X], [H, X] \r_{\frac{\mi}{N}} = \frac{1}{N} \sum_{i,j} |x_j - x_i|^2 |H_{ij}|^2\,.
\end{align}
Since the values of $|H_{jj}|^2$ do not influence the sum \eqref{auxexp1}, we define the symmetric matrix $\h{H}$ by 
\begin{equation} \label{def:matrix}
    \h{H}_{ij} := |H_{ij}|^2 - \d_{ij} \sum_{l} |H_{il}|^2\,,
\end{equation}
which is non-negative off the diagonal and still irreducible. Moreover, each row of $\h{H}$ sums to zero. Thus, $\h{H}$ can be viewed as the generator of an ergodic continuous-time Markov chain over the basis $\ket{j}$ and \cref{auxexp2} gives 
\begin{align} \label{eq:shexp1}
     s_H^2 = \inf_{x_j \in \C;\, \sum_j x_j = 0} \frac{\sum_{i,j} |x_j - x_i|^2 \h{H}_{ij}}{\sum_j |x_j|^2} = \inf_{x_j \in \C;\, \sum_j x_j = 0} \frac{-2\big \l x, \h{H} x \big\r }{\l x, x\r}\,,
\end{align}
where $x = (x_1,\ldots,x_N) \in \C^N$. It follows that $s_H/\sqrt{2}$ is the square root of the spectral gap (smallest nonzero eigenvalue) of $-\h{H}$. Summarizing the above discussion and applying \cref{coro:simplified} with \cref{eq:ratewithalpha}, we have the following result. 

\begin{proposition}  \label{propex1}
    Let $\mc{L}^D$ be given as in \eqref{eq:singlejump} for a self-adjoint $A = \sum_j \ka_j \ket{j}\bra{j}$ with eigenvalue $\kappa_j$ of multiplicity one. Then, for a Hamiltonian $H = \sum_{ij} H_{ij} \ket{i}\bra{j}$, the Lindbladian $\mc{L} =  \mc{L}^H + \mc{L}^D$ is primitive if and only if $H_{ij}$ is irreducible. In this case, for $X_0 \in \bh$ and $X_t = e^{t \mc{L}} X_0$, 
\begin{equation*} 
      \norm{X_{t} - \l X_t \r_{\frac{\mi}{N}}}_{\frac{\mi}{N}}^2 \le C e^{-\nu t} \norm{X_0 - \l X_0 \r_{\frac{\mi}{N}}}_{\frac{\mi}{N}}^2\,,\q \forall\, t \ge 0\,, 
\end{equation*}
with $C = \mc{O}(1)$ and convergence rate: 
\begin{align} \label{eq:estexp1}
    \nu & = \frac{ \lad(-\h{H}) \min_{i\neq j} |\ka_i - \ka_j|^2}{ \bigl(28 \lad(-\h{H})^{1/2} + 5 \bigl(\lad_{\max}(H) - \lad_{\min}(H)\bigr) \bigr)^2 + 36  \min_{i\neq j} |\ka_i - \ka_j|^2  \max_{i,j} |\ka_i - \ka_j|^2}\,,
\end{align}
where $\lad(-\h{H})$ denotes the spectral gap of $-\h{H}$ in \eqref{def:matrix}. 
\end{proposition}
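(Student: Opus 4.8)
The plan is to assemble the ingredients already collected in the discussion preceding the statement and feed them into \cref{coro:simplified}. First I would record the primitivity criterion. Since $\mc{L}^D$ is a double commutator with a self-adjoint operator it is unital and self-adjoint for the HS inner product, so $\mi/N$ is invariant for $\mc{L}=\mc{L}^H+\mc{L}^D$, and by \cref{lem:kernel} primitivity is equivalent to $\ker(\mc{L})=\ker(\mc{L}^D)\cap\ker(\mc{L}^H)=\mathrm{Span}\{\mi\}$. By \eqref{eq:singlejump}--\eqref{exp:kernel1}, $\ker(\mc{L}^D)$ is exactly the set of operators diagonal in $\{\ket{j}\}$, and for such $X=\sum_j x_j\ket{j}\bra{j}$ one has $[H,X]_{ij}=(x_j-x_i)H_{ij}$; hence $[H,X]=0$ is equivalent to \eqref{eq:ep1commu}, and the constant vector $(x_j)_j$ is its only solution precisely when the support graph of $(H_{ij})$ is connected, i.e.\ $H_{ij}$ is irreducible. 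This proves the ``if and only if''. The same computation shows $\dim\ker(\mc{L}^D|_{\maf{H}})\ge 1$ and that $[H,X]$ has vanishing diagonal for diagonal $X$, so $\Pi_0\mc{L}^H\Pi_0=0$; thus \cref{ass:1} holds and \cref{coro:simplified} is applicable.

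Next I would compute the four constants entering \cref{coro:simplified}. From \eqref{eq:ep1norm}, $\lad_D=\min_{i\neq j}|\ka_i-\ka_j|^2$ and $\norm{\mc{L}^D}_{2\to2}=\max_{i,j}|\ka_i-\ka_j|^2$, while the elementary bound $\norm{\mc{L}^H\Pi_+}_{\maf{H}\to\maf{H}}\le\lad_{\max}(H)-\lad_{\min}(H)$ is recalled in \cref{rem:upplowbb}. The one step requiring care is the identification of $s_H$: by \eqref{auxexp1}, $\norm{\mc{L}^H\Pi_0 X}_{\mi/N}^2=\frac1N\sum_{i,j}|x_j-x_i|^2|H_{ij}|^2$, and introducing the symmetric matrix $\h{H}$ of \eqref{def:matrix}---irreducible with zero row sums, hence the generator of an ergodic continuous-time Markov chain---one rewrites $\sum_{i,j}|x_j-x_i|^2\h{H}_{ij}=-2\l x,\h{H}x\r$. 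Then \eqref{eq:shexp1} exhibits $s_H^2/2$ as the Rayleigh quotient of $-\h{H}$ over trace-zero vectors, so $s_H^2=2\,\lad(-\h{H})$, twice the spectral gap of $-\h{H}$.

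Finally I would substitute $T=3/s_H$ into the rate \eqref{rate} of \cref{coro:simplified}, namely
\[
\nu=\frac{\lad_D\, s_H^2}{\big(28\,s_H+5\sqrt2\,\norm{\mc{L}^H\Pi_+}_{\maf{H}\to\maf{H}}\big)^2+72\,\lad_D\,\norm{\mc{L}^D}_{2\to2}},
\]
and insert $s_H=\sqrt2\,\lad(-\h{H})^{1/2}$ together with the three constants above. Both the numerator $\lad_D s_H^2=2\lad_D\lad(-\h{H})$ and the bracket $\big(28 s_H+5\sqrt2\,\norm{\mc{L}^H\Pi_+}\big)^2=2\big(28\lad(-\h{H})^{1/2}+5\,\norm{\mc{L}^H\Pi_+}\big)^2$ carry an overall factor $2$, which cancels against the $72=2\cdot 36$ in the remaining term of the denominator and leaves exactly \eqref{eq:estexp1}; the prefactor is $C=C_T=e^{3\nu/s_H}=\mc{O}(1)$ straight from \cref{coro:simplified}. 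There is no genuine analytic difficulty here---the argument is pure bookkeeping on top of \cref{coro:simplified}---but the point that needs to be handled carefully is this factor of two in $s_H^2=2\,\lad(-\h{H})$ (and the matching rescaling when $T=3/s_H$ is substituted), which is precisely what converts the constants $28,\,5\sqrt2,\,72$ of \cref{coro:simplified} into the $28,\,5,\,36$ and the clean numerator $\lad(-\h{H})\min_{i\neq j}|\ka_i-\ka_j|^2$ appearing in \eqref{eq:estexp1}.
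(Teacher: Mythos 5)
Your proposal is correct and follows essentially the same route as the paper: establish the primitivity criterion via \cref{lem:kernel} together with \cref{lem:converg}, verify \cref{ass:1}, compute the four constants $\lad_D$, $\norm{\mc{L}^D}_{2\to2}$, $\norm{\mc{L}^H\Pi_+}_{\maf{H}\to\maf{H}}$, $s_H$, and substitute into \cref{coro:simplified}. You also correctly isolate the only bookkeeping subtlety, namely $s_H^2 = 2\,\lad(-\h{H})$ and the cancellation of the resulting overall factor of $2$ against $72 = 2\cdot 36$, which is exactly how the paper's constants become those of \eqref{eq:estexp1}.
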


Recalling \cref{rem:upplowbb}, it is useful to find a lower bound of $\lad(-\h{H})$ to derive a more concrete estimate. For this, we identify $\h{H}$ with a weighted directed graph $G$ on vertices $j$ with oriented edges $(i,j)$ of weights $\h{H}_{ij}$ connecting $i$ and $j$ 
if $\h{H}_{ij} > 0$. 
Such a graph $G$ uniquely specifies the matrix $\h{H}$ as the graph Laplacian. 
This allows us to estimate the spectral gap of $\h{H}$ (and $s_H$) by the canonical path argument \cites{fill1991eigenvalue,diaconis1991geometric,sinclair1992improved} utilizing the geometry of the underlying graph $G$. Indeed, following \cite{sinclair1992improved}, a canonical path $\gamma_{ij}$ in $G$ is a path from $i$ to $j$, where vertices can be repeated but no edge is overloaded. We denote by $|\gamma_{ij}|$ the length of the path, which is equal to the number of edges included in $\gamma_{ij}$. 
Note that the infimum in \eqref{eq:shexp1} can be taken over real $\{x_j\}$, since $\h{H}$ is Hermitian. Then, by the proof of \cite{sinclair1992improved}*{Theorem 5}, for any $x \in \R^N$ with $\sum_j x_j = 0$ and any collection of canonical paths $\Gamma = \{\gamma_{ij}\}_{i \neq j}$, we have 
\begin{equation} \label{eq:canpatharg}
    \begin{aligned}
         2 \sum_j x_j^2 = \sum_{i \neq j} (x_i - x_j)^2 & = \sum_{i\neq j} \Big(\sum_{e \in \gamma_{ij}} (x_{e^+} - x_{e^-}) \Big)^2 \\
         & \le \sum_{i\neq j} |\gamma_{ij}| \sum_{e \in \gamma_{ij}} (x_{e^+} - x_{e^-})^2 \\
          & =  \sum_{e} (x_{e^+} - x_{e^-})^2 \h{H}_{e^+e^-} \sum_{e \in \gamma_{ij}} \big(\h{H}_{e^+e^-}\big)^{-1} |\gamma_{ij}|\,,
    \end{aligned}
\end{equation}
where $e^{\pm}$ denote the end points of an edge $e = (e^+, e^-)$. We define the quantity
\begin{equation*}
    K_{\h{H},\Gamma} = \max_{e = (e^+, e^-)} \big(\h{H}_{e^+e^-}\big)^{-1} \sum_{e \in \gamma_{ij}} |\gamma_{ij}|\,,
\end{equation*}
depending on $\h{H}$ and the choice of paths $\Gamma$. It follows from \cref{eq:shexp1,eq:canpatharg} that 
\begin{equation} \label{eq:lowershh}
    s_H = \sqrt{2}\lad(- \h{H}) \ge \sqrt{2/ K_{\h{H},\Gamma}}\,.
\end{equation}
Plugging it into \cref{eq:estexp1} gives a more manageable lower bound of the convergence rate $\nu$.   

\begin{remark}
    It is known \cites{sinclair1992improved} that $K_{\h{H},\Gamma}$ measures the "bottleneck" of the graph $G$. If one could choose canonical paths that do not pass any single edge too many times, the quantity $K_{\h{H},\Gamma}$ would be small and we could find a tighter lower for $s_H$ and $\nu$. See \cites{sinclair1992improved,diaconis1991geometric} for examples. 

    We should point out that using $K_{\h{H},\Gamma}$ is just one of many possible approaches to bound the spectral gap of $\h{H}$ and $s_H$. For example, by \cites{diaconis1991geometric}, one can also estimate $s_H$ in terms of
    \begin{equation*}
        \tau_{\h{H},\Gamma}: = \max_e \sum_{e \in \gamma_{ij}} |\gamma_{ij}|_{\h{H}}\q \text{with}\q |\gamma_{ij}|_{\h{H}}: = \sum_{e \in \gamma_{ij}}\big(\h{H}_{e^+e^-}\big)^{-1}\,.
    \end{equation*}
    It is clear that if $\h{H}_{ij}$ is constant (which is the case when $\h{H}$ corresponds to a random walk on a graph), then $\tau_{\h{H},\Gamma} = K_{\h{H},\Gamma}$. However, in general, these two quantities are not comparable. Other techniques for estimating the spectral gap of a classical Markov chain,  
    including Cheeger's inequality, coupling, and stopping times (see e.g., the monograph \cite{levin2017markov}) can also be adapted to our case. 
\end{remark}

Though the model considered above is simple, the underlying analysis can be applied to other practical scenarios, such as some quantum noise models. Motivated by the examples in \cite{fang2024mixing}, we consider an $n$-qubit system with dephasing noise in $Z$ basis:
\begin{equation*}
    \mc{L}^D(X) = \gamma \sum_i (Z_i X Z_i - X)\,,
\end{equation*}
whose eigenvalues are given by $- 2 \gamma k$ with $k = 0, 1, \ldots, n$ that can be determined by analyzing its action on the Pauli basis. It follows that $\mc{L}^D$ has 
the spectral gap $\lad_D = 2 \gamma$ and the kernel
\begin{align*}
\ker(\mc{L}^D) = {\rm Span}\Big\{\prod_i Z_i^{b_i}\,;\ b_i \in \{0,1\}\Big\} = {\rm Span}\{\ket{j}\bra{j}\,;\ \text{$\ket{j}$ is an $n$-bit string}\}\,,
\end{align*}
and the norm estimate holds:
\begin{equation*}
    \norm{\mc{L}^D}_{2 \to 2} = 2 \gamma n\,.
\end{equation*}
Also, by above discussion, $\Pi_0 \mc{L}^H \Pi_0 = 0$ holds for any Hamiltonian $H$, that is, \cref{ass:1} holds, where $\Pi_0$ is the projection to $\ker(\mc{L}^D|_{\maf{H}})$. 

We consider two examples of $H$ as in \cite{fang2024mixing}. 
\begin{example}
    The first example is the quantum walk on a graph. Let $G = (V, E)$ be a $d$-regular connected graph with vertices given by $n$-bit strings and $H$ be the adjacency matrix of $G$. By \eqref{def:matrix} and \eqref{eq:shexp1}, we have 
\begin{equation*}
     s_H^2 = \inf_{x_j \in \C;\, \sum_j x_j = 0} \frac{-2\big \l x, \h{H} x \big\r }{\l x, x\r} = \inf_{x_j \in \C;\, \sum_j x_j = 0} \frac{-2\big \l x, (H - d) x \big\r }{\l x, x\r}\,,
\end{equation*}
where 
\begin{equation*} 
    \h{H}_{ij} := |H_{ij}|^2 - \d_{ij} \sum_{l} |H_{il}|^2 = H_{ij} - d \d_{ij}\,.
\end{equation*}
It is clear that $d - H$ is the graph Laplacian of $G$, and we denote its spectral gap by $\Delta$, which gives $s_H = \sqrt{2 \Delta}$. Also, note that the largest eigenvalue of $H$ is exactly $d$ since $G$ is $d$-regular. We have $ \Norm{\mc{L}^H \Pi_+}_{\maf{H} \to \maf{H}} \le d$ and we conclude the convergence rate estimate by \cref{coro:simplified}: 
\begin{equation} \label{eq:exp1walk}
    \nu \ge \frac{2\Delta \gamma} {(28 \sqrt{\Delta} + 5 d)^2 + 144 \gamma^2 n} = \Theta\left(\frac{\Delta \gamma}{\Delta + d^2 + \gamma^2 n}\right)\,,
\end{equation}
which, as $n \to \infty$, gives an $\mc{O}(n)$ improvement than the rate estimate $\mc{O}(\gamma \Delta/(d + \gamma n)^2)$ derived in \cites{fang2024mixing} under the assumption of $\gamma, \Delta \ll 1$ (our estimate \eqref{eq:exp1walk} does not rely on small $\gamma, \Delta$).  
\end{example}

\begin{example}
The second example is the transverse field Ising model ($h > 0$):
\begin{align*}
    H = \sum_{i = 1}^{n - 1} Z_i Z_{i + 1}  + h \sum_{i = 1}^n X_i\,,
\end{align*}
for which we clearly have 
\begin{equation*}
    \Norm{\mc{L}^H \Pi_+}_{\maf{H} \to \maf{H}} \le \lad_{\max}(H) \le n - 1 + h n\,.
\end{equation*}
To estimate the exponential convergence rate of $\exp(\mc{L}^H + \mc{L}^D)$, it suffices to show that $\mi/2^n$ is the unique invariant state and estimate $s_H$. For unique invariant state, letting $A = \prod_i Z_i^{b_i} \in \ker(\mc{L}^D)$, we consider 
\begin{equation} \label{auxeq:exp1}
    [H, A] = H A - A H = h\sum_i (X_i A - A X_i)\,,
\end{equation}
and find that if $A$ has $Z_i$ at site $i$, then $[X_i, A] = 2 X_i A$, meaning that $[H, A] = 0$ if and only if $b_i = 0$ for all $i$, so that $A \propto \mi$. We next estimate $s_H$ by definition. For $A \in \ker(\mc{L}^D)$, there holds 
\begin{equation} \label{auxeqinf2}
    s_H^2 = \inf_{\vec{b}\in \{0,1\}^n\backslash \{0^n, 1^n\}} \norm{\mc{L}^H_{+0} A}_{\frac{\mi}{2^n}}^2 = 4 h^2\,,
\end{equation}
since $\norm{\prod_i Z_i^{b_i}}_{\frac{\mi}{2^n}} = 1$, where, by \eqref{auxeq:exp1},
\begin{align*}
    \norm{\mc{L}^H_{+0} A}_{\frac{\mi}{2^n}}^2 = \frac{1}{2^n} \l \mc{L}^H A, \mc{L}^H A \r = \frac{1}{2^n} \norm{[H, A]}^2 =\frac{h^2}{2^n} \sum_i \norm{[X_i, A]}^2\,,
\end{align*}
and the infimum \eqref{auxeqinf2} is attained at $A = \prod_i Z_i^{b_i}$ with $b_i = 1$ for only one $i$. Combining the above estimates with \cref{coro:simplified} gives the exponential convergence rate: 
\begin{equation*}
    \nu = \frac{8 \gamma h^2}{(56 h + 5\sqrt{2}(n - 1 + h n ))^2 + 288 \gamma^2 n} = \Theta\left(\frac{\gamma h^2}{(h^2 + 1) n^2 + \gamma^2 n  }\right)\,,
\end{equation*}
which, as $n \to \infty$, is of the same order as $\mc{O}(\gamma h^2/(n^2 \gamma^2 + n^2)$ in \cite{fang2024mixing} derived under $\gamma, h \ll 1$.    
\end{example}

\subsection{Non-unital case} 
We next consider general $\si \in \dhh$ as the invariant state. 
We shall focus on a 
quantum analog of the weighted graph Laplacian with detailed balance condition broken by a coherent term, inspired by the models in \cite{gao2022complete}*{Section 5} and \cite{li2020graph}. 

Let $G = (V, E)$ be a disconnected undirected simple graph\footnote{A graph without loops or multiple edges is called a simple graph.} with vertices $V := \{1,2,\ldots, N\}$ and edge set $E$. Suppose that $G$ has no isolated vertex (i.e., each node $r \in V$ is connected to at least one other node $s \neq r$). For convenience, we identify the node $r$ with the basis vector $\ket{r}$ of $\C^N$ with $1$ at $r$th position and $0$ elsewhere. Without loss of generality, by reordering the vertices, we assume that the connected components of $G = (V, E)$ are given by
\begin{equation} \label{eq:components}
    \{G_i = (V_i, E_i)\}_{i = 1}^m\,, \q 1 \le m \le [N/2]\,,
\end{equation}
with $V_i = (j_{i - 1} + 1, \ldots ,j_{i})$ satisfying $N = \sum_{i = 1}^m |V_i|$. 

Associated with each edge $(r,s) \in E$, we define the jump $e_{rs} = \ket{r}\bra{s}$ and the Lindbladian:
\begin{align*}
   \mc{L}_{e_{rs}}(X) := 2 e_{rs}^\dag X e_{rs} - e_{rs}^\dag e_{rs} X - X e_{rs}^\dag e_{rs}\,.
\end{align*}
Then, we let $\si := \sum_{r = 1}^N \mu_r \ket{r}\bra{r}$ be a full-rank state with 
\begin{align} \label{eq:eigel}
    \mu_1 \ge \mu_2 \ge \cdots \ge \mu_N > 0\,,
\end{align}
and define $e^{\beta_{rs}} = \mu_r/\mu_s$ for $ 1 \le r,s \le N$ and 
\begin{equation*}
\mc{L}_{rs} (X) = e^{\beta_{rs}/2} \mc{L}_{e_{rs}} (X) +  e^{-\beta_{rs}/2} \mc{L}_{e_{sr}} (X)\,.
\end{equation*}
Given a symmetric weight function $w(r,s) = w(s,r) > 0$ over edges $(r,s) \in E$, we define the $\si$-GNS detailed balanced Lindbladian by\footnote{In what follows, the sum over edges $E$ counts both $(r,s) \in E$ and $(s,r) \in E$.}
\begin{align} \label{eq:ldexp2}
    \mc{L}^D = \sum_{(r,s) \in E} w(r,s) \mc{L}_{rs}\,,
\end{align}
noting that each jump $e_{rs}$ is an eigenvector of the modular operator: $ \Delta_{\si} e_{rs} = e^{\beta_{rs}} e_{rs}$, and its adjoint $e_{sr}$ is also a jump. 
It follows that $\si$ is an invariant state of $\mc{L}^D$. 

Now, a direct computation gives for $j \in V$,  
\begin{align}  \label{eq:basismap1}
    \mc{L}^D(e_{jj}) = \sum_{(r,s) \in E} w(r,s) \mc{L}_{rs}(e_{jj}) = 4 \sum_{(r,j) \in E} w(r,j) \left(e^{-\beta_{rj}/2} e_{rr} -  e^{\beta_{rj}/2} e_{jj}  \right),
\end{align}
and for $(j,k) \in E$, 
\begin{align} \label{eq:basismap2}
    \mc{L}^D(e_{jk}) = \kappa_{jk} e_{jk}\q \text{with}\q \kappa_{jk} = - 2 \left(\sum_{(r,j) \in E} w(r, j) e^{\beta_{rj}/2} + \sum_{(r,k) \in E}  w(r, k) e^{\beta_{rk}/2} \right).
\end{align}
We define $\mc{B}_{\rm diag}(\mc{H}): = \{X \in \mc{B}(\mc{H})\,;\ X = \sum_j x_j \ket{j}\bra{j}\}$ by the subspace of diagonal matrices with respect to the basis $\ket{j}$ from the invariant state $\si$, and $\mc{E}_{\rm diag}: \mc{B}(\mc{H}) \to \mc{B}_{\rm diag}(\mc{H})$ by the projection to the diagonal part of an operator $X \in \bh$. From \eqref{eq:basismap1}, we see that the QMS $\mc{P}_t = e^{t \mc{L}^D}$ maps $\mc{B}_{\rm diag}(\mc{H})$ to $\mc{B}_{\rm diag}(\mc{H})$ and its restriction $\mc{P}_t \mc{E}_{\rm diag}$ is a classical reversible random walk generated by a weighted graph Laplacian: for $(i,j) \in E$, 
\begin{align} \label{eq:classgener}
    \mc{L}_{\rm cl}(i,j) = \begin{dcases}
        4  w(i,j) e^{-\beta_{ij}/2}\,,  & (i,j) \in E\,, \\
       - 4 \sum_{\text{fixed $i$},\, (i,r) \in E} w(i,r)e^{-\beta_{ir}/2}\,,   & i = j \in V\,, \\
        0\,, & \text{otherwise}\,, \\
    \end{dcases} 
\end{align}
while \cref{eq:basismap2} shows that $e_{jk}$ is an eigenvector of $\mc{L}^D$. It follows that 
\begin{equation} \label{eq:eigsLd}
    \lad_D = \min\big\{\lad_{\min}(-\mc{L}_{\rm cl})\,, \min_{i \neq j} - \kappa_{ij} \big\}\,,\q \norm{\mc{L}^D}_{2 \to 2} = \max\big\{\lad_{\max}(-\mc{L}_{\rm cl})\,, \max_{i \neq j} - \kappa_{ij} \big\}\,,
\end{equation}
where $\lad_{\min/\max}(-\mc{L}_{\rm cl})$ denotes the min/max eigenvalue of $-\mc{L}_{\rm cl}$. 

By \cref{lem:converg}, we have 
\begin{equation} \label{exp2:kernel}
    \ker(\mc{L}^D) = \{X\in \bh\,; [X, e_{rs}] = [X, e_{sr}] = 0 \ \text{for $(r,s) \in E$} \}\,.
\end{equation}
We will see that $\dim \ker(\mc{L}^D) > 1$, which is a consequence of the disconnectedness of the graph $G$. Recalling the connected components \eqref{eq:components}, for any two nodes $r, s$ in the same component $V_i$, there exists a chain of edges $(r, r_1), (r_1,r_2), \ldots, (r_k, v)$ from $E_i$ such that 
\begin{equation*}
    e_{rv} =  e_{r r_1} e_{r_1r_2}\ldots e_{r_k v}\,,
\end{equation*}
which yields that the $\C$-algebra of $\{e_{rs}, e_{sr}\,; \ (r,s) \in E\}$ is given by the block diagonal matrices with respect to the partition $\{V_i\}_{i = 1}^m$ \eqref{eq:components}. Then, by \cref{exp2:kernel}, $X \in  \ker(\mc{L}^D)$ restricted on each $V_i \times V_i$ is a multiple of identity. It means that 
\begin{equation} \label{eq:kernelrep}
    \ker(\mc{L}^D) = \Big\{X\in \bh\,; X = \sum_{i = 1}^m c_i \sum_{r \in V_i} \ket{r}\bra{r}\,,\ c_i \in \C \Big\}\,,
\end{equation}
and hence the QMS is non-primitive (i.e., $\dim \ker(\mc{L}^D) > 1$).

We proceed to characterize the Hamiltonian $H$ such that $[H, \si] = 0$ and 
$\mc{L} = \mc{L}^H + \mc{L}^D$ is primitive with $\si$ being invariant. 

\begin{lemma} \label{lem:exp21}
    Under the above setting, it holds that 
    \begin{itemize}
        \item  If $\si$ has $N$ distinct eigenvalues, i.e., $\mu_1 > \cdots > \mu_N$ in \eqref{eq:eigel}, then for any Hamiltonian $H$ satisfying $[H, \si] = 0$, the Lindbladian $\mc{L}$ is non-primitive. 
        \item In the case of $\si$ having degenerate eigenvalues, assume that $\w{\mu}_i$ are its distinct eigenvalues with eigenvectors $\ket{r} \in \w{V}_i \subset V$, where $\{\w{V}_i\}$ is a disjoint partition of $V$. Then, a Hamiltonian $H$ satisfies $[H, \si] = 0$ and that $\mc{L}$ is primitive if and only if for any $1 \le i,j \le m$, there exists a chain $(i_0 = i, i_1), (i_1,i_2), \ldots (i_{k-1},i_k = j)$ and $\{\w{V}_{j_l}\}_{l = 1}^k$ such that 
        $H_{ab} \neq 0$ for some 
\begin{equation} \label{eq:matrixprob1}
    (a,b) \in \Big(V_{i_{l-1}} \times V_{i_{l}}\Big) \bigcap \left(\w{V}_{j_l} \times \w{V}_{j_l} \right) \neq \emptyset\,.
\end{equation}
    \end{itemize}    
\end{lemma}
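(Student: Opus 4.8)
The plan is to reduce primitivity of $\mc{L} = \mc{L}^H + \mc{L}^D$ to the connectivity of an auxiliary graph attached to $H$, and then to unwind that connectivity into the chain condition \eqref{eq:matrixprob1}. Since $[H,\si]=0$ guarantees $\mc{L}^\dag(\si)=0$, \cref{lem:converg} tells us that $\mc{L}$ is primitive exactly when $\ker(\mc{L})=\C\mi$; and because $\si$-GNS DBC implies $\si$-KMS DBC, \cref{lem:kernel} gives $\ker(\mc{L})=\ker(\mc{L}^H)\cap\ker(\mc{L}^D)=\{H\}'\cap\ker(\mc{L}^D)$, with $\ker(\mc{L}^D)$ already identified in \eqref{eq:kernelrep}.

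First I would set $P_i:=\sum_{r\in V_i}\ket r\bra r$, so that $\ker(\mc{L}^D)=\{\sum_{i=1}^m c_iP_i:c_i\in\C\}$ and $\mi=\sum_iP_i$. Decomposing $H=\sum_{i,j}P_iHP_j$ and computing, for $X=\sum_ic_iP_i$, one gets $[H,X]=\sum_{i\ne j}(c_j-c_i)P_iHP_j$; since the off-diagonal blocks $P_iHP_j$ occupy disjoint sets of matrix entries, $[H,X]=0$ is equivalent to $c_i=c_j$ whenever $P_iHP_j\ne 0$. Thus, letting $\Gamma$ be the graph on $\{1,\dots,m\}$ with an edge $i\sim j$ precisely when $P_iHP_j\ne 0$ (an undirected graph since $H$ is self-adjoint), we have $\ker(\mc{L})=\C\mi$ if and only if the only tuple $(c_i)$ that is constant on each connected component of $\Gamma$ is the global constant, that is, if and only if $\Gamma$ is connected.

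It then remains to express the condition $P_iHP_j\ne 0$ in terms of \eqref{eq:matrixprob1} using $[H,\si]=0$. Writing $\{\w{V}_l\}$ for the index sets of the eigenvectors of $\si$ grouped by distinct eigenvalue, the relation $[H,\si]=0$ forces $H_{ab}=0$ unless $a,b$ lie in a common $\w{V}_l$; hence $P_iHP_j\ne 0$ iff there exist $l$ and $(a,b)\in(V_i\times V_j)\cap(\w{V}_l\times\w{V}_l)$ with $H_{ab}\ne 0$, which in particular requires that intersection to be nonempty. For the first bullet, if $\mu_1>\cdots>\mu_N$ then every $\w{V}_l$ is a singleton, so $(V_i\times V_j)\cap(\w{V}_l\times\w{V}_l)$ can be nonempty only if $V_i\cap V_j\ne\emptyset$, impossible for $i\ne j$; thus $\Gamma$ is edgeless, and since $m\ge 2$ as $G$ is disconnected, $\mc{L}$ is non-primitive. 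For the second bullet, primitivity amounts to $\Gamma$ being connected, i.e.\ to the existence, for every pair $i,j$, of a path $i=i_0,i_1,\dots,i_k=j$ in $\Gamma$, and an edge $i_{l-1}\sim i_l$ is by the above exactly the existence of $\w{V}_{j_l}$ and $(a,b)$ as in \eqref{eq:matrixprob1}; this is the asserted characterization. The argument is essentially bookkeeping; the only steps needing care are the block identity $[H,X]=\sum_{i\ne j}(c_j-c_i)P_iHP_j$ together with the independence of the off-diagonal blocks, and the equivalence between "$\Gamma$ connected" and "the only locally constant tuple is globally constant" — after which the translation into \eqref{eq:matrixprob1} is immediate.
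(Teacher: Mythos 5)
Your proof is correct and follows essentially the same route as the paper's: both reduce primitivity to $\ker(\mc{L})=\{H\}'\cap\ker(\mc{L}^D)=\C\mi$ via \cref{lem:kernel}, both exploit that $[H,\si]=0$ forces $H$ to be block-diagonal with respect to the partition $\{\w{V}_l\}$, and both derive the chain condition as a connectivity requirement on the block couplings $P_iHP_j$. Your explicit introduction of the graph $\Gamma$ on $\{1,\dots,m\}$ (together with the identity $[H,X]=\sum_{i\ne j}(c_j-c_i)P_iHP_j$) is a cleaner organization of the same bookkeeping the paper carries out implicitly when it says ``we need to find $H$ such that $(x_r-x_s)H_{rs}=0$ implies $c_1=\cdots=c_m$'', but the underlying argument is identical.
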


\begin{proof}
For the first statement, if all eigenvalues of $\si$ are distinct, then $H$ satisfying $[H, \si] = 0$ has to be a diagonal matrix with respect to $\ket{r}$, and any $X \in \ker(\mc{L}^D)$ satisfies $[H, X] = 0$. It follows from \cref{lem:kernel} that $\ker(\mc{L}) = \ker(\mc{L}^D)$ and thus $\mc{L}$ is non-primitive. For the second statement, if $\si$ has degenerate eigenvalues, any $H$ satisfying $[H, \si] = 0$ is a block diagonal matrix for the partition $\w{V}_i$. Then $\mc{L}$ is primitive if and only if $X \in \ker(\mc{L}^D)$ such that $[H, X] = 0$ is a multiple of identity. We consider $H = \sum_{rs} H_{rs} \ket{r}\bra{s}$ with $\{H_{rs}\}$ being block diagonal and $X \in \ker(\mc{L}^D)$ of the form 
\begin{equation}\label{eq:formker}
X = \sum_r x_r \ket{r}\bra{r}\q \text{with}\q  (x_1,\ldots,x_N) = (\vec{c}_1, \ldots, \vec{c}_m)\,,\q \vec{c}_i := (c_i, \ldots, c_i) \in \C^{|V_i|}\,.
\end{equation}
Similarly to \cref{eq:ep1commu}, we need to find $H$ such that $(x_r - x_s) H_{rs} = 0$ can imply $x_1 = \ldots  = x_N$,  more precisely, $c_1 = \ldots = c_m$. It follows that for any distinct $1 \le i, j \le m$, there exists a sequence $(i_0 = i,i_1,\ldots,i_k = j)$ such that for each $(i_{l - 1},i_l)$, we can find an edge 
\begin{equation*}
(a,b) \in \Big(V_{i_{l-1}} \times V_{i_{l}}\Big) \bigcap \left(\w{V}_{j_l} \times \w{V}_{j_l} \right),
\end{equation*}
for some $\w{V}_{j_l}$ with $H_{ab} \neq 0$. The proof is complete. 
\end{proof}

We now assume that $\si$ has degenerate eigenvalues such that $(V_{i_{l-1}} \times V_{i_{l}}) \bigcap (\w{V}_{j_l} \times \w{V}_{j_l}) \neq \emptyset$, and fix a Hamiltonian $H = \sum_{r,s = 1}^N H_{rs} \ket{r}\bra{s}$ such that $[H, \si] = 0$ and $\mc{L}$ is primitive (the existence of $H$ is guaranteed by \cref{lem:exp21}). Then $[H, \si^k] = 0$ holds for any $k \in \R$, equivalently, \begin{equation} \label{eq:hsigcom}
    H_{rs} \mu_s^k = \mu_r^k H_{rs}\,.
\end{equation}
One can easily compute for $X = \sum_r x_r \ket{r}\bra{r}\in \ker(\mc{L}^D)$, 
\begin{equation*}
    \si^{1/2}(H X - X H) \si^{1/2} = \sum_{rs} (\mu_r \mu_s)^{1/2} (x_s - x_r) H_{rs} \ket{r}\bra{s}\,,
\end{equation*}
which implies $\l X, [H, X] \r_{\si,1/2}  = 0$, i.e., $\Pi_0 \mc{L}^H \Pi_0 = 0$. Similarly, we have 
\begin{equation*}
    \Norm{\mc{L}^H \Pi_+}_{\maf{H} \to \maf{H}} \le \lad_{\max}(H) - \lad_{\min}(H)\,.
\end{equation*}
We next estimate the singular value gap of $\mc{L}^H \Pi_0$:
\begin{align} \label{eq:shinf}
    s_H^2 = \inf_{X \in \maf{H}_0 \backslash \{0\}} \frac{\norm{\mc{L}^H \Pi_0 X}_{2,\si}^2}{\norm{X}_{2,\si}^2}\,.
\end{align}
With the notion in \eqref{eq:formker}, the subspace $\maf{H}_0$ in \eqref{eq:shinf} is given by those $X = \sum_r x_r \ket{r}\bra{r}$ with 
\begin{equation*}
    \sum_{r = 1}^N \mu_r x_r = \sum_{i = 1}^m \sum_{r \in V_i} \mu_r c_i = \sum_{i = 1}^m \h{\mu}_i c_i = 0\,,
\end{equation*}
where $\{\h{\mu}_i := \sum_{r \in V_i} \mu_r\}$ is a probability distribution over indices $\{i\}$. A direct computation gives 
\begin{equation} \label{eqexp21}
    \norm{X}_{2,\si}^2 = \sum_r \mu_r x_r^2 = \sum_i \h{\mu}_i c_i^2  = \l c, c\r_{2,\h{\mu}}\,, 
\end{equation}
and by \cref{eq:hsigcom}, 
\begin{equation} \label{eq:symme}
    (\mu_r\mu_s)^{1/2} |H_{rs}|^2 = \mu_r |H_{rs}|^2 =  \mu_s |H_{rs}|^2\,,
\end{equation}
and thus
\begin{align*}
    \norm{\mc{L}^H \Pi_0 X}_{2,\si}^2 & = \big\l [H, X], \si^{1/2} [H, X] \si^{1/2}\big\r = \sum_{r \neq s} \mu_r |x_s - x_r|^2 |H_{rs}|^2 \\
    & = \sum_{i \neq j} |c_j - c_i|^2 \h{\mu}_i \h{H}_{ij}\,,
\end{align*}
where 
\begin{equation} \label{eq:matrix1}
   \h{H}_{ij} : =
  \begin{dcases}
      \frac{1}{\h{\mu}_i} \sum_{r \in V_i} \sum_{s \in V_j}   \mu_r |H_{rs}|^2 \,, & i \neq j\,, \\
       - \sum_{j\ \text{with $j \neq i$}} \h{H}_{ij}  \,, & i = j\,.
  \end{dcases}
\end{equation}
By construction, $\h{H}_{ij}$ has non-negative entries for $i \neq j$ with each row summing to zero and satisfies the detailed balance with respect to the probability $\h{\mu}$ by \eqref{eq:symme}:
\begin{equation*} 
    \h{\mu}_i \h{H}_{ij} = \sum_{r \in V_i} \sum_{s \in V_j}   \mu_r |H_{rs}|^2 =  \sum_{s \in V_j}  \sum_{r \in V_i} \mu_s |H_{sr}|^2 = \h{\mu}_j \h{H}_{ji}\,.
\end{equation*}
It follows \cite{norris1998markov} that we can construct a reversible continuous-time Markov chain with $\h{H}$ being the generator and $\h{\mu}$ being an invariant measure. In addition, by \cite{norris1998markov}*{Theorem 3.2.1}, this Markov chain is irreducible if and only if $\h{H}$ is an irreducible matrix, that is, for any $1 \le i,j \le m$, there exists a chain $i_0 = i, i_1, \ldots, i_{k-1},i_k = j$ such that $\h{H}_{i_{l-1}i_l} \neq 0$, which is guaranteed by \cref{eq:matrixprob1,eq:matrix1}. Thus, recalling \eqref{eq:shinf}, we can write 
\begin{align}  \label{eq:shexp2ch}
    s_H^2 = \inf_{c\in \R^m\backslash\{0\}\,, \sum_i \h{\mu}_i c_i = 0} \frac{\sum_{i \neq j} |c_j - c_i|^2 \h{\mu}_i \h{H}_{ij}}{\l c, c\r_{2,\h{\mu}}} = \inf_{c\in \R^m\backslash\{0\}\,, \sum_i \h{\mu}_i c_i = 0} \frac{- 2\l c, \h{H} c \r_{2,\h{\mu}}}{\l c, c\r_{2,\h{\mu}}}\,,
\end{align}
which can be similarly estimated by the canonical path method as in \cref{sec:infinite}. We briefly repeat the argument below for completeness. For a collection of canonical paths $\Gamma = \{\gamma_{ij}\}_{1\le i \neq j \le m}$, we define 
\begin{equation*}
    K_{\h{\mu}, \h{H}, \Gamma} := \max_e \sum_{e \in \gamma_{ij}} \big(\h{\mu}_{e^+} \h{H}_{e^+e^-}\big)^{-1} \h{\mu}_i \h{\mu}_j |\gamma_{ij}|\,.
\end{equation*}
For any $c \in \R^m$ with $\sum_i \h{\mu}_i c_i = 0$ and a collection of paths $\Gamma = \{\gamma_{ij}\}_{i \neq j}$, we have 
\begin{equation*} 
    \begin{aligned}
         2 \sum_i \h{\mu}_i c_i^2  = \sum_{i \neq j} \h{\mu}_i \h{\mu}_j (c_i - c_j)^2 & = \sum_{i\neq j}  \h{\mu}_i \h{\mu}_j \Big(\sum_{e = (e^+, e^-) \in \gamma_{ij}} (c_{e^+} - c_{e^-}) \Big)^2 \\
         & \le \sum_{i\neq j} \h{\mu}_i \h{\mu}_j |\gamma_{ij}| \sum_{e = (e^+, e^-) \in \gamma_{ij}} (c_{e^+} - c_{e^-})^2 \\
          & =  \sum_{e} (c_{e^+} - c_{e^-})^2 \h{\mu}_{e^+} \h{H}_{e^+e^-}  \sum_{e \in \gamma_{ij}} \big(\h{\mu}_{e^+} \h{H}_{e^+e^-}\big)^{-1} \h{\mu}_i \h{\mu}_j |\gamma_{ij}|\,,
    \end{aligned}
\end{equation*}
and then
\begin{align}  \label{eq:shexp2ch2}
  s_H \ge \sqrt{2/ K_{\h{\mu}, \h{H},\Gamma}}\,.
\end{align}

We summarize the above discussion with \cref{coro:simplified} into the following proposition.

\begin{proposition} \label{propex2}
Let $\mc{L} = \mc{L}^H + \mc{L}^D$ be the primitive Lindbladian constructed as above with a unique invariant state $\si$. Then, the exponential decay holds for $X_t = e^{t \mc{L}} X_0$ with $X_0 \in \bh$, 
\begin{equation*} 
      \norm{X_{t} - \l X_t \r_{\si}}_{2,\si}^2 \le C e^{-\nu t} \norm{X_{0} - \l X_0 \r_{\si}}_{2,\si}^2\,,\q \forall\, t \ge 0\,, 
\end{equation*}
with $C = \mc{O}(1)$ and convergence rate: 
\begin{align*}
      \nu = \frac{\lad_D s_H^2}{(28 s_H + 5 \sqrt{2} \bigl(\lad_{\max}(H) - \lad_{\min}(H)\bigr))^2 + 72  \lad_D \big\|\mc{L}^D\big\|_{2 \to 2}}\,,
\end{align*}
where $\lad_D$ and $\norm{\mc{L}^D}_{2 \to 2}$ are characterized by \eqref{eq:eigsLd}, and $s_H$ is given by \eqref{eq:shexp2ch} with lower bound \eqref{eq:shexp2ch2}. 
\end{proposition}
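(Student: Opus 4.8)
The plan is to deduce the statement directly from \cref{coro:simplified} by checking that $\mc{L}=\mc{L}^H+\mc{L}^D$ satisfies \cref{ass:1} and by computing (or estimating) the four constants $\lad_D$, $s_H$, $\norm{\mc{L}^D}_{2\to2}$ and $\norm{\mc{L}^H\Pi_+}_{\maf{H}\to\maf{H}}$ that appear in the rate formula \eqref{rate}. Since $\mc{L}^D$ is $\si$-GNS detailed balanced it is in particular $\si$-KMS detailed balanced, so the whole framework of \cref{sec:spacetime} is available; most of the ingredients have in fact already been assembled in the discussion preceding the proposition, and the proof is mainly a matter of organizing them.

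First I would record the kernel of $\mc{L}^D$. By \cref{lem:converg}, $\ker(\mc{L}^D)$ is the commutant of the jump family $\{e_{rs},e_{sr}\,;\ (r,s)\in E\}$; the $\C$-algebra these jumps generate is block diagonal with respect to the partition of $V$ into the connected components $\{V_i\}_{i=1}^m$ of $G$, so its commutant consists of the operators that are a scalar on each block, giving \eqref{eq:kernelrep} and in particular $\dim\ker(\mc{L}^D|_{\maf{H}})=m-1\ge1$. Thus we are in the genuinely hypocoercive case. I would then verify the second half of \cref{ass:1}, namely $\Pi_0\mc{L}^H\Pi_0=0$: for any diagonal $X=\sum_r x_r\ket{r}\bra{r}\in\ker(\mc{L}^D)$ one has $\si^{1/2}[H,X]\si^{1/2}=\sum_{r,s}(\mu_r\mu_s)^{1/2}(x_s-x_r)H_{rs}\ket{r}\bra{s}$, so $\l X,[H,X]\r_{\si,1/2}=0$ by antisymmetry under $r\leftrightarrow s$; hence $\Pi_0\mc{L}^H\Pi_0=0$ for every admissible $H$. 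Primitivity of $\mc{L}$ is part of the hypothesis and its combinatorial characterization is \cref{lem:exp21}.

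Next I would evaluate the constants. The operator $\mc{L}^D$ leaves the diagonal subspace $\mc{B}_{\rm diag}(\mc{H})$ invariant, where it acts as the classical weighted graph Laplacian $\mc{L}_{\rm cl}$ of \eqref{eq:classgener}, and on each off-diagonal basis matrix $e_{jk}$ it is multiplication by the scalar $\kappa_{jk}$ of \eqref{eq:basismap2}; assembling the spectrum gives \eqref{eq:eigsLd} for $\lad_D$ and $\norm{\mc{L}^D}_{2\to2}$. For $\mc{L}^H$ I would use the crude bound $\norm{\mc{L}^H\Pi_+}_{\maf{H}\to\maf{H}}\le\lad_{\max}(H)-\lad_{\min}(H)$ from \cref{rem:upplowbb}. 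Finally, for $s_H$, restricting to $\maf{H}_0$ (diagonal operators constant on each $V_i$ with $\sum_i\h{\mu}_i c_i=0$) and computing $\norm{\mc{L}^H\Pi_0 X}_{2,\si}^2=\sum_{r\ne s}\mu_r|x_s-x_r|^2|H_{rs}|^2$, the relation $[H,\si]=0$ forces $\mu_r|H_{rs}|^2=\mu_s|H_{rs}|^2$, so the weights $\h{\mu}_i\h{H}_{ij}$ of \eqref{eq:matrix1} are symmetric and $\h{H}$ generates a $\h{\mu}$-reversible Markov chain; this yields the variational identity \eqref{eq:shexp2ch} for $s_H^2$, and the canonical-path estimate gives the explicit lower bound \eqref{eq:shexp2ch2}. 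Inserting the four quantities into \cref{coro:simplified} with $T=3/s_H$ produces the asserted decay with $C=e^{3\nu/s_H}=\mc{O}(1)$ and the stated rate $\nu$.

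I expect the only delicate step to be the identification of $s_H$: one must confirm that the reduced weights $\h{\mu}_i\h{H}_{ij}$ genuinely satisfy detailed balance, so that \eqref{eq:shexp2ch} is a bona fide reversible spectral gap to which the canonical-path bound applies, and --- crucially --- that the reduced weighted graph on $\{1,\dots,m\}$ is connected, equivalently that $\mc{L}^H_{+0}$ is injective on $\maf{H}_0$ so that $s_H>0$; this connectedness is exactly the irreducibility criterion of \cref{lem:exp21}, so it is precisely here that primitivity of $\mc{L}$ enters. The remaining steps are bookkeeping layered on top of the computations already carried out before the statement.
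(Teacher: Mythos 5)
Your proposal is correct and follows the same route as the paper: the paper itself proves \cref{propex2} simply by assembling the discussion preceding it (verification of \cref{ass:1}, the identity \eqref{eq:kernelrep}, the spectral data \eqref{eq:eigsLd}, the reversible reduced chain and \eqref{eq:shexp2ch}--\eqref{eq:shexp2ch2}, and the bound $\norm{\mc{L}^H\Pi_+}_{\maf{H}\to\maf{H}}\le\lad_{\max}(H)-\lad_{\min}(H)$) and plugging into \cref{coro:simplified} with $T=3/s_H$. Your identification of the one delicate point — that the reduced generator $\h{H}$ is $\h{\mu}$-reversible and irreducible, the latter being precisely where primitivity of $\mc{L}$ (via \cref{lem:exp21}) guarantees $s_H>0$ — matches the paper's reasoning exactly.
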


We see that the above discussion crucially relies on the structure of $\mc{L}^D$ in \eqref{eq:basismap1}-\eqref{eq:basismap2}. We next give some examples of $\mc{L}^D$, for which more concrete estimates could be obtained.

\begin{example}
Recalling the decomposition \eqref{eq:components},
suppose that the edges in $E_i$ only connect successive vertices: $E_i = \{(k,k+1)\,;\ j_{i-1} + 1 \le k \le j_i - 1 \}$. We consider the simplest uniform weight $w(r,s) = 1$ for $(r,s) \in E$ and let $\si \propto \sum_{r = 1}^N e^{-\beta r} \ket{r}\bra{r}$ ($\beta > 0$). Then, for this example, $\kappa_{jk}$ in \eqref{eq:basismap2} satisfies
\begin{equation*}
   - \kappa_{jk} \in \left\{8 \cosh\left(\frac{\beta}{2}\right), 4 \cosh\left(\frac{\beta}{2}\right), 4 e^{\frac{\beta}{2}} +  2 e^{-\frac{\beta}{2}}, 2 e^{\frac{\beta}{2}} +  4 e^{-\frac{\beta}{2}}  \right\}\,,
\end{equation*}
 which gives $4 \cosh(\beta/2) \le - \kappa_{jk} \le 8 \cosh(\beta/2)$. Moreover, the generator \eqref{eq:classgener} on the diagonals is a block diagonal matrix for the partition $\{V_i\}$ with each block of the form:
 \[
\mc{L}_{{\rm cl},V_i} =  4 \begin{bmatrix}
  - e^{-\frac{\beta}{2}} & e^{-\frac{\beta}{2}} & 0   & \cdots & 0 \\
    e^{\frac{\beta}{2}} & - e^{-\frac{\beta}{2}}- e^{\frac{\beta}{2}} & e^{-\frac{\beta}{2}} & \cdots & 0 \\
    0   & e^{\frac{\beta}{2}} & - e^{-\frac{\beta}{2}}- e^{\frac{\beta}{2}} & \cdots & 0 \\
    \vdots & \vdots & \vdots & \ddots & \vdots \\
    0   & 0   & 0   & \cdots & - e^{\frac{\beta}{2}}
\end{bmatrix}_{|V_i| \times |V_i|}\,,
\]
which is nothing else than the generator of the classical birth-death process with constant birth and death rates, and can be viewed as a perturbed tridiagonal Toeplitz matrix: 
$$
\mc{L}_{{\rm cl},V_i} = {\rm tridiag}\left(e^{-\frac{\beta}{2}}, - e^{-\frac{\beta}{2}}- e^{\frac{\beta}{2}}, e^{\frac{\beta}{2}}\right) + e^{\frac{\beta}{2}}\ket{1}\bra{1} + e^{-\frac{\beta}{2}}\ket{|V_i|}\bra{|V_i|}\,.
$$ 
Its eigenvalues can be explicitly calculated as follows \cite{yueh2008explicit}: $\mu_1 = 0$ and for $2 \le k \le |V_i|$: 
\begin{equation*}
    \mu_k = - 8 \cosh\left(\frac{\beta}{2}\right)  + 8 \cos \left(\frac{\pi}{|V_i|}(k - 1)\right)
\end{equation*}
Thanks to \eqref{eq:eigsLd}, we have 
\begin{equation*}
    \lad_D = 4 \min\left\{ 2 \cosh\left(\frac{\beta}{2}\right)  - 2 \cos \left(\frac{\pi}{\max_i |V_i|}\right),  \cosh\left(\frac{\beta}{2}\right) \right\},
\end{equation*}
and 
\begin{equation*}
    \norm{\mc{L}^D}_{2 \to 2} = 8 \cosh\left(\frac{\beta}{2}\right)  - 8 \cos \left(\pi\frac{\max_i |V_i| - 1}{\max_i |V_i|}\right).
\end{equation*}
This allows us to give a more concrete convergence rate estimate. We remark that the spectral gap lower bounds for a general GNS-detailed balanced Lindbladian can be found in \cite{temme2013lower}.

\end{example}

\begin{example}
In this example, we show how to construct $\mc{L}^D$ of structures \eqref{eq:basismap1}-\eqref{eq:basismap2} by recent advances in quantum Gibbs samplers \cites{ding2024efficient,chen2024randomized,ramkumar2024mixing}. We first recall the efficient KMS detailed balanced quantum Gibbs samplers constructed in \cites{ding2024efficient}. Let $H_{sys} = \sum_{i} \lad_i \ket{i}\bra{i} \in \C^{N \times N}$ be a system Hamiltonian with eigenvalues $\lad_i$ counting multiplicity, and $\si_\beta \propto \exp(- \beta H_{sys})$ be the associated Gibbs state. For given coupling operator $A \in \bh$ and smooth filtering function $f(\nu)$ with compact support and symmetry:  
\begin{equation} \label{eqf:symm}
     \overline{f(\nu)e^{\beta \nu/4}}= f(-\nu)e^{-\beta \nu/4}\,, 
\end{equation}
see \cite{ding2024efficient}*{Assumption 13}. Define the jump associated with $A$ and $f$:
\begin{equation*}
    L_{A,f} := \sum_{\lad_i,\lad_j} f(\lad_i - \lad_j) A_{ij} \ket{i}\bra{j}\,,\q A_{ij}: = \bra{i}A\ket{j}\,, 
\end{equation*}
and the coherent term:
\begin{align*}
G_{A,f} = - \frac{i}{2} \sum_{\lad_i,\lad_j} \tanh \left(\frac{-\beta(\lad_i - \lad_j)}{4}\right) \bra{i} L_{A,f}^\dag L_{A,f} \ket{j} \ket{i}\bra{j}\,,
\end{align*}
where $L_{A,f}^\dag L_{A,f}$ can be computed as
\begin{equation*}
    L_{A,f}^\dag L_{A,f} = \sum_{\lad_j, \lad_m} \sum_{\lad_i}  \overline{f(\lad_i - \lad_j)} f(\lad_i - \lad_m)  \overline{A_{ij}} A_{i m} \ket{j}\bra{m}\,.
\end{equation*}
We then introduce the corresponding Lindbladian $\mc{L}_{A,f} X := i[G_{A,f},X]+ L_{A,f}^\dag X L_{A,f} - \frac{1}{2}\{L^\dagger_{A,f} L_{A,f}, X\}$. 
Following \cites{chen2024randomized,ramkumar2024mixing}, we shall show that for Haar random unitary $A$, 
\begin{equation*}
    \mc{L} = \E_{A \sim {\rm Haar}}[\mc{L}_{A,f}]
\end{equation*}
has the structure \eqref{eq:basismap1}-\eqref{eq:basismap2}. The expectation of $A$ could be taken over any unitary $1$-design since the following computation only requires the first-moment formula: 
\begin{equation}\label{eq:moment}
\E_{A \sim {\rm Haar}}[\bra{l} A \ket{i}\bra{k} A^\dag \ket{m}] = \E_{A \sim {\rm Haar}}[A_{li}\overline{A_{mk}}] = \frac{1}{N}\d_{ik}\d_{lm}\,.    
\end{equation}
We now compute
\begin{equation*}
    L_{A,f}^\dag X L_{A,f} = \sum_{\lad_j,\lad_m} \sum_{\lad_i,\lad_l} \overline{f(\lad_i - \lad_j)} f(\lad_l - \lad_m) \overline{A_{ij}} A_{lm} \bra{i} X    \ket{l} \ket{j}\bra{m}\,,
\end{equation*}
and by using \eqref{eq:moment} and letting $X_{ij} = \bra{i}X\ket{j}$, 
\begin{align*}
    \E_{A \sim {\rm Haar}}[ L_{A,f}^\dag X L_{A,f}] = \frac{1}{N} \sum_{\lad_j} \sum_{\lad_i} |f(\lad_i - \lad_j)|^2   X_{ii} \ket{j}\bra{j}\,.
\end{align*}
Similarly, we can compute 
\begin{equation*}
    \E_{A \sim {\rm Haar}} [L_{A,f}^\dag L_{A,f}] = \frac{1}{N}\sum_{\lad_j} \sum_{\lad_i} |f(\lad_i - \lad_j)|^2   \ket{j}\bra{j}\,,
\end{equation*}
and hence $ \E_{A \sim {\rm Haar}} [G_{A,f}] = 0$ by $\tanh(0) = 0$. It follows that 
\begin{align} \label{eqqq1}
    \mc{L}(\ket{i}\bra{i}) =  \frac{1}{N} \sum_{\lad_j}  |f(\lad_i - \lad_j)|^2  \ket{j}\bra{j} - |f(\lad_j - \lad_i)|^2  \ket{i}\bra{i}\,,
\end{align}
and for $k \neq l$, 
\begin{align} \label{eqqq2}
    \mc{L}(\ket{k}\bra{l}) = - \frac{1}{2 N} \sum_{\lad_i} \left(|f(\lad_i - \lad_k)|^2 + |f(\lad_i - \lad_l)|^2 \right)\ket{k}\bra{l}\,,
\end{align}
see also \cite{chen2024randomized}*{Theorem 12}. In the case of real $f$, from \eqref{eqf:symm} we can write $f(\nu) = q(\nu) e^{-\beta \nu/4}$ with $q(\nu) = q(-\nu)$. Then \cref{eqqq1,eqqq2} can be rewritten as 
\begin{align*}
     &\mc{L}(\ket{i}\bra{i}) =  \frac{1}{N} \sum_{j} w(i,j) \left(e^{\beta_{ij}/2}  \ket{j}\bra{j} - e^{\beta_{ji}/2}   \ket{i}\bra{i}\right)\,,\\
     & \mc{L}(\ket{k}\bra{l}) = - \frac{1}{2 N} \sum_{i} \left(w(i,k) e^{\beta_{ik}/2} + w(i,l) e^{\beta_{il}/2} \right)\ket{k}\bra{l}\,,
\end{align*}
with symmetric $w(i,j) = |q(\lad_i - \lad_j)|^2$ and $e^{\beta_{ij}} = e^{-\beta (\lad_i - \lad_j)}$, which is of the desired form \eqref{eq:basismap1}-\eqref{eq:basismap2}, up to some constant. This connection allows us to consider the convergence of irreversible QMS for Gibbs states of random Hamiltonians as \cites{chen2024randomized,ramkumar2024mixing}, which we leave for future investigation.
\end{example}

We finally comment the case where $\si$ has all distinct eigenvalues $\mu_1 > \cdots > \mu_N >0$. In this case, by \cref{lem:exp21}, it is impossible to find a Hamiltonian $H$ such that $\mc{L}^H + \mc{L}^D$ is primitive. However, we shall see that the necessary degeneracy of eigenvalues of $\si$ can be easily enforced by lifting $\mc{L}^D$ to an extended product space, and the desired Hamiltonian could be constructed similarly.  

To illustrate the ideas, let us consider a connected graph $G$ instead, implying that the associated Lindbladian $\mc{L}^D$ in \eqref{eq:ldexp2} is primitive. We introduce the extended invariant state: 
\begin{equation*}
    \w{\si}  = \frac{1}{2} \si \otimes \mi_2\,,\q \mi_2 = \mm 1 & 0 \\ 0 & 1 \nn.
\end{equation*} 
Let $A$ be a self-adjoint operator on $\C^2$ with nonzero eigenvalues $a \neq b \in \R$:
\begin{equation*}
    A = a \ket{0}\bra{0} + b\ket{1}\bra{1}\,.
\end{equation*}
We define the lifted $\mc{L}^D$ by 
\begin{align*}
    \w{\mc{L}}^D := \sum_{(r,s) \in E} w(r,s) \w{\mc{L}}_{rs}\,,
\end{align*}
with 
\begin{equation*}
\w{\mc{L}}_{rs} (X) := e^{\beta_{rs}/2} \mc{L}_{e_{rs} \otimes A} (X) +  e^{-\beta_{rs}/2} \mc{L}_{e_{sr} \otimes A} (X)\,.
\end{equation*}
From the construction, there holds $\Delta_{\w{\si} }(e_{rs} \otimes A) =  e^{\beta_{rs}} e_{rs} \otimes A$, and $\w{\mc{L}}^D$ satisfies $\w{\si} $-GNS DBC. Similarly, the kernel of $\w{\mc{L}}^D$ is characterized by 
\begin{align*}
    \ker(\w{\mc{L}}^D) & = \{X\in \mc{B}(\mc{H} \otimes \C^2)\,;\  [X, e_{rs} \otimes A] = [X, e_{sr} \otimes A] = 0 \ \text{for $(r,s) \in E$} \} \\
    & = \Big\{X\in \mc{B}(\mc{H} \otimes \C^2)\,;\  X = \sum_{r \in V} \sum_{i = 0,1} x_{ri} \ket{ri}\bra{ri}\,,\ x_{ri} \equiv c_i \in \C\ \, \text{for $r \in V$}  \Big\}\,,
\end{align*}
that is, $X \in \ker(\w{\mc{L}}^D)$ if and only if under the basis $\{\ket{ri}\bra{sj}\}_{r,s \in V\,, i,j = 0,1}$, $$X = \mi_N \otimes \mm c_0 & 0 \\ 0 & c_1 \nn\,,$$ where $\mi_N$ denotes the $N \times N$ identity matrix. On the other hand, the condition $[H, \w{\si} ] = 0$ gives that $H$ is of the block diagonal form $H = {\rm diag}(H^1, \ldots, H^N)$ under the basis $\{\ket{ri}\bra{sj}\}$, where $H^i$ are self-adjoint $2 \times 2$ matrices. It readily follows that for $X \in \ker(\w{\mc{L}}^D)$, $\mc{L}^H(X) = 0$ is equivalent to that for each matrix $H^i$, there holds $(c_0 - c_1) H^i_{01} = 0$, where $H_{01}^i$ is the $(0,1)$-entry of $H^i$. Then, the convergence rate of the Lindblad dynamics $e^{t \mc{L}}$ can be estimated by similar arguments as those for \cref{propex1,propex2}.  The detailed discussion for the convergence of the \emph{lifted Lindbladian} is beyond the scope of this work and left to future studies.

\subsection*{Acknowledgment} This work is supported in part by 
National Key R$\&$D Program of China Grant No. 2024YFA1016000 (B.L.) and National Science Foundation via award NSF DMS-2309378 (B.L. and J.L.). We appreciate helpful discussions with Hongrui Chen, Zhiyan Ding, Di Fang, Lin Lin, Gabriel Stoltz, Yu Tong, and Lexing Ying.

\subsection*{Data Availability} Data sharing is not applicable to this article as no datasets were generated or analyzed
during the study.

\subsection*{Conflict of interest}

Authors have no conflict of interest to declare.

\begin{bibdiv}
\begin{biblist}

\bib{achleitner2017multi}{article}{
      author={Achleitner, Franz},
      author={Arnold, Anton},
      author={Carlen, Eric~A},
       title={On multi-dimensional hypocoercive bgk models},
        date={2017},
     journal={arXiv preprint arXiv:1711.07360},
}

\bib{albritton2019variational}{article}{
      author={Albritton, Dallas},
      author={Armstrong, Scott},
      author={Mourrat, J-C},
      author={Novack, Matthew},
       title={Variational methods for the kinetic fokker-planck equation},
        date={2019},
     journal={arXiv preprint arXiv:1902.04037},
}

\bib{achleitner2015large}{article}{
      author={Achleitner, Franz},
      author={Arnold, Anton},
      author={St{\"u}rzer, Dominik},
       title={Large-time behavior in non-symmetric fokker-planck equations},
        date={2015},
     journal={Rivista di Matematica della Università di Parma},
      number={1},
       pages={1\ndash 68},
}

\bib{arnold2014sharp}{article}{
      author={Arnold, Anton},
      author={Erb, Jan},
       title={Sharp entropy decay for hypocoercive and non-symmetric fokker-planck equations with linear drift},
        date={2014},
     journal={arXiv preprint arXiv:1409.5425},
}

\bib{arnold2020sharp}{article}{
      author={Arnold, Anton},
      author={Jin, Shi},
      author={W{\"o}hrer, Tobias},
       title={Sharp decay estimates in local sensitivity analysis for evolution equations with uncertainties: from odes to linear kinetic equations},
        date={2020},
     journal={Journal of Differential Equations},
      volume={268},
      number={3},
       pages={1156\ndash 1204},
}

\bib{alicki1976detailed}{article}{
      author={Alicki, Robert},
       title={On the detailed balance condition for non-hamiltonian systems},
        date={1976},
     journal={Reports on Mathematical Physics},
      volume={10},
      number={2},
       pages={249\ndash 258},
}

\bib{adamczak2022modified}{article}{
      author={Adamczak, Radoslaw},
      author={Polaczyk, Bartlomiej},
      author={Strzelecki, Michal},
       title={Modified log-sobolev inequalities, beckner inequalities and moment estimates},
        date={2022},
     journal={Journal of Functional Analysis},
      volume={282},
      number={7},
       pages={109349},
}

\bib{arnold2020propagator}{article}{
      author={Arnold, Anton},
      author={Schmeiser, Christian},
      author={Signorello, Beatrice},
       title={Propagator norm and sharp decay estimates for fokker-planck equations with linear drift},
        date={2022},
     journal={Communications in Mathematical Sciences},
      volume={20},
      number={4},
       pages={1047–1080},
}

\bib{bardet2017estimating}{article}{
      author={Bardet, Ivan},
       title={Estimating the decoherence time using non-commutative functional inequalities},
        date={2017},
     journal={arXiv preprint arXiv:1710.01039},
}

\bib{bardet2021modified}{article}{
      author={Bardet, Ivan},
      author={Capel, Angela},
      author={Lucia, Angelo},
      author={P{\'e}rez-Garc{\'\i}a, David},
      author={Rouz{\'e}, Cambyse},
       title={On the modified logarithmic sobolev inequality for the heat-bath dynamics for 1d systems},
        date={2021},
     journal={Journal of Mathematical Physics},
      volume={62},
      number={6},
       pages={061901},
}

\bib{bernard2022hypocoercivity}{article}{
      author={Bernard, {\'E}tienne},
      author={Fathi, Max},
      author={Levitt, Antoine},
      author={Stoltz, Gabriel},
       title={Hypocoercivity with schur complements},
        date={2022},
     journal={Annales Henri Lebesgue},
      volume={5},
       pages={523\ndash 557},
}

\bib{brannan2022complete}{article}{
      author={Brannan, Michael},
      author={Gao, Li},
      author={Junge, Marius},
       title={Complete logarithmic sobolev inequalities via ricci curvature bounded below},
        date={2022},
     journal={Advances in Mathematics},
      volume={394},
       pages={108129},
}

\bib{blanchard2003decoherence}{article}{
      author={Blanchard, Ph},
      author={Olkiewicz, Robert},
       title={Decoherence induced transition from quantum to classical dynamics},
        date={2003},
     journal={Reviews in Mathematical Physics},
      volume={15},
      number={03},
       pages={217\ndash 243},
}

\bib{breuer2002theory}{book}{
      author={Breuer, Heinz-Peter},
      author={Petruccione, Francesco},
       title={The theory of open quantum systems},
   publisher={Oxford University Press on Demand},
        date={2002},
}

\bib{brigati2023construct}{article}{
      author={Brigati, Giovanni},
      author={Stoltz, Gabriel},
       title={How to construct decay rates for kinetic fokker--planck equations?},
        date={2023},
     journal={arXiv preprint arXiv:2302.14506},
}

\bib{barthel2022superoperator}{article}{
      author={Barthel, Thomas},
      author={Zhang, Yikang},
       title={Superoperator structures and no-go theorems for dissipative quantum phase transitions},
        date={2022},
     journal={Physical Review A},
      volume={105},
      number={5},
       pages={052224},
}

\bib{chatterjee2023spectral}{article}{
      author={Chatterjee, Sourav},
       title={Spectral gap of nonreversible markov chains},
        date={2023},
     journal={arXiv preprint arXiv:2310.10876},
}

\bib{chen2023quantum}{article}{
      author={Chen, Chi-Fang},
      author={Kastoryano, MJ},
      author={Brandao, FGSL},
      author={Gily{\'e}n, A},
       title={Quantum thermal state preparation},
        date={2023},
     journal={arXiv preprint arXiv:2303.18224},
      volume={10},
}

\bib{chen2023efficient}{article}{
      author={Chen, Chi-Fang},
      author={Kastoryano, Michael~J},
      author={Gily{\'e}n, Andr{\'a}s},
       title={An efficient and exact noncommutative quantum gibbs sampler},
        date={2023},
     journal={arXiv preprint arXiv:2311.09207},
}

\bib{chen2024randomized}{article}{
      author={Chen, Hongrui},
      author={Li, Bowen},
      author={Lu, Jianfeng},
      author={Ying, Lexing},
       title={A randomized method for simulating lindblad equations and thermal state preparation},
        date={2024},
     journal={arXiv preprint arXiv:2407.06594},
}

\bib{cao2023explicit}{article}{
      author={Cao, Yu},
      author={Lu, Jianfeng},
      author={Wang, Lihan},
       title={On explicit $l^2$-convergence rate estimate for underdamped langevin dynamics},
        date={2023},
     journal={Archive for Rational Mechanics and Analysis},
      volume={247},
      number={5},
       pages={90},
}

\bib{li2023quantum}{article}{
      author={Chen, Zherui},
      author={Lu, Yuchen},
      author={Wang, Hao},
      author={Liu, Yizhou},
      author={Li, Tongyang},
       title={Quantum langevin dynamics for optimization},
        date={2023},
     journal={arXiv preprint arXiv:2311.15587},
}

\bib{carlen2014analog}{article}{
      author={Carlen, Eric~A},
      author={Maas, Jan},
       title={An analog of the 2-wasserstein metric in non-commutative probability under which the fermionic fokker--planck equation is gradient flow for the entropy},
        date={2014},
     journal={Communications in mathematical physics},
      volume={331},
      number={3},
       pages={887\ndash 926},
}

\bib{carlen2017gradient}{article}{
      author={Carlen, Eric~A},
      author={Maas, Jan},
       title={Gradient flow and entropy inequalities for quantum markov semigroups with detailed balance},
        date={2017},
     journal={Journal of Functional Analysis},
      volume={273},
      number={5},
       pages={1810\ndash 1869},
}

\bib{capel2020modified}{article}{
      author={Capel, {\'A}ngela},
      author={Rouz{\'e}, Cambyse},
      author={Fran{\c{c}}a, Daniel~Stilck},
       title={The modified logarithmic sobolev inequality for quantum spin systems: classical and commuting nearest neighbour interactions},
        date={2020},
     journal={arXiv preprint arXiv:2009.11817},
}

\bib{cuevas2019quantum}{thesis}{
      author={Cuevas, Angela~Capel},
       title={Quantum logarithmic sobolev inequalities for quantum many-body systems: An approach via quasi-factorization of the relative entropy},
        type={Ph.D. Thesis},
        date={2019},
}

\bib{ding2023single}{article}{
      author={Ding, Zhiyan},
      author={Chen, Chi-Fang},
      author={Lin, Lin},
       title={Single-ancilla ground state preparation via lindbladians},
        date={2023},
     journal={arXiv preprint arXiv:2308.15676},
}

\bib{derezinski2006fermi}{incollection}{
      author={Derezi{\'n}ski, Jan},
      author={Fr{\"u}boes, Rafal},
       title={Fermi golden rule and open quantum systems},
        date={2006},
   booktitle={Open quantum systems iii: Recent developments},
   publisher={Springer},
       pages={67\ndash 116},
}

\bib{ding2024efficient}{article}{
      author={Ding, Zhiyan},
      author={Li, Bowen},
      author={Lin, Lin},
       title={Efficient quantum gibbs samplers with kubo--martin--schwinger detailed balance condition},
        date={2024},
     journal={arXiv preprint arXiv:2404.05998},
}

\bib{dupuis2012infinite}{article}{
      author={Dupuis, Paul},
      author={Liu, Yufei},
      author={Plattner, Nuria},
      author={Doll, Jimmie~D},
       title={On the infinite swapping limit for parallel tempering},
        date={2012},
     journal={Multiscale Modeling \& Simulation},
      volume={10},
      number={3},
       pages={986\ndash 1022},
}

\bib{dolbeault2009hypocoercivity}{article}{
      author={Dolbeault, Jean},
      author={Mouhot, Cl{\'e}ment},
      author={Schmeiser, Christian},
       title={Hypocoercivity for kinetic equations with linear relaxation terms},
        date={2009},
     journal={Comptes Rendus. Math{\'e}matique},
      volume={347},
      number={9-10},
       pages={511\ndash 516},
}

\bib{dolbeault2015hypocoercivity}{article}{
      author={Dolbeault, Jean},
      author={Mouhot, Cl{\'e}ment},
      author={Schmeiser, Christian},
       title={Hypocoercivity for linear kinetic equations conserving mass},
        date={2015},
     journal={Transactions of the American Mathematical Society},
      volume={367},
      number={6},
       pages={3807\ndash 3828},
}

\bib{datta2020relating}{article}{
      author={Datta, Nilanjana},
      author={Rouz{\'e}, Cambyse},
       title={Relating relative entropy, optimal transport and fisher information: a quantum hwi inequality},
        date={2020},
     journal={Annales Henri Poincar{\'e}},
      volume={21},
      number={7},
       pages={2115\ndash 2150},
}

\bib{diaconis1991geometric}{article}{
      author={Diaconis, Persi},
      author={Stroock, Daniel},
       title={Geometric bounds for eigenvalues of markov chains},
        date={1991},
     journal={The annals of applied probability},
       pages={36\ndash 61},
}

\bib{desvillettes2001trend}{article}{
      author={Desvillettes, Laurent},
      author={Villani, C{\'e}dric},
       title={On the trend to global equilibrium in spatially inhomogeneous entropy-dissipating systems: The linear fokker-planck equation},
        date={2001},
     journal={Communications on Pure and Applied Mathematics: A Journal Issued by the Courant Institute of Mathematical Sciences},
      volume={54},
      number={1},
       pages={1\ndash 42},
}

\bib{eckmann2000non}{article}{
      author={Eckmann, J-P},
      author={Hairer, Martin},
       title={Non-equilibrium statistical mechanics of strongly anharmonic chains of oscillators},
        date={2000},
     journal={Communications in Mathematical Physics},
      volume={212},
       pages={105\ndash 164},
}

\bib{eberle2024non}{article}{
      author={Eberle, Andreas},
      author={L{\"o}rler, Francis},
       title={Non-reversible lifts of reversible diffusion processes and relaxation times},
        date={2024},
     journal={arXiv preprint arXiv:2402.05041},
}

\bib{engel2000one}{book}{
      author={Engel, Klaus-Jochen},
      author={Nagel, Rainer},
      author={Brendle, Simon},
       title={One-parameter semigroups for linear evolution equations},
   publisher={Springer},
        date={2000},
      volume={194},
}

\bib{evans2022partial}{book}{
      author={Evans, Lawrence~C},
       title={Partial differential equations},
   publisher={American Mathematical Society},
        date={2022},
      volume={19},
}

\bib{franke2010behavior}{article}{
      author={Franke, Brice},
      author={Hwang, C-R},
      author={Pai, H-M},
      author={Sheu, S-J},
       title={The behavior of the spectral gap under growing drift},
        date={2010},
     journal={Transactions of the American Mathematical Society},
      volume={362},
      number={3},
       pages={1325\ndash 1350},
}

\bib{fill1991eigenvalue}{article}{
      author={Fill, James~Allen},
       title={Eigenvalue bounds on convergence to stationarity for nonreversible markov chains, with an application to the exclusion process},
        date={1991},
     journal={The annals of applied probability},
       pages={62\ndash 87},
}

\bib{fang2024mixing}{article}{
      author={Fang, Di},
      author={Lu, Jianfeng},
      author={Tong, Yu},
       title={Mixing time of open quantum systems via hypocoercivity},
        date={2024},
     journal={arXiv preprint arXiv:2404.11503},
}

\bib{ferre2020large}{article}{
      author={Ferr{\'e}, Gr{\'e}goire},
      author={Stoltz, Gabriel},
       title={{Large deviations of empirical measures of diffusions in weighted topologies}},
        date={2020},
     journal={Electronic Journal of Probability},
      volume={25},
      number={none},
       pages={1 \ndash  52},
         url={https://doi.org/10.1214/20-EJP514},
}

\bib{fagnola2007generators}{article}{
      author={Fagnola, Franco},
      author={Umanita, Veronica},
       title={Generators of detailed balance quantum markov semigroups},
        date={2007},
     journal={Infinite Dimensional Analysis, Quantum Probability and Related Topics},
      volume={10},
      number={03},
       pages={335\ndash 363},
}

\bib{fagnola2008detailed}{article}{
      author={Fagnola, Franco},
      author={Umanit{\`a}, Veronica},
       title={Detailed balance, time reversal, and generators of quantum markov semigroups},
        date={2008},
     journal={Mathematical Notes},
      volume={84},
       pages={108\ndash 115},
}

\bib{fagnola2010generators}{article}{
      author={Fagnola, Franco},
      author={Umanit{\`a}, Veronica},
       title={Generators of kms symmetric markov semigroups on symmetry and quantum detailed balance},
        date={2010},
     journal={Communications in Mathematical Physics},
      volume={298},
      number={2},
       pages={523\ndash 547},
}

\bib{frigerio1982long}{article}{
      author={Frigerio, Alberto},
      author={Verri, Maurizio},
       title={Long-time asymptotic properties of dynamical semigroups on $w^*$-algebras},
        date={1982},
     journal={Mathematische Zeitschrift},
      volume={180},
      number={3},
       pages={275\ndash 286},
}

\bib{guo2024designing}{article}{
      author={Guo, Jinkang},
      author={Hart, Oliver},
      author={Chen, Chi-Fang},
      author={Friedman, Aaron~J},
      author={Lucas, Andrew},
       title={Designing open quantum systems with known steady states: Davies generators and beyond},
        date={2024},
     journal={arXiv preprint arXiv:2404.14538},
}

\bib{gao2020fisher}{article}{
      author={Gao, Li},
      author={Junge, Marius},
      author={LaRacuente, Nicholas},
       title={Fisher information and logarithmic sobolev inequality for matrix-valued functions},
organization={Springer},
        date={2020},
     journal={Annales Henri Poincar{\'e}},
      volume={21},
      number={11},
       pages={3409\ndash 3478},
}

\bib{gao2022complete}{article}{
      author={Gao, Li},
      author={Junge, Marius},
      author={LaRacuente, Nicholas},
      author={Li, Haojian},
       title={Complete order and relative entropy decay rates},
        date={2022},
     journal={arXiv preprint arXiv:2209.11684},
}

\bib{GoriniKossakowskiSudarshan1976}{article}{
      author={Gorini, Vittorio},
      author={Kossakowski, Andrzej},
      author={Sudarshan, Ennackal Chandy~George},
       title={Completely positive dynamical semigroups of $n$-level systems},
        date={1976},
     journal={J. Math. Phys.},
      volume={17},
       pages={821\ndash 825},
}

\bib{gao2021complete}{article}{
      author={Gao, Li},
      author={Rouz{\'e}, Cambyse},
       title={Complete entropic inequalities for quantum markov chains},
        date={2022},
     journal={Archive for Rational Mechanics and Analysis},
       pages={1\ndash 56},
}

\bib{grothaus2016hilbert}{article}{
      author={Grothaus, Martin},
      author={Stilgenbauer, Patrik},
       title={Hilbert space hypocoercivity for the langevin dynamics revisited},
        date={2016},
     journal={arXiv preprint arXiv:1608.07889},
}

\bib{guionnet2003lectures}{article}{
      author={Guionnet, Alice},
      author={Zegarlinski, Boguslaw},
       title={Lectures on logarithmic sobolev inequalities},
        date={2003},
     journal={S{\'e}minaire de probabilit{\'e}s XXXVI},
       pages={1\ndash 134},
}

\bib{galkowski2024classical}{article}{
      author={Galkowski, Jeffrey},
      author={Zworski, Maciej},
       title={Classical-quantum correspondence in lindblad evolution},
        date={2024},
     journal={arXiv preprint arXiv:2403.09345},
}

\bib{hairer2009hot}{article}{
      author={Hairer, Martin},
       title={How hot can a heat bath get?},
        date={2009},
     journal={Communications in Mathematical Physics},
      volume={292},
       pages={131\ndash 177},
}

\bib{herau2006hypocoercivity}{article}{
      author={H{\'e}rau, Fr{\'e}d{\'e}ric},
       title={Hypocoercivity and exponential time decay for the linear inhomogeneous relaxation boltzmann equation},
        date={2006},
     journal={Asymptotic Analysis},
      volume={46},
      number={3-4},
       pages={349\ndash 359},
}

\bib{hwang2005accelerating}{article}{
      author={Hwang, Chii-Ruey},
      author={Hwang-Ma, Shu-Yin},
      author={Sheu, Shuenn-Jyi},
       title={{Accelerating diffusions}},
        date={2005},
     journal={The Annals of Applied Probability},
      volume={15},
      number={2},
       pages={1433 \ndash  1444},
         url={https://doi.org/10.1214/105051605000000025},
}

\bib{herau2004isotropic}{article}{
      author={H{\'e}rau, Fr{\'e}d{\'e}ric},
      author={Nier, Francis},
       title={Isotropic hypoellipticity and trend to equilibrium for the fokker-planck equation with a high-degree potential},
        date={2004},
     journal={Archive for Rational Mechanics and Analysis},
      volume={171},
       pages={151\ndash 218},
}

\bib{hormander1967hypoelliptic}{article}{
      author={H{\"o}rmander, Lars},
       title={Hypoelliptic second order differential equations},
        date={1967},
     journal={Acta Mathematica},
      volume={119},
      number={1},
       pages={147\ndash 171},
}

\bib{kochanowski2024rapid}{article}{
      author={Kochanowski, Jan},
      author={Alhambra, Alvaro~M},
      author={Capel, Angela},
      author={Rouz{\'e}, Cambyse},
       title={Rapid thermalization of dissipative many-body dynamics of commuting hamiltonians},
        date={2024},
     journal={arXiv preprint arXiv:2404.16780},
}

\bib{kato2013perturbation}{book}{
      author={Kato, Tosio},
       title={Perturbation theory for linear operators},
   publisher={Springer Science \& Business Media},
        date={2013},
      volume={132},
}

\bib{kastoryano2016quantum}{article}{
      author={Kastoryano, Michael~J},
      author={Brandao, Fernando~GSL},
       title={Quantum gibbs samplers: The commuting case},
        date={2016},
     journal={Communications in Mathematical Physics},
      volume={344},
       pages={915\ndash 957},
}

\bib{kossakowski1977quantum}{article}{
      author={Kossakowski, Andrzej},
      author={Frigerio, Alberto},
      author={Gorini, Vittorio},
      author={Verri, Maurizio},
       title={Quantum detailed balance and kms condition},
        date={1977},
     journal={Communications in Mathematical Physics},
      volume={57},
      number={2},
       pages={97\ndash 110},
}

\bib{kastoryano2011dissipative}{article}{
      author={Kastoryano, Michael~James},
      author={Reiter, Florentin},
      author={S{\o}rensen, Anders~S{\o}ndberg},
       title={Dissipative preparation of entanglement in optical cavities},
        date={2011},
     journal={Physical review letters},
      volume={106},
      number={9},
       pages={090502},
}

\bib{kastoryano2013quantum}{article}{
      author={Kastoryano, Michael~J},
      author={Temme, Kristan},
       title={Quantum logarithmic sobolev inequalities and rapid mixing},
        date={2013},
     journal={Journal of Mathematical Physics},
      volume={54},
      number={5},
       pages={052202},
}

\bib{laracuente2022self}{article}{
      author={LaRacuente, Nicholas},
       title={Self-restricting noise and exponential decay in quantum dynamics},
        date={2022},
     journal={arXiv preprint arXiv:2203.03745},
}

\bib{Lindblad1976}{article}{
      author={Lindblad, Goran},
       title={On the generators of quantum dynamical semigroups},
        date={1976},
     journal={Commun. Math. Phys.},
      volume={48},
       pages={119\ndash 130},
}

\bib{li2020graph}{article}{
      author={Li, Haojian},
      author={Junge, Marius},
      author={LaRacuente, Nicholas},
       title={Graph h{\"o}rmander systems},
        date={2020},
     journal={arXiv preprint arXiv:2006.14578},
}

\bib{li2023interpolation}{article}{
      author={Li, Bowen},
      author={Lu, Jianfeng},
       title={Interpolation between modified logarithmic sobolev and poincare inequalities for quantum markovian dynamics},
        date={2023},
     journal={Journal of Statistical Physics},
      volume={190},
      number={10},
       pages={161},
}

\bib{levin2017markov}{book}{
      author={Levin, David~A},
      author={Peres, Yuval},
       title={Markov chains and mixing times},
   publisher={American Mathematical Soc.},
        date={2017},
      volume={107},
}

\bib{lu2022explicit}{article}{
      author={Lu, Jianfeng},
      author={Wang, Lihan},
       title={On explicit $l^2$-convergence rate estimate for piecewise deterministic markov processes in mcmc algorithms},
        date={2022},
     journal={The Annals of Applied Probability},
      volume={32},
      number={2},
       pages={1333\ndash 1361},
}

\bib{mouhot2006quantitative}{article}{
      author={Mouhot, Cl{\'e}ment},
      author={Neumann, Lukas},
       title={Quantitative perturbative study of convergence to equilibrium for collisional kinetic models in the torus},
        date={2006},
     journal={Nonlinearity},
      volume={19},
      number={4},
       pages={969},
}

\bib{majewski1996quantum}{article}{
      author={Majewski, Adam~W},
      author={Zegarlinski, Boguslaw},
       title={On quantum stochastic dynamics and noncommutative $l_p$ spaces},
        date={1996},
     journal={Letters in Mathematical Physics},
      volume={36},
      number={4},
       pages={337\ndash 349},
}

\bib{neal2004improving}{article}{
      author={Neal, Radford~M},
       title={Improving asymptotic variance of mcmc estimators: Non-reversible chains are better},
        date={2004},
     journal={arXiv preprint math},
        ISSN={0407281/},
}

\bib{norris1998markov}{book}{
      author={Norris, James~R},
       title={Markov chains},
   publisher={Cambridge university press},
        date={1998},
      number={2},
}

\bib{olkiewicz1999hypercontractivity}{article}{
      author={Olkiewicz, Robert},
      author={Zegarlinski, Boguslaw},
       title={Hypercontractivity in noncommutative $l_p$ spaces},
        date={1999},
     journal={Journal of functional analysis},
      volume={161},
      number={1},
       pages={246\ndash 285},
}

\bib{peskun1973optimum}{article}{
      author={Peskun, Peter~H},
       title={Optimum monte-carlo sampling using markov chains},
        date={1973},
     journal={Biometrika},
      volume={60},
      number={3},
       pages={607\ndash 612},
}

\bib{rey2015irreversible}{article}{
      author={Rey-Bellet, Luc},
      author={Spiliopoulos, Konstantinos},
       title={Irreversible langevin samplers and variance reduction: a large deviations approach},
        date={2015},
     journal={Nonlinearity},
      volume={28},
      number={7},
       pages={2081},
}

\bib{rey2015variance}{article}{
      author={Rey-Bellet, Luc},
      author={Spiliopoulos, Konstantinos},
       title={{Variance reduction for irreversible Langevin samplers and diffusion on graphs}},
        date={2015},
     journal={Electronic Communications in Probability},
      volume={20},
      number={none},
       pages={1 \ndash  16},
         url={https://doi.org/10.1214/ECP.v20-3855},
}

\bib{ramkumar2024mixing}{article}{
      author={Ramkumar, Akshar},
      author={Soleimanifar, Mehdi},
       title={Mixing time of quantum gibbs sampling for random sparse hamiltonians},
        date={2024},
     journal={arXiv preprint arXiv:2411.04454},
}

\bib{sinclair1992improved}{article}{
      author={Sinclair, Alistair},
       title={Improved bounds for mixing rates of markov chains and multicommodity flow},
        date={1992},
     journal={Combinatorics, probability and Computing},
      volume={1},
      number={4},
       pages={351\ndash 370},
}

\bib{temme2013lower}{article}{
      author={Temme, Kristan},
       title={Lower bounds to the spectral gap of davies generators},
        date={2013},
     journal={Journal of Mathematical Physics},
      volume={54},
      number={12},
}

\bib{temme2010chi}{article}{
      author={Temme, Kristan},
      author={Kastoryano, Michael~James},
      author={Ruskai, Mary~Beth},
      author={Wolf, Michael~Marc},
      author={Verstraete, Frank},
       title={The $\chi$ 2-divergence and mixing times of quantum markov processes},
        date={2010},
     journal={Journal of Mathematical Physics},
      volume={51},
      number={12},
       pages={122201},
}

\bib{stoltz2010free}{book}{
      author={Tony, Leli\'{e}vre},
      author={Rousset, Mathias},
      author={Stoltz, Gabriel},
       title={Free energy computations: A mathematical perspective},
   publisher={World Scientific},
        date={2010},
}

\bib{villani2009hypocoercivity}{book}{
      author={Villani, C{\'e}dric},
       title={Hypocoercivity},
   publisher={American Mathematical Society},
        date={2009},
      volume={202},
      number={950},
}

\bib{vernooij2023derivations}{article}{
      author={Vernooij, Matthijs},
      author={Wirth, Melchior},
       title={Derivations and kms-symmetric quantum markov semigroups},
        date={2023},
     journal={Communications in Mathematical Physics},
      volume={403},
      number={1},
       pages={381\ndash 416},
}

\bib{verstraete2009quantum}{article}{
      author={Verstraete, Frank},
      author={Wolf, Michael~M},
      author={Ignacio~Cirac, J},
       title={Quantum computation and quantum-state engineering driven by dissipation},
        date={2009},
     journal={Nature physics},
      volume={5},
      number={9},
       pages={633\ndash 636},
}

\bib{wolf5quantum}{article}{
      author={Wolf, Michael~M},
       title={Quantum channels \& operations: guided tour. lecture notes},
        date={2012},
}

\bib{wirth2021complete}{article}{
      author={Wirth, Melchior},
      author={Zhang, Haonan},
       title={Complete gradient estimates of quantum markov semigroups},
        date={2021},
     journal={Communications in Mathematical Physics},
      volume={387},
      number={2},
       pages={761\ndash 791},
}

\bib{wirth2021curvature}{article}{
      author={Wirth, Melchior},
      author={Zhang, Haonan},
       title={Curvature-dimension conditions for symmetric quantum markov semigroups},
        date={2022},
     journal={Annales Henri Poincar{\'e}},
       pages={1\ndash 34},
}

\bib{yueh2008explicit}{article}{
      author={Yueh, Wen-Chyuan},
      author={Cheng, Sui~Sun},
       title={Explicit eigenvalues and inverses of tridiagonal toeplitz matrices with four perturbed corners},
        date={2008},
     journal={the ANZIAM Journal},
      volume={49},
      number={3},
       pages={361\ndash 387},
}

\bib{zhang2024driven}{article}{
      author={Zhang, Yikang},
      author={Barthel, Thomas},
       title={Driven-dissipative bose-einstein condensation and the upper critical dimension},
        date={2024},
     journal={Physical Review A},
      volume={109},
      number={2},
       pages={L021301},
}

\end{biblist}
\end{bibdiv}

\end{document}